\setlist{itemsep=0pt,parsep=0pt}
\setlist[itemize]{label={\small\textbullet}}
\theoremstyle{definition} 
\newtheorem{theorem}{Theorem}[section]
\newtheorem{lemma}[theorem]{Lemma}
\newtheorem{proposition}[theorem]{Proposition}
\newtheorem{corollary}[theorem]{Corollary}
\DeclarePairedDelimiter{\abs}{\lvert}{\rvert}
\DeclarePairedDelimiter{\norm}{\lVert}{\rVert}
\DeclarePairedDelimiter{\ceil}{\lceil}{\rceil}
\DeclareMathOperator{\diam}{diam}
\newcommand{\Z}{\mathbf{Z}}
\newcommand{\R}{\mathbf{R}}
\newcommand{\level}{\ell}
\let\epsilon\varepsilon
\newcommand{\ie}{i.e.\ }
\begin{document}

\title{Routing on Heavy Path WSPD Spanners}
\author{Prosenjit Bose and Tyler Tuttle}
\date{\today}
\maketitle

\begin{abstract}
In this article, we present a construction of a spanner on a set of $n$ points in $\R^d$ that we call a heavy path WSPD spanner. The construction is parameterized by a constant $s > 2$ called the separation ratio. The size of the graph is $O(s^dn)$ and the spanning ratio is at most $1 + 2/s + 2/(s - 1)$. We also show that this graph has a hop spanning ratio of at most $2\lg n + 1$.

We present a memoryless local routing algorithm for heavy path WSPD spanners. The routing algorithm requires a vertex $v$ of the graph to store $O(\deg(v)\log n)$ bits of information, where $\deg(v)$ is the degree of $v$. The routing ratio is at most $1 + 4/s + 1/(s - 1)$ and at least $1 + 4/s$ in the worst case. The number of edges on the routing path is bounded by $2\lg n + 1$.

We then show that the heavy path WSPD spanner can be constructed in metric spaces of bounded doubling dimension. These metric spaces have been studied in computational geometry as a generalization of Euclidean space. We show that, in a metric space with doubling dimension $\lambda$, the heavy path WSPD spanner has size $O(s^\lambda n)$ where $s$ is the separation ratio. The spanning ratio and hop spanning ratio are the same as in the Euclidean case.

Finally, we show that the local routing algorithm works in the bounded doubling dimension case. The vertices require the same amount of storage, but the routing ratio becomes at most $1 + (2 + \frac{\tau}{\tau-1})/s + 1/(s - 1)$ in the worst case, where $\tau \ge 11$ is a constant related to the doubling dimension.
\end{abstract}

\tableofcontents

\section{Introduction}
\label{ch:introduction}

Routing in a graph refers to the problem of sending a message from one vertex to another. The vertex that sends the message is called the source and the vertex that is meant to receive the message is called the destination. At a vertex, a routing algorithm must decide where to forward the message in such a way that the message will eventually reach the destination. By repeatedly making forwarding decisions, one constructs a path from the source to the destination.

Centralized algorithms for computing paths in a graph are well-studied. For example, Dijkstra's algorithm \cite{Dijkstra1959} can compute the shortest path between any two vertices in a weighted graph. However, Dijkstra's algorithm requires knowledge of the entire graph. The problem is more challenging if we want distributed algorithms that must make forwarding decisions at a vertex based only on the message itself and information stored at the vertex.

A routing algorithm on a graph $G$ can be modelled as a function. The function takes as input the current vertex and the destination vertex, as well as information about the neighbourhood of the current vertex. It may take extra information that is stored either in the message itself or at the vertex. The information stored in the message is called the message header. The routing algorithm computes where the message should be forwarded to, and potentially modifies the header. Formal definitions for all of these concepts appear in Section~\ref{ch:preliminaries}.

The information that we store at a vertex is called a routing table. It is desirable for the routing tables to be as small as possible. At one extreme, the routing tables are large enough to store the shortest path tree at each vertex. At the other extreme, we only store the vertices directly adjacent to the current vertex.

If, for any pair of vertices in the graph $G$, the routing algorithm succeeds in finding a path between them, we say that the algorithm guarantees delivery on $G$. In addition to guaranteeing delivery and having small routing tables, there are other measures of the quality of a routing algorithm. Similar to the routing tables, it is desirable to make the message header as small as possible. In the best case, there is no need for a message header at all. Such an algorithm is called memoryless.

We may also be interested in bounding the length of the path produced by the routing algorithm. An algorithm with such a bound is called \textit{competitive}. In addition to bounding the length of the path, we can also bound the number of edges on the path. This is called the \textit{hop distance}.

These goals are naturally in conflict with each other. To achieve a low routing ratio on a large class of graphs, it may be necessary to store a lot of information in the routing tables or message header. A good routing algorithm must consider the trade-offs between these different parameters.

\subsection{Background}
\label{sec:background}

A geometric graph is a graph whose vertices are points in $\R^d$. The edges of a geometric graph are line segments joining their endpoints. The edges are weighted according to their length. A geometric routing algorithm is a routing algorithm that uses this position information. The routing table of a vertex can store the coordinates of its neighbours and use that information to make forwarding decisions.

Several geometric routing algorithms have been proposed. One of the simplest examples is greedy routing. Given the destination $q$ and the current vertex $u$, greedy routing forwards the message to the neighbour $v$ of $u$ that minimizes the distance $\abs{vq}$, where $\abs{vq}$ is the Euclidean distance between the points $v$ and $q$. Another example is compass routing, where it is the angle formed by the segments $uq$ and $uv$ that is minimized. For a survey of geometric routing algorithms, see \cite{Stojmenovic2002}.

A serious drawback of many routing algorithms such as these is that they do not guarantee delivery, even for very simple classes of graphs. The greedy and compass routing algorithms fail on certain triangulations, see for example \cite{Bose1999}. The message can get stuck in a loop and never reach the destination.

If a routing algorithm makes a forwarding decision based only on the positions of its current vertex, its neighbours, the destination, and potentially the source, then the routing algorithm is called memoryless. In other words, a memoryless routing algorithm does not make use of a message header. It can be shown that no deterministic memoryless routing algorithm can succeed on all graphs \cite{Bose2000}. In fact, the result is even stronger than that: no deterministic memoryless routing algorithm can guarantee delivery even when restricted to convex subdivisions.

If our goal is to design memoryless routing algorithms, then we must restrict the class of graphs under consideration. The unit disk graph is a popular model of an ad hoc wireless network, and so it is a natural arena for studying routing algorithms. In the unit disk graph, there is an edge between two vertices if and only if the distance between them is at most some radius $R$. Several routing algorithms have been proposed that guarantee delivery on the unit disk graph \cite{Bose2001,Kaplan2018}.

A \emph{competitive} routing algorithm will route a message along a path that is at most a constant times the length of the shortest possible path. The constant is called the routing ratio of the algorithm. The length of a path is defined to be the sum of the lengths of all the edges on the path. Instead of bounding the length of the path, we may be interested in bounding the number of edges on the path. We refer to this as the hop distance. This is equivalent to setting the weight of each edge to $1$. In this article a routing algorithm will be presented that constructs a path with both low length and a small number of edges.

We want to find classes of graphs that support competitive local routing. One potential class of graphs to consider are different geometric spanner graphs \cite{Narasimhan2007}. Let $G$ be a graph such that for any two vertices $u$ and $v$ we have $d_G(u, v) \le t\cdot\abs{uv}$ for some constant $t > 0$, where $d_G(u, v)$ is the length of the shortest path between $u$ and $v$. We call $G$ a $t$-spanner, and the smallest $t$ for which $G$ is a $t$-spanner the spanning ratio of $G$. Again, a more formal definition appears in Section~\ref{ch:preliminaries}. Note that this is a property of a graph, whereas the routing ratio is a property of a routing algorithm.

Many well-known geometric graphs happen to be $t$-spanners. For example, the Delaunay triangulation is known to be a $1.998$-spanner for any set of points in the plane \cite{Dobkin1990,Cui2011}. Variations on the Delaunay triangulation are also known to be spanners \cite{PaulChew1989}.

Other constructions have been devised specifically to build $(1 + \epsilon)$-spanners, for any $\epsilon > 0$. These constructions are parameterized by the desired spanning ratio, which can be made arbitrarily close to $1$. This usually requires more edges to be added to the graph, however.

The greedy spanner \cite{Chandra1992} is constructed by considering each pair of points, sorted by distance in ascending order. If there is not already a short path between the two points (\ie a path with length $(1 + \epsilon)$ times the distance between the two points), the edge is added.

Another $(1 + \epsilon)$-spanner construction is the Theta graph, independently introduced by Clarkson \cite{Clarkson1987} and Keil and Gutwin \cite{Keil1992}. To construct a Theta graph, the space around each point is partitioned into $k$ cones of equal angle. An edge is added between a point and its nearest neighbour in each cone, where the nearest neighbour is determined by projecting the points onto the bisector of the cone and taking the closest point.

Another construction technique for $(1 + \epsilon)$-spanners comes from the well-separated pair decomposition \cite{Callahan1995}. The well-separated pair decomposition (WSPD) is a data structure that stores information about distances between points in a concise manner. Intuitively, a WSPD is a partition of the edges of the complete graph on a point set, so that all the edges in a set of the partition are similar. By similar, we mean the edges have nearly the same length and orientation. The degree of similarity is controlled by a parameter $s > 2$, called the separation ratio.

A spanner can be constructed from a WSPD by adding a single edge from each set in the partition to the graph. No matter which edges are chosen, the resulting graph will be a spanner. In Section~\ref{ch:preliminaries} we will describe how to choose the edges in such a way that efficient competitive local routing is possible.

Local routing on some of these spanners has been studied in the past. Competitive local routing algorithms exist for the Delaunay triangulation \cite{Bonichon2017,Bonichon2018,Bose1999} and the Theta graph (for $k = 4$ and $k > 6$ cones) \cite{Bose2019,Bose2016}. No competitive local routing algorithm is known for the greedy spanner.

\subsection{Related work}
\label{sec:related-work}

This article considers the local routing problem on $(1 + \epsilon)$-spanners constructed from a WSPD. A few routing algorithms for this problem have been previously proposed \cite{Bose2017,Baharifard2018}. All achieve a $1 + O(1/s)$ routing ratio, where $s$ is the parameter used in the WSPD construction.

The first algorithms for routing on WSPD spanners appeared in Bose et al.\ \cite{Bose2017}. They propose two algorithms, a $2$-local and a $1$-local routing algorithm. A routing algorithm is $k$-local if the routing table of a vertex $u$ has information about not just the neighbours of $u$, but all vertices $v$ where there exists a path between $u$ and $v$ that contains at most $k$ edges. In this article, a local routing algorithm specifically refers to a $1$-local algorithm.

Neither of the algorithms in Bose et al.\ use a modifiable header, and both achieve a competitive routing ratio of $1 + O(1/s)$. A disadvantage of these algorithms is that they require the WSPD spanner to be constructed specifically to support local routing. They also require points to store coordinates in the form of bounding boxes, which makes them hard to generalize to higher dimensions and metric spaces.

The spanner construction and routing algorithm in this article are similar to those in Bose et al., but we do not require any coordinates to be stored in the routing tables. Our construction of a WSPD spanner also improves on theirs by achieving a bound of $2 \lg n + 1$ on the number of edges in the spanning path obtained\footnote{throughout this article, $\lg$ is used to mean $\log_2$}.

The other competitive local routing algorithm for WSPD spanners is from Baharifard et al.\ \cite{Baharifard2018}. Their algorithm requires a modifiable header, but it guarantees delivery on any spanner constructed from a WSPD.

All previous work has only considered sets of points in the plane. In this article we consider routing on WSPD spanners in any number of dimensions. We then further generalize the routing to metric spaces of bounded doubling dimension.

Table~\ref{table:routing-algorithms} summarizes the quantitative properties of the different routing algorithms, including the ones that are presented in this article. All of them route on a $(1 + O(1/s))$-spanner constructed from a WSPD with separation ratio $s$.

\begin{table}
\centering
\begin{tabular}{lccc}
\toprule
Algorithm & Table size at $v$ & Header size \\
\midrule
Bose et al. ($2$-local) & $O(s^2nB)$ & n/a \\
Bose et al. ($1$-local) & $O(\deg(v)B)$ & n/a \\
Baharifard et al. & $O(s^2\log\Delta)$ & $O(\log\Delta)$ \\
This article (Euclidean space) & $O(\deg(v) \log n)$ & n/a \\
This article (Doubling space) & $O(\deg(v) \log n)$ & n/a \\
\bottomrule
\end{tabular}
\caption{Comparison of routing algorithms for WSPD spanners. $s$ is the spanning ratio, $n$ is the number of points, $B$ is the number of bits needed to store an axis-aligned rectangle, and $\Delta$ is the ratio of the largest distance between two points in the input to the smallest distance.}
\label{table:routing-algorithms}
\end{table}

The problem of online routing has also been studied for metric spaces of bounded doubling dimension. See Section~\ref{sec:metric-spaces} for the relevant definitions. Previous work has focused on routing algorithms for graphs whose shortest path metric has bounded doubling dimension. Our work is first constructing a spanner in a metric space of bounded doubling dimension, and then routing on that graph. A result of Talwar \cite{Talwar2004} implies that our spanner will have bounded doubling dimension.

\begin{proposition}[{\plaincite[Proposition~3]{Talwar2004}}]
\label{prop:talwar}
Let $(X, d)$ and $(Y, \delta)$ be metric spaces such that the doubling dimension of $X$ is $\lambda$. Suppose there is a bijection $f : X \to Y$ satisfying $d(x_1, x_2) \le \delta(f(x_1), f(x_2)) \le td(x_1, x_2)$. Then the doubling dimension of $Y$ is at most $2\lambda \ceil{\lg(4t)}$.
\end{proposition}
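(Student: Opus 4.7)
The plan is to cover an arbitrary ball in $Y$ by pulling it back to $X$ through $f^{-1}$, applying the doubling property of $X$ iteratively to produce a cover by sufficiently small balls, and then pushing the resulting cover forward to $Y$ through $f$. The two bi-Lipschitz inequalities are exactly what is needed to control the distortion at each of these two transfer steps.

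First, fix a ball $B_\delta(y_0, r) \subseteq Y$ and let $x_0 = f^{-1}(y_0)$. For any $x$ with $f(x) \in B_\delta(y_0, r)$, the left-hand inequality gives $d(x, x_0) \le \delta(f(x), y_0) \le r$, so $f^{-1}(B_\delta(y_0, r)) \subseteq B_d(x_0, r)$. This reduces the problem to covering $B_d(x_0, r)$ in $X$.

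Next, I would iterate the doubling property of $X$: a single application covers $B_d(x_0, r)$ by $2^\lambda$ balls of radius $r/2$, and after $k$ applications we have a cover by $2^{\lambda k}$ balls $B_d(x_i, r/2^k)$. Pushing each of these forward through $f$ and invoking the right-hand inequality gives $f(B_d(x_i, r/2^k)) \subseteq B_\delta(f(x_i), tr/2^k)$. Choosing $k$ just large enough that $tr/2^k \le r/2$, namely $k = \ceil{\lg(2t)}$, produces a cover of $B_\delta(y_0, r)$ by at most $2^{\lambda k}$ balls of radius $r/2$ in $Y$, each centered at a point $f(x_i) \in Y$, which is what the doubling-dimension bound requires.

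I do not foresee a substantial obstacle; the argument is essentially bookkeeping of the distortion introduced by pushing a cover forward through a bi-Lipschitz map. The one point deserving care is that the covering balls in $Y$ must be centered at points of $Y$, which is why we take them to be the $f$-images of the $X$-centers. Matching the exact constant $2\lambda \ceil{\lg(4t)}$ in the stated proposition, as opposed to the tighter $\lambda \ceil{\lg(2t)}$ that the above pipeline naturally yields, is only a matter of choosing $k$ slightly more conservatively so that the same bound also absorbs the rounding in $\ceil{\cdot}$ and the worst-case placement of centers.
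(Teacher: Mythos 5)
The paper does not supply a proof of this proposition; it is quoted verbatim from Talwar, so there is no internal argument to compare against. Your proof is correct. The left inequality gives $f^{-1}(B_\delta(y_0, r)) \subseteq B_d(x_0, r)$ where $x_0 = f^{-1}(y_0)$; iterating the doubling property of $X$ exactly $k$ times covers $B_d(x_0, r)$ by $2^{\lambda k}$ balls of radius $r/2^k$; the right inequality pushes each of these forward to a ball in $Y$ of radius $tr/2^k$ centered at the $f$-image of its centre; and $k = \ceil{\lg(2t)}$ forces $tr/2^k \le r/2$, producing a cover of $B_\delta(y_0, r)$ by $2^{\lambda k}$ balls of radius $r/2$ centred at points of $Y$. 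Under the definition of doubling dimension used in this paper (covering balls may have arbitrary centres in the space), this yields the bound $\lambda\ceil{\lg(2t)}$, which is \emph{sharper} than the stated $2\lambda\ceil{\lg(4t)}$ and therefore certainly implies the proposition. The discrepancy is not an error on your part: the looser constant in the cited statement is most plausibly an artifact of a different normalization of the doubling constant (for instance, requiring covering balls to have their centres inside the ball being covered costs roughly one extra doubling step, accounting for the additional factor of $2$), or simply of the constant never having been optimized, since only the $O(\lambda\log t)$ order is used downstream.
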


\begin{corollary}
Let $G$ be a $t$-spanner constructed in a metric space with doubling dimension $\lambda$. Then the doubling dimension of the shortest path metric on $G$ is at most $4\lambda \ceil{\lg(4t)}$.
\end{corollary}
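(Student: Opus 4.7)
The plan is a direct application of Proposition~\ref{prop:talwar}, using the identity map on the vertex set of $G$ as the required bijection. Let $V \subseteq X$ denote the vertex set of the spanner. Because a subspace of a metric space cannot have larger doubling dimension than the whole space, $(V, d)$ has doubling dimension at most $\lambda$, so I can work with $(V, d)$ as the source space without loss.

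Next I would verify the two-sided inequality required by Proposition~\ref{prop:talwar}. For any $u, v \in V$, the lower bound $d(u, v) \le d_G(u, v)$ holds because $G$ is embedded in $(X, d)$ and any path in $G$ has total length at least the direct distance between its endpoints by repeated application of the triangle inequality. The upper bound $d_G(u, v) \le t \cdot d(u, v)$ is precisely the definition of the $t$-spanner property. Taking $f : V \to V$ to be the identity therefore satisfies the hypothesis of Proposition~\ref{prop:talwar} with source metric $d$ (restricted to $V$) and target metric $d_G$.

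Plugging into Proposition~\ref{prop:talwar} then gives that $(V, d_G)$ has doubling dimension at most $2\lambda \ceil{\lg(4t)}$, which is in fact a factor of two stronger than what is claimed; the corollary follows \emph{a fortiori}. There is really no obstacle here — the argument is a one-step invocation of the previously stated result — and the only thing to be careful about is that the inequalities in the spanner definition align with the direction required by the hypothesis of Proposition~\ref{prop:talwar}, which they do.
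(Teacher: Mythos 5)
Your overall strategy (identity map plus Proposition~\ref{prop:talwar}) is the intended one, but there is a genuine gap in the first step: the claim that ``a subspace of a metric space cannot have larger doubling dimension than the whole space'' is false. If $X$ has doubling constant $K$, a ball $B_Y(x,r)$ in a subspace $Y \subseteq X$ is covered by $K$ ambient balls of radius $r/2$, but their centres need not lie in $Y$, so they are not balls of the subspace. The standard repair is to cover $B_X(x,r)$ by $K^2$ ambient balls of radius $r/4$ and recentre each nonempty piece at a point of $Y$, which only shows that $Y$ has doubling constant at most $K^2$, i.e.\ doubling dimension at most $2\lambda$. The loss is real: already $Y = \{-1, 0, 1\} \subset \R$ needs three radius-$1/2$ balls centred in $Y$ to cover its radius-$1$ ball about $0$, while $\R$ itself has doubling constant $2$. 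So you cannot take the source space in Proposition~\ref{prop:talwar} to have dimension $\lambda$; you may only take it to have dimension $2\lambda$.

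With that correction the rest of your argument is fine: the identity on the vertex set $V$ satisfies $d(u,v) \le d_G(u,v) \le t\,d(u,v)$ (triangle inequality for the lower bound, the $t$-spanner property for the upper bound), and Proposition~\ref{prop:talwar} applied to $(V, d)$ of doubling dimension at most $2\lambda$ yields a bound of $2(2\lambda)\ceil{\lg(4t)} = 4\lambda\ceil{\lg(4t)}$ for $(V, d_G)$. This is exactly the bound stated in the corollary; your claimed ``a fortiori'' improvement to $2\lambda\ceil{\lg(4t)}$ rests entirely on the false subspace claim and should be dropped. The extra factor of $2$ in the corollary is not slack --- it is precisely the price of restricting the ambient metric to the vertex set.
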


This implies that the shortest path metric on the heavy-path WSPD spanner has bounded doubling dimension, and so previous work applies to our graph. However, restricting attention to graphs that are constructed in a certain way as we have allows arguably simpler algorithms. In particular, our algorithm does not require a modifiable header.

\begin{table}
\centering
\begin{tabular}{lccc}
\toprule
Paper & Table size & Header size \\
\midrule
Talwar \cite{Talwar2004} & $O\big(\lambda(6/\epsilon\lambda)^\lambda\log^{\lambda + 2}(\Delta)\big)$ & $O(\lambda^2\log\lambda\log\Delta)$ \\
Chan et al. \cite{Chan2016} & $(\lambda/\epsilon)^{O(\lambda)}\log\Delta\log\delta$ & $O(\lambda \log^2 \Delta)$ \\
Abraham et al.\ \cite{Abraham2006} & $(1/\epsilon)^{O(\lambda)}\log^3 n$ & $2^{O(\lambda)}\log^3 n$ \\
This article (Doubling space) & $O(\deg(v) \log n)$ & n/a \\
\bottomrule
\end{tabular}
\caption{Comparison of routing algorithms for doubling spaces. The routing ratio is $1 + \epsilon$, $\lambda$ is the doubling dimension of the point set, and $\delta$ is the diameter of the point set.}
\label{table:routing-algorithms-doubling-space}
\end{table}

\section{Preliminaries}
\label{ch:preliminaries}

In this section we show how the spanner is constructed and analyze some of its basic properties. We also describe the various data structures that are needed for our local routing algorithm.

\subsection{Basic definitions}

A \emph{graph} $G = (V, E)$ consists of a set $V$ of vertices and a set $E$ of edges. An edge $e$ connects two vertices, called the endpoints of $e$. A \emph{geometric graph} has a set of points in $\R^d$ for vertices and a set of line segments connecting those points for edges. In a geometric graph, an edge is weighted by the Euclidean distance between its endpoints. The complete geometric graph on a point set $S$ has an edge between every pair of points in $S$.

Let $G = (V, E)$ be a graph, and let $H = (V, E')$ be a subgraph of $G$. We say that $H$ is a \emph{$t$-spanner} of $G$ if $d_H(u, v) \le t \cdot d_G(u, v)$ for all $u, v$ in $V$, where $d_G(u, v)$ denotes the length of the shortest path between $u$ and $v$ in $G$. The smallest constant $t$ for which $H$ is a $t$-spanner of $G$ is called the \emph{spanning ratio}. If we say that $H$ is a spanner, we mean that it is a $t$-spanner for some unspecified constant $t = O(1)$.

In the special case where $G$ is the complete geometric graph on some point set $S$, then a $t$-spanner of $G$ is also called a \emph{geometric spanner}. If $H$ is a geometric $t$-spanner for some point set $S$, then $d_H(u, v) \le t \cdot \abs{uv}$ for all points $u$ and $v$ in $S$.

The \emph{hop distance} between a pair of vertices $u$ and $v$ in a graph $G$ is the smallest number of edges on a path between $u$ and $v$. Let $H$ be a subgraph of $G$. If, for any pair $u$ and $v$ of vertices in $H$, the hop distance from $u$ to $v$ is at most $k$ times the hop distance from $u$ to $v$ in $G$, then we say that $H$ is a \emph{$k$-hop spanner} of $G$. This is equivalent to being a $k$-spanner in the special case where all edges have weight $1$. A geometric graph $H$ is a $k$-hop spanner if the hop distance between every pair of vertices is at most $k$.

A \emph{local routing algorithm} takes as input the current vertex $u$, the destination vertex $q$, and some information stored either at the vertex $u$ or in the message. The information stored at $u$ is called the routing table. The information that is stored with the message is called the header. Instead of storing the coordinates of the points directly, we will label the points of the graph and store the labels instead. Formally, a local routing algorithm on a graph $G = (V, E)$ is modelled as a function $f$ that takes as input four parameters in the form of bitstrings corresponding to: the label of the current vertex $u$, the label of the destination vertex $q$, the routing table $R(u)$ stored at $u$, and the message header $h$. The function computes the label of a neighbour of $u$ in $G$ that the message is forwarded to, and the new header. The header is initially empty. If the header is never modified (\ie it is always empty), then the routing algorithm is \emph{memoryless}.

The \emph{$k$-neighbourhood} of a vertex $u$ is the set of vertices $v$ at hop distance at most $k$ from $u$. If the routing table stored at a vertex $u$ contains information about the $k$-neighbourhood of $u$, then we say that the routing algorithm is \emph{$k$-local}. If $k = 1$ we say the algorithm is \emph{local}.

Let $G = (V, E)$ be a graph and let $p$ and $q$ be vertices of $G$. Starting from $p$, repeated application of the routing function $f$ will construct a path from $p$ to $q$ in $G$. Formally, let $p_1 = p$, and let $p_{i+1} = f(p_i, q, R(p_i), h_i)$, where $R(p_i)$ is the routing table of vertex $p_i$ and $h_i$ is the header computed during the previous forwarding decision. If $p_k = q$ for some integer $k$, then the routing algorithm successfully routes from $p$ to $q$. If it successfully routes for every pair of vertices in $G$, then we say that it \emph{guarantees delivery} on $G$.

A routing algorithm is \emph{competitive} on a graph if $d_R(p, q) \le t \cdot d_G(p, q)$ for all $p$ and $q$ in $G$, where $d_R(p, q)$ is the length of the path from $p$ to $q$ found by the routing algorithm. If $G$ is a geometric spanner with spanning ratio $t'$, then we have $d_R(p, q) \le tt'\abs{pq}$, so the algorithm is competitive with respect to the Euclidean distance between $p$ and $q$, not just the distance in $G$. In this article, the routing ratio of an algorithm is defined to be the smallest constant $t$ with $d_R(p, q) \le t\abs{pq}$ for all $p, q$ in $V$. With this definition, note that the routing ratio is an upper bound on the spanning ratio.

\subsection{Compressed quadtrees}
\label{sec:quadtrees}

A quadtree is a tree data structure for storing spatial data. The tree is constructed by recursively subdividing space into smaller regions. The leaves of the tree represent points, and the internal nodes represent regions of space.

Let $S$ be a set of $n$ points in $\R^d$. If $n = 1$, then the quadtree for $S$ is a single node that stores the lone point of $S$. If $n > 1$, to construct a quadtree for $S$ we need a hypercube that contains $S$. Let $C$ be a hypercube that contains $S$. We can assume that this is given to us, but if not it is simple to construct such a hypercube in time $O(dn)$.

Subdivide $C$ into $2^d$ smaller hypercubes $C_i, \dots C_{2^d}$ by bisecting it along each dimension. For each $C_i$ that contains at least two points, recursively construct a quadtree on the points in $C_i$. The root of the quadtree stores the hypercube $C$. Each of the recursively constructed quadtrees is a child of the root.

The construction as described yields a tree with $n$ leaves. Each internal node has at least one child, and at most $2^d$ children. The fact that a node can have only a single child means the height of the tree can be unbounded.

\begin{theorem}[{\plaincite[Lemma~14.1]{DeBerg2008}}]
Let $S$ be a set of points in $\R^d$. The height of a quadtree for $S$ is at most $\lg(s/m) + \frac{1}{2}\lg d + 1$, where $s$ is the side length of the initial hypercube used to construct the tree and $m$ is the minimum distance between two points of $S$.
\end{theorem}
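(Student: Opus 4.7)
The plan is to track how the side length of a node's hypercube shrinks as one descends in the quadtree, and then use the minimum-distance condition to bound the depth of any internal node.

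First I would observe, by induction on depth, that a node at depth $k$ (with the root at depth $0$) corresponds to a hypercube of side length $s/2^k$. This is immediate from the construction: the root has side length $s$, and each recursive call bisects along every coordinate, halving the side length.

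Next I would use the defining property of internal nodes: by the construction, a node is internal only if its hypercube contains at least two points of $S$. A hypercube of side length $\ell$ in $\R^d$ has diameter $\ell\sqrt{d}$, so any two points inside it are at Euclidean distance at most $\ell\sqrt{d}$. Since the two points come from $S$, their distance is at least $m$, giving
\[
\frac{s\sqrt{d}}{2^k} \ge m,
\]
and hence $k \le \lg(s/m) + \tfrac{1}{2}\lg d$. So no internal node lies below depth $\lg(s/m) + \tfrac{1}{2}\lg d$.

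Finally, every leaf is either the root (in the trivial $n=1$ case, where the bound holds vacuously) or a child of an internal node, so the maximum depth of any leaf, which is the height of the tree, is at most $\lg(s/m) + \tfrac{1}{2}\lg d + 1$. There is no real obstacle here; the only point that needs care is the convention for measuring height (depth of the deepest node versus number of levels) and making sure the ``$+1$'' is charged to the step from the deepest internal node down to its leaf children.
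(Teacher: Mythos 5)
Your proof is correct and is essentially the standard argument from the cited source (De~Berg et al., Lemma~14.1): side lengths halve with depth, an internal node contains two points of $S$ whose pairwise distance is at least $m$ but at most the cell's diameter $\sqrt{d}\,s/2^k$, and the leaves sit one level below the deepest internal node. The paper itself states this theorem by citation rather than reproving it, so there is no in-paper proof to compare against; your handling of the $n=1$ case and of the ``$+1$'' for the final leaf level is the right bookkeeping.
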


This drawback means that the time needed to construct a quadtree can be arbitrarily large. Fortunately, there is a solution. If a quadtree has a long chain of internal nodes with only one child, then compress them all into a single edge. The resulting structure is called a compressed quadtree, and the height is now linear with respect to the number of points in the worst case.

In a (uncompressed) quadtree, each node $a$ is associated with a hypercube $C(a)$. If $a$ is at level $i$ in the tree, then the side length of $C(a)$ is $2^{-i}L$, where $L$ is the side length of hypercube associated to the root.

In a compressed quadtree, a node no longer corresponds to just one hypercube. Instead, each node $a$ corresponds to two hypercubes. A node in a compressed quadtree might correspond to an entire path in the uncompressed quadtree. We store the hypercube $C_L(a)$ that corresponds to the shallowest node on that path, and the hypercube $C_S(a)$ that corresponds to the deepest node on that path. If $p(a)$ denotes the parent of $a$, then $C_L(a)$ is obtained by splitting $C_S(p(a))$ along each dimension. Let $S(a)$ denote the set of points stored in the leaves of the subtree rooted at $a$. For any compressed quadtree node, we have $S(a) \subset C_S(a) \subseteq C_L(a)$. The two hypercubes $C_S(a)$ and $C_L(a)$ can be equal if the node $a$ does not correspond to a compressed chain of nodes in the quadtree.

Instead of constructing a quadtree (which could take an unbounded amount of time) and then compressing it afterwards, we can directly construct a compressed quadtree. There are algorithms for constructing these trees in $O(dn \log n)$ time \cite{Aluru2004}.

\begin{theorem}[{\plaincite[Section~19.2.5]{Aluru2004}}]
Let $S$ be a set of points in $\R^d$. A compressed quadtree for $S$ can be constructed in $O(dn \log n)$ time.
\end{theorem}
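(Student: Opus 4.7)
The plan is to follow the standard Morton order (Z-order) approach. First I would compute an enclosing axis-aligned hypercube for $S$ in $O(dn)$ time by taking the minimum and maximum coordinate in each dimension; after a translation and uniform scaling we may assume $S \subseteq [0,1)^d$, so each coordinate admits a fixed-precision binary expansion.

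Next I would define the Morton order on points in $[0,1)^d$ by interleaving the bits of their coordinates. The key observation is that two points $p, q$ have the same Morton prefix of length $di$ if and only if they lie in the same cell of the uncompressed quadtree at level $i$; equivalently, the lowest common ancestor in the uncompressed quadtree is determined by the position of the highest bit at which the interleaved representations of $p$ and $q$ first disagree. A comparator for two points therefore runs in $O(d)$ time: compare coordinates to find the dimension with the highest-order differing bit, then decide the Morton order from that bit. Sorting the points with this comparator takes $O(dn\log n)$ time, which will dominate the entire running time.

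Given the points sorted in Morton order $p_1, \dots, p_n$, I would construct the compressed quadtree by a left-to-right sweep. Maintain the rightmost path of the partially built tree on a stack. For each new point $p_{i+1}$, compute the level $\level$ of the lowest common ancestor of $p_i$ and $p_{i+1}$ in the uncompressed quadtree using the $O(d)$-time bit-comparison above, then pop the stack until the top node has depth $\le \level$, attaching a new leaf for $p_{i+1}$ together with whatever internal node is needed at level $\level$. Because we never instantiate the skipped single-child chains and instead store only the pair $(C_L, C_S)$ for each compressed node, each insertion does $O(d)$ work plus amortized $O(1)$ pops, for $O(dn)$ total work in the sweep. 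Correctness follows from the Morton-order/quadtree correspondence: the sorted order visits the leaves of the final tree in depth-first order, so the stack always holds the right spine.

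The main obstacle I anticipate is verifying that this bottom-up sweep really yields the \emph{compressed} tree and not a reconstructed uncompressed one, \ie that the two hypercubes $C_L(a)$ and $C_S(a)$ attached to each compressed node $a$ can be recovered in $O(d)$ time from the bit index at which the LCA is detected, and that assigning them this way is consistent with $C_L(a)$ being the immediate split-child of $C_S(\parent(a))$. This amounts to checking that the highest differing bit across all coordinates determines $C_S(a)$ while the depth inherited from $\parent(a)$ determines $C_L(a)$, so both hypercubes are read off directly from the Morton comparison and the stack state. The remaining steps are routine, and summing the costs gives the claimed $O(dn\log n)$ bound.
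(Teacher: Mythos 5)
The paper itself gives no proof of this statement: it is imported verbatim as a citation to Aluru's survey, so your proposal can only be judged on its own terms. Your outline is the standard Morton-order construction (enclosing hypercube, sort by interleaved bits, then a stack-based sweep over the right spine, exactly as in Chan's and Har-Peled's treatments), and the sweep itself is fine: the leaves of the compressed quadtree in depth-first order are the points in Morton order, each new point causes amortized $O(1)$ pops, and the $C_L/C_S$ bookkeeping you flag is indeed only an $O(d)$-per-insertion update (note that splicing a new internal node $w$ between spine nodes $u$ and $v$ forces you to recompute $C_L(v)$ from $C_S(w)$, but that is still constant work per dimension).

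The genuine gap is in the step you treat as routine: ``compare coordinates to find the dimension with the highest-order differing bit'' in $O(d)$ time. For real (or arbitrarily scaled) coordinates this is not an operation available on the standard real RAM: finding the most significant bit at which two numbers disagree, equivalently computing the level of their lowest common quadtree cell, amounts to evaluating something like $\floor{\lg \abs{x - y}}$ or a bitwise XOR followed by an msb computation, and none of these can be simulated in $O(1)$ time using only arithmetic and comparisons. Every known $O(n \log n)$-time compressed quadtree construction, including the one in the cited source, makes an explicit model assumption of this kind (constant-time floor/$\log_2$, or constant-time XOR and msb on coordinate words), and without it both your comparator and your per-insertion LCA-level computation have no implementation, so the claimed $O(dn\log n)$ bound does not follow. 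Your appeal to a ``fixed-precision binary expansion'' after rescaling quietly smuggles this assumption in; to make the proof complete you must state the extended model (or restrict to bounded-precision integer coordinates) and then justify that the msb-of-XOR comparison is a constant-time primitive per dimension. With that assumption made explicit, the rest of your argument goes through.
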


See \cite{Aluru2004} for a more comprehensive overview of different quadtree variants. For our application, we will need the following property of compressed quadtrees.

\begin{lemma}
\label{lem:quadtree-cell}
Let $T$ be a compressed quadtree, and let $a$ be a non-root node of $T$. The node $a$ corresponds to two hypercubes, $C_L(a)$ and $C_S(a)$. Let $\ell(a)$ be the diagonal length of $C_S(a)$. Note that this is an upper bound on the diameter of the points stored in the subtree of $a$. We have $\ell(a) \le (1/2)\ell(p(a))$, where $p(a)$ is the parent of $a$ in $T$.
\end{lemma}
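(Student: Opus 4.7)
The plan is to chase the definitions of $C_L$ and $C_S$ and observe that the diagonal length of a hypercube is a fixed multiple (namely $\sqrt{d}$) of its side length, so it suffices to compare side lengths.

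First, I would let $s_L(a)$ and $s_S(a)$ denote the side lengths of $C_L(a)$ and $C_S(a)$, respectively, so that $\ell(a) = \sqrt{d}\, s_S(a)$ and $\ell(p(a)) = \sqrt{d}\, s_S(p(a))$. Proving the lemma then reduces to showing $s_S(a) \le \tfrac{1}{2} s_S(p(a))$.

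Next, I would invoke the two structural facts about $C_L(a)$ and $C_S(a)$ given just before the lemma: (i) $C_L(a)$ is obtained by bisecting $C_S(p(a))$ along every dimension, so $s_L(a) = \tfrac{1}{2} s_S(p(a))$; and (ii) $C_S(a) \subseteq C_L(a)$, and since both are axis-aligned hypercubes arising from the (compressed) quadtree subdivision, $s_S(a) \le s_L(a)$. Chaining these gives
\[
s_S(a) \;\le\; s_L(a) \;=\; \tfrac{1}{2}\, s_S(p(a)),
\]
and multiplying through by $\sqrt{d}$ yields $\ell(a) \le \tfrac{1}{2}\ell(p(a))$, as desired.

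There is no real obstacle here; the only point that requires a sentence of justification is step (ii), namely that a smaller hypercube in the quadtree chain below $C_L(a)$ has side length at most $s_L(a)$. This follows because $C_S(a)$ is obtained from $C_L(a)$ by a (possibly empty) sequence of bisections along each dimension, each of which halves the side length. In particular, in the uncompressed case $C_S(a) = C_L(a)$ and we already have strict factor-of-$2$ decay; in the compressed case, each additional bisection only shrinks $s_S(a)$ further, so the bound is preserved.
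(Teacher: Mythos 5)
Your proof is correct and follows essentially the same route as the paper: both rest on the two observations that $C_L(a)$ is produced by bisecting $C_S(p(a))$ (halving the diagonal), and that $C_S(a) \subseteq C_L(a)$ gives $\ell(a) \le \ell_{C_L(a)}$. Your detour through side lengths and the factor $\sqrt{d}$ is harmless but unnecessary; the paper reasons directly on diagonal lengths, which is marginally cleaner.
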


\begin{proof}
By definition, the diagonal length of $C_L(a)$ is equal to $(1/2)\ell(p(a))$. Since $C_S(a) \subset C_L(a)$, the diagonal length of $C_L(a)$ is an upper bound on $\ell(a)$. Therefore $\ell(a) \le (1/2)\ell(p(a))$.
\end{proof}

\subsection{The well-separated pair decomposition}
\label{sec:wspd}

In a set of $n$ points, there are $\binom{n}{2}$ ways to select a pair of points. The well-separated pair decomposition \cite{Callahan1995} is a data structure that can approximately represent those $\Theta(n^2)$ distances in linear space. The idea behind this is that if two clusters of points are sufficiently far apart, then all of the inter-cluster distances are approximately equal. Additionally, the distance between two points in the same cluster is small relative to the distance between the clusters.

More formally, let $S$ and $T$ be two point sets in $\R^d$. We say that $S$ and $T$ are well-separated with respect to $s > 2$ if $d(S, T) \ge s \cdot \max\{\diam S, \diam T\}$, where $d(S, T) = \min\{\abs{pq}: p \in S, q \in T\}$ and $\diam S$ is the diameter of $S$, the maximum distance between two points in $S$. The number $s$ is called the separation ratio.

\begin{wrapfigure}[12]{r}{0.5\textwidth}
\centering
\includegraphics{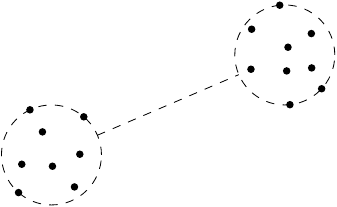}
\caption[Two point sets that are well-separated]{These two point sets are well-separated.}
\label{fig:wellseparated}
\end{wrapfigure}

There are various definitions of well-separated that appear in the literature. The definition in Narasimhan and Smid \cite{Narasimhan2007}, for example, differs from the one that we use here. The definition that we use more easily generalizes to sets of points in a metric space.

The following lemma \cite{Callahan1995} about well-separated pairs will make precise the idea that distances in one set are small compared to distances between sets.

\begin{lemma}
\label{lem:ws-pairs}
Let $S$ and $T$ be well-separated point sets with respect to $s > 2$. Then for any points $p, p' \in S$ and $q, q' \in T$
\begin{enumerate*}[label={(\alph*)},before=\unskip{: },itemjoin={{; }},itemjoin*={{; and }},after={.}]
    \item $\abs{pp'} \le (1/s)\abs{pq}$
    \item $\abs{p'q'} \le (1 + 2/s)\abs{pq}$
\end{enumerate*}
\end{lemma}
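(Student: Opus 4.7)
The plan is to derive both inequalities directly from the definition of well-separatedness, using only the triangle inequality and the fact that any intra-set distance is bounded by the diameter of that set.

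For part (a), I would first observe that since $p$ and $p'$ both lie in $S$, we have $\abs{pp'} \le \diam S$. The well-separatedness hypothesis gives $\diam S \le d(S, T)/s$, and by definition of $d(S, T)$ we have $d(S, T) \le \abs{pq}$ since $p \in S$ and $q \in T$. Chaining these three inequalities yields $\abs{pp'} \le \abs{pq}/s$.

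For part (b), the natural approach is the triangle inequality applied along the path $p' \to p \to q \to q'$, giving
\[
\abs{p'q'} \le \abs{p'p} + \abs{pq} + \abs{qq'}.
\]
The first term is at most $\abs{pq}/s$ by part (a). The third term is bounded analogously: $\abs{qq'} \le \diam T \le d(S,T)/s \le \abs{pq}/s$ (this is the symmetric version of part (a), using the $\max$ in the well-separated definition to control $\diam T$ as well). Summing gives $\abs{p'q'} \le (1 + 2/s)\abs{pq}$, as required.

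There is no real obstacle here; the only subtlety is recognizing that the $\max\{\diam S, \diam T\}$ in the definition lets us bound distances on \emph{both} sides by $\abs{pq}/s$, which is exactly what is needed to make the triangle-inequality chain telescope cleanly.
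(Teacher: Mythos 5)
Your proof is correct and follows essentially the same route as the paper: bound the intra-set distance by the diameter, use the well-separation inequality, then apply the triangle inequality along $p' \to p \to q \to q'$ for part (b). You are in fact slightly more explicit than the paper in noting that the $\max$ in the definition is what lets you bound $\abs{qq'}$ symmetrically, whereas the paper just cites ``(a)'' for both flanking terms.
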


\begin{proof}
First we will prove (a). By chaining together the definitions we have
\begin{align*}
    \abs{pp'} &\le \diam S & \text{by definition of diameter} \\
    &\le (1/s) d(S, T) & \text{by definition of well-separated} \\
    &\le (1/s) \abs{pq}. & \text{by definition of $d(P, Q)$}
\end{align*}
Now we can use the triangle inequality and (a) to prove (b):
\begin{align*}
    \abs{p'q'} &\le \abs{p'p} + \abs{pq} + \abs{qq'} & \text{triangle inequality} \\
    &\le (1/s)\abs{pq} + \abs{pq} + (1/s)\abs{pq} & \text{by (a)} \\
    &= (1 + 2/s)\abs{pq}. & \qedhere
\end{align*}
\end{proof}

A well-separated pair decomposition (WSPD) of $S$ is a sequence $\{A_1, B_1\},$ $\dots, \{A_m, B_m\}$ of pairs of subsets of $S$ such that
\begin{enumerate*}[label={(\alph*)},before=\unskip{: },itemjoin={{; }},itemjoin*={{; and }},after={.}]
    \item $A_i \cap B_i = \emptyset$ for all $i$
    \item for each pair $p, q$ of points in $S$ there is exactly one $i$ such that $p \in A_i$ and $q \in B_i$ (or $p \in B_i$ and $q \in A_i$)
    \item $A_i$ and $B_i$ are $s$-well separated for all $i$
\end{enumerate*}

Given a compressed quadtree $T$, we can construct a WSPD with a recursive algorithm \cite[Section~3.1.1]{Har-peled2008}. Let $S$ be a set of points and let $T$ be a compressed quadtree for $S$. For an internal node $a$ of $T$, let $C_S(a)$ be the (smaller) hypercube represented by $a$, let $\ell(a)$ be the diagonal length of $C_S(a)$, and let $S(a)$ be the subset of $S$ in $C_S(a)$. Calling the following procedure with the root of $T$ as both arguments, i.e.\ the initial call is $\Call{WSPD}{r,r}$ where $r$ is the root of $T$, will result in a WSPD.

\begin{algorithm}
\caption{Construction of a WSPD}
\label{alg:wspd}
\begin{algorithmic}
\State \textbf{Input}: $a$ and $b$ are nodes of a compressed quadtree $T$ that stores a set $S$ of points in its leaves, $s > 2$ is the separation ratio
\State \textbf{Output}: if initially called with both $a$ and $b$ equal to the root of $T$, the algorithm outputs a WSPD of $S$ with separation ratio $s$
\Procedure{WSPD}{$a, b$}
\State \algorithmicif\ $a = b = \{p\}$ for some point $p$\ \algorithmicthen\ \Return $\emptyset$
\If{$\ell(a) < \ell(b)$}
    \State swap $a$ and $b$
\EndIf
\Comment{now $\ell(a) \ge \ell(b)$}
\If{$d(C_S(a), C_S(b)) \ge s \cdot \max\{\ell(a), \ell(b)\}$}
    \State \Return $\{\{S(a), S(b)\}\}$
    \Comment{$C_S(a)$ and $C_S(b)$ are well-separated}
\Else
    \State let $a_1, a_2, \dots, a_k$ be the children of $a$
    \State \Return $\bigcup_{i=1}^k \Call{WSPD}{a_i, b}$
\EndIf
\EndProcedure
\end{algorithmic}
\end{algorithm}

If we assume that the nodes of $T$ store their diagonal length and containing hypercube, then each call to this algorithm takes $O(d)$ time, excluding the work done in the recursive calls. Computing the distance between two hypercubes takes $O(d)$ time, and all other operations take constant time.

By considering the tree of recursive calls made, we can show that there are $O(m)$ calls in total, where $m$ is the number of pairs returned by the algorithm. The leaves of this recursion tree correspond to calls that returned a well-separated pair, and the internal nodes correspond to calls that recursed. The leaves represent the well-separated pairs in the WSPD, so there are exactly $m$ leaves.

Since every internal node in the compressed quadtree $T$ has at least two children, the internal nodes of the recursion tree also all have at least two children. The number of internal nodes cannot exceed $m$. Therefore since there are $O(m)$ nodes in the recursion tree the running time of this algorithm is $O(dm)$. What remains to be shown is that $m$ is linear in $n = \abs{S}$, the number of points.

\begin{theorem}[{\plaincite[Lemma~3.9]{Har-peled2008}}]
Let $T$ be a compressed quadtree for a set $S$ of $n$ points in $\R^d$. The number of pairs returned by Algorithm~\ref{alg:wspd}, with the root of $T$ as both arguments, is $O(s^dn)$. The running time of the algorithm is $O(ds^dn)$.
\end{theorem}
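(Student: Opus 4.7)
The plan is to derive the running time directly from the bound $m = O(s^d n)$. The paragraphs immediately preceding the theorem already observe that the recursion tree has $O(m)$ nodes (internal nodes of the recursion tree have at least two children, so their count is bounded by the number of leaves, which is $m$). Each node of the recursion tree performs $O(d)$ work outside of recursive calls, so once $m = O(s^d n)$ is established, the total running time is $O(dm) = O(ds^d n)$.

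To bound $m$, I use a charging scheme. Each returned well-separated pair $\{S(a), S(b)\}$ corresponds to a pair of compressed quadtree nodes $(a,b)$ with $\ell(a) \ge \ell(b)$ (after the swap performed by the algorithm); I charge the pair to the smaller-side node $b$. Since the compressed quadtree has $O(n)$ nodes, it suffices to show that each node $v$ is charged $O(s^d)$ times.

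Fix $v$ and set $U = \{u : \ell(u) \ge \ell(v),\ \{S(u), S(v)\} \text{ is returned}\}$. For each $u \in U$, consider the immediate parent call $\Call{WSPD}{p(u), v}$ in the recursion tree, where $p(u)$ is the quadtree parent of $u$. By Lemma~\ref{lem:quadtree-cell}, $\ell(p(u)) \ge 2\ell(u) \ge 2\ell(v)$, so $p(u)$ was the larger side of that call. Since the call recursed, it must have failed the separation test, giving $d(C_S(p(u)), C_S(v)) < s\, \ell(p(u))$, while the returned pair itself satisfies $d(C_S(u), C_S(v)) \ge s\, \ell(u)$. A standard volume-packing argument then applies: group the nodes of $U$ by the side length of the hypercube $C_L(u)$; cells of a common side length are pairwise disjoint (they are distinct bisections of common parent cells), and all must lie inside a neighbourhood of $C_S(v)$ whose radius is proportional to $s$ times that side length. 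This bounds the number of cells at each scale by $O(s^d)$, and combining with a geometric-series sum over scales yields $|U| = O(s^d)$.

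The main obstacle is the packing step. A compressed quadtree can skip many levels between parent and child, so the sizes $\ell(u)$ for $u \in U$ could in principle span a wide range; care is required to use the auxiliary hypercubes $C_L(u)$ (rather than $C_S(u)$) to recover the disjointness across scales, and to use the well-separation lower bound $d(C_S(u), C_S(v)) \ge s\ell(u)$ together with the non-separation upper bound from the parent to force the contributing scales into a geometric progression, so that the final count is $O(s^d)$ rather than $O(s^d \log n)$.
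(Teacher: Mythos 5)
The paper does not prove this statement --- it cites Har-Peled's Lemma~3.9 for the size bound $m = O(s^d n)$, and the two paragraphs just before the theorem derive the running time from $m$ via the recursion-tree argument (internal nodes of the recursion tree have at least two children, so the tree has $O(m)$ nodes and $O(dm)$ work). Your reduction of the running time to the size bound is exactly that argument, so that part matches. Your attempt to actually reprove the size bound, however, has two genuine gaps.

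First, the parent-call claim is false in general. You assert that for an output pair $\{S(u),S(v)\}$ with $\ell(u)\ge\ell(v)$, the parent in the recursion tree is $\Call{WSPD}{p(u),v}$. But the parent can instead be $\Call{WSPD}{u,p(v)}$: this is precisely the ``switch'' moment where the held-fixed side changes, and it occurs whenever $\ell(v)\le\ell(u)\le\ell(p(v))$, a nonempty window since $\ell(p(v))\ge 2\ell(v)$ (and the window can be very wide in a compressed quadtree). In that case $p(u)$ need not have been the larger side of anything, and the inequality $d(C_S(p(u)),C_S(v))<s\,\ell(p(u))$ you rely on is not available. The standard repair is to charge each output pair to whichever argument was held fixed by its parent call (equivalently, the one whose quadtree parent does \emph{not} appear in the parent call), rather than to the smaller side.

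Second, even with the charging fixed, the packing step is not established by what you sketch. For a fixed $v$, the admissible $u$'s satisfy $\ell(v)\le\ell(u)\le\ell(p(v))$, and in a compressed quadtree the ratio $\ell(p(v))/\ell(v)$ can be arbitrarily large. At each dyadic scale in this window, the two inequalities you quote do give $O(s^d)$ disjoint cells $C_L(u)$; but they give $O(s^d)$ \emph{per scale}, and the number of scales is $\Theta\big(\log(\ell(p(v))/\ell(v))\big)$, which is unbounded. The remark that the upper and lower separation bounds ``force the contributing scales into a geometric progression'' does not resolve this --- the scales are already dyadic, and nothing in the sketch makes the per-scale counts decay or restricts the contributing scales to $O(1)$ many. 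Eliminating the logarithmic factor and summing to $O(s^d n)$ over all of $T$'s $O(n)$ nodes is exactly the nontrivial content of the cited lemma; it requires more than a per-node, per-scale volume bound. Since the paper deliberately defers to the reference here, it would be cleaner for you to do the same, or to reproduce Har-Peled's actual charging argument rather than reconstruct it from scratch.
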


The WSPD that results from this algorithm has an important property that we will need in Section~\ref{sec:hpw-spanners}, so we will prove it now.

\begin{lemma}
\label{lem:wspd-subtree}
Let $T$ be a compressed quadtree for some point set $P$, and let $W$ be a WSPD computed with Algorithm~\ref{alg:wspd}. Every pair in $W$ has the form $\{S(a), S(b)\}$ for some nodes $a, b$ of $T$. Let $p, q$ be any two points of $P$ and let $\{S(a), S(b)\}$ be the pair that separates them. If $c$ is a node that stores both $p$ and $q$ in its subtree, then $a$ and $b$ are both descendants of $c$.
\end{lemma}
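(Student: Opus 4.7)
The plan is to reduce the lemma to two facts: the nested-or-disjoint structure of subtrees in a tree, and the disjointness $S(a) \cap S(b) = \emptyset$ guaranteed by part~(a) of the WSPD definition. The algorithm enters into the argument only through the property that each returned pair has the form $\{S(a), S(b)\}$ where $a, b$ are nodes of $T$; the finer details of the recursion are not needed.

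First I would observe that since $\{S(a), S(b)\}$ is the WSPD pair separating $p$ and $q$, without loss of generality $p \in S(a)$ and $q \in S(b)$. Next I would invoke the standard fact about trees that for any two nodes $u, v$ of $T$, the subtree point sets $S(u)$ and $S(v)$ are either disjoint or one contains the other; equivalently, $S(u) \cap S(v) \ne \emptyset$ forces one of $u, v$ to be an ancestor of the other. Applying this to the pairs $(c, a)$ and $(c, b)$: since $p \in S(c) \cap S(a)$, one of $c, a$ is an ancestor of the other, and similarly for $c, b$.

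The core step is to rule out the possibility that $a$ is a proper ancestor of $c$ (and symmetrically for $b$). Suppose $a$ were a proper ancestor of $c$. Then $S(c) \subseteq S(a)$, so $q \in S(c)$ would yield $q \in S(a) \cap S(b)$, contradicting the WSPD condition $S(a) \cap S(b) = \emptyset$. The same reasoning applied to $b$ rules out $b$ being a proper ancestor of $c$. Together, $c$ must be an ancestor of (or equal to) each of $a$ and $b$, which is exactly what we need.

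I do not expect any genuine obstacle here; the only subtlety is being careful about the degenerate case $a = c$ or $b = c$, which is handled automatically by allowing $c$ itself to be considered a descendant of $c$. The argument is otherwise a short consequence of tree nesting and the built-in disjointness of WSPD pairs.
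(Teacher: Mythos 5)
Your proof is correct and rests on exactly the two facts the paper's proof uses: the disjointness of $S(a)$ and $S(b)$ and the nested-or-disjoint structure of subtrees in $T$. The only cosmetic difference is that the paper routes the argument through the least common ancestor of $p$ and $q$ and then transfers to $c$, whereas you apply the same contradiction (if $a$ or $b$ were a proper ancestor of $c$, then $q \in S(a) \cap S(b)$) directly to $c$ -- a slight streamlining, not a genuinely different approach.
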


\begin{proof}
Assume that $p \in S(a)$ and $q \in S(b)$. The sets $S(a)$ and $S(b)$ are disjoint \cite{Har-Peled2006}. Therefore, one of $a$ or $b$ cannot be an ancestor of the other. Also, $a$ must lie on the path from the leaf storing $p$ to the root of $T$. Likewise for $b$.

Let $d$ be the least common ancestor of $p$ and $q$. Since $S(a)$ and $S(b)$ are disjoint, both $a$ and $b$ must be descendants of $d$. Since $d$ is defined to be the deepest node that stores both $p$ and $q$ in its subtree, $c$ must be $d$ or an ancestor of $d$. See Figure~\ref{fig:wspd-subtree}.
\end{proof}

\begin{figure}
\centering
\includegraphics{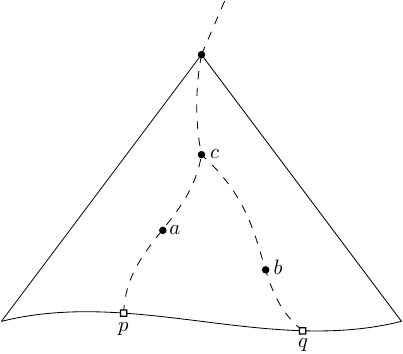}
\caption[Illustration of Lemma~\ref{lem:wspd-subtree}]{Illustration of Lemma~\ref{lem:wspd-subtree}. The points $p$ and $q$ are separated by the pair $\{S(a), S(b)\}$, where $a$ and $b$ are nodes in the compressed quadtree. The node $c$ contains both $p$ and $q$ in its subtree.}
\label{fig:wspd-subtree}
\end{figure}

Finally, the construction of a WSPD from a point set can be summarized in the following theorem.

\begin{theorem}[{\plaincite[Theorem~3.10]{Har-peled2008}}]
Given a set $S$ of $n$ points in $\R^d$, a WSPD with separation ratio $s > 2$ with $O(s^dn)$ pairs can be constructed in time $O(d(n \log n + s^dn))$.
\end{theorem}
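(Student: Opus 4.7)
The plan is to observe that this theorem is simply the concatenation of the two algorithmic results already stated in the excerpt, so the proof reduces to gluing together a quadtree construction with Algorithm~\ref{alg:wspd} and adding the time bounds.

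First I would invoke the compressed quadtree construction result (\plaincite[Section~19.2.5]{Aluru2004}, quoted above): from the input point set $S$ of size $n$ in $\R^d$, build a compressed quadtree $T$ for $S$ in time $O(dn \log n)$. While doing so, I would ensure that each node of $T$ stores its two associated hypercubes $C_L(a), C_S(a)$ and the diagonal length $\ell(a)$, so that later calls to the WSPD algorithm can compute $\ell(\cdot)$ and $d(C_S(a), C_S(b))$ in $O(d)$ time per call.

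Next I would run $\textsc{WSPD}(r,r)$ on the root $r$ of $T$. By the previously stated bound (\plaincite[Lemma~3.9]{Har-peled2008}), the output is a valid WSPD of $S$ with separation ratio $s$ consisting of $m = O(s^d n)$ pairs, and the algorithm runs in time $O(d s^d n)$. Correctness of the output (property (a) disjointness of the two sides, (b) every pair of points appearing in exactly one pair, (c) well-separatedness with ratio $s$) follows from the fact that the algorithm only outputs a pair once the well-separation test in its branch has succeeded, and recurses on the children of the larger cell otherwise, so the leaves of the recursion tree partition the pairs of $S$ and each output pair is $s$-well-separated by construction.

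Adding the two running times gives a total of $O(dn \log n) + O(ds^d n) = O(d(n\log n + s^d n))$, and the pair count is $O(s^d n)$, which is the claim. The main (and really only) obstacle is bookkeeping: verifying that the per-node data required by Algorithm~\ref{alg:wspd} (in particular $C_S$, $\ell$, and the list of children) is actually produced by the Aluru construction, so that each recursive call runs in $O(d)$ time as claimed in the discussion preceding the theorem; everything else follows immediately from the cited results.
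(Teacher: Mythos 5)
Your proposal is correct and matches the paper's implicit argument: the paper states this theorem as a cited summary whose proof is exactly the concatenation of the quadtree construction bound (Aluru, $O(dn\log n)$) with the bound on Algorithm~\ref{alg:wspd} (Har-Peled's Lemma~3.9, giving $O(s^dn)$ pairs and $O(ds^dn)$ running time), together with the preceding discussion that each recursive call costs $O(d)$ because nodes store their hypercubes and diagonal lengths. There is no genuinely different route here; you have simply made explicit the bookkeeping the paper leaves to the reader.
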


In the construction of a WSPD, we used compressed quadtrees. There are alternative tree structures that have been used instead. For example, the fair split tree of Callahan and Kosaraju \cite{Callahan1995}. A WSPD with a linear number of pairs can be constructed from the fair split tree, with the same time bound. The fair split tree is a binary tree, which can be desirable for some applications.

The reason compressed quadtrees are used is so that we can use Lemma~\ref{lem:quadtree-cell}. The diameter of the hypercube representing a compressed quadtree node is a constant fraction of the diameter of its parent. In a fair split tree this fraction depends on the dimension of the point set since a split is only done along one dimension, instead of along all dimensions simultaneously. This results in having to go $d$ levels down the fair split tree before the diameter of a node is a constant fraction of the diameter of its ancestor. This property of compressed quadtrees will be used in the analysis of the routing algorithm.

\subsection{The heavy path decomposition}
\label{sec:heavy-path}

We now describe the heavy path decomposition of a tree. Let $T$ be a rooted tree. If $a$ is a node of $T$, then the size of $a$ is the number of leaves in the subtree rooted at $a$. It is worth noting here that a node $a$ is considered to be an ancestor and a descendant of itself. For each internal node $a$, choose one child whose subtree size is maximal among all children of $a$ (breaking ties arbitrarily), and mark the edge from $a$ to that child as heavy. The other edges are marked as light. If $b$ is a child of $a$ and the edge from $a$ to $b$ is heavy, then we call $b$ the heavy child of $a$. Otherwise $b$ is a light child of $a$. What results is a decomposition of the tree into heavy paths, one for each leaf node. The heavy path decomposition of a tree with $n$ leaves can be computed in $O(n)$ time \cite{Sleator1983}.

For an internal node $a$, let $r(a)$ be the leaf node defined by following the unique heavy path down the tree starting from $a$. This leaf is called the representative of $a$. Let $h(a)$ be the node defined by following the heavy path up the tree, again starting from $a$, until the edge to the parent is no longer heavy.

\begin{lemma}
\label{lem:light-paths}
The number of light edges on any root-to-leaf path in a heavy path decomposition of a compressed quadtree is at most $\lg n$, where $n$ is the number of leaves in the compressed quadtree.
\end{lemma}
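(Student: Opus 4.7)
The plan is to establish this via the standard doubling argument for heavy path decompositions, applied to the "number of leaves" size function that the construction uses. The single structural property I need about compressed quadtrees is the one highlighted after the WSPD running-time analysis: every internal node has at least two children. This is what makes the halving step of the argument go through.

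First I would formalize the key halving claim: for any internal node $a$ of the compressed quadtree and any light child $b$ of $a$, the number of leaves in the subtree rooted at $b$ is at most half the number of leaves in the subtree rooted at $a$. To see this, let $h$ be the heavy child of $a$. By definition of the heavy edge, $\mathrm{size}(h) \ge \mathrm{size}(b)$. Since the leaves in the subtree of $a$ partition into the leaves of the subtrees of its children (and $a$ has at least two children, of which $h$ and $b$ are distinct), $\mathrm{size}(a) \ge \mathrm{size}(h) + \mathrm{size}(b) \ge 2\,\mathrm{size}(b)$, giving $\mathrm{size}(b) \le \mathrm{size}(a)/2$.

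Then I would walk along an arbitrary root-to-leaf path $a_0, a_1, \ldots, a_k$ with $a_0$ the root and $a_k$ a leaf. Each time the edge $(a_{i-1}, a_i)$ is light, the halving claim gives $\mathrm{size}(a_i) \le \mathrm{size}(a_{i-1})/2$; each time it is heavy, we at worst have $\mathrm{size}(a_i) \le \mathrm{size}(a_{i-1})$. Writing $L$ for the number of light edges on the path, we obtain $\mathrm{size}(a_k) \le \mathrm{size}(a_0)/2^L$, that is, $1 \le n/2^L$, which rearranges to $L \le \lg n$.

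I do not expect any real obstacle here; the only subtlety is the reliance on the "at least two children" property of compressed quadtrees, which would fail for an ordinary (uncompressed) quadtree because a node with a single child would make its unique child a heavy child with $\mathrm{size}$ equal to that of its parent, and a light sibling would not exist to force the halving in the generic step. Since that property is already noted in Section~\ref{sec:wspd}, the proof is essentially a three-line induction, and I would write it in that compact form.
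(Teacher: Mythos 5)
Your proof is correct and is essentially the paper's argument: a light child has a heavy sibling of at least equal size, so its subtree has at most half the leaves of its parent's, and hence at most $\lg n$ light edges can appear on any root-to-leaf path. One small remark: the ``at least two children'' property of compressed quadtrees is not actually needed --- whenever a light child exists it automatically has a distinct heavy sibling, so the halving step (and the lemma) holds for any rooted tree; a single-child node simply contributes a heavy edge and never a light one.
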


\begin{proof}
Let $T$ be a tree for which we have computed a heavy path decomposition. Let $a$ be an internal node of $T$, and let $b$ be any light child of $a$. The size of $b$ must be at most half the size of $a$, otherwise the edge to $b$ would be marked heavy. Since following a light edge reduces the size of the subtree by at least a factor of two, the number of such edges on a root-to-leaf path is at most $\lg n$.
\end{proof}

\begin{figure}
\centering
\includegraphics{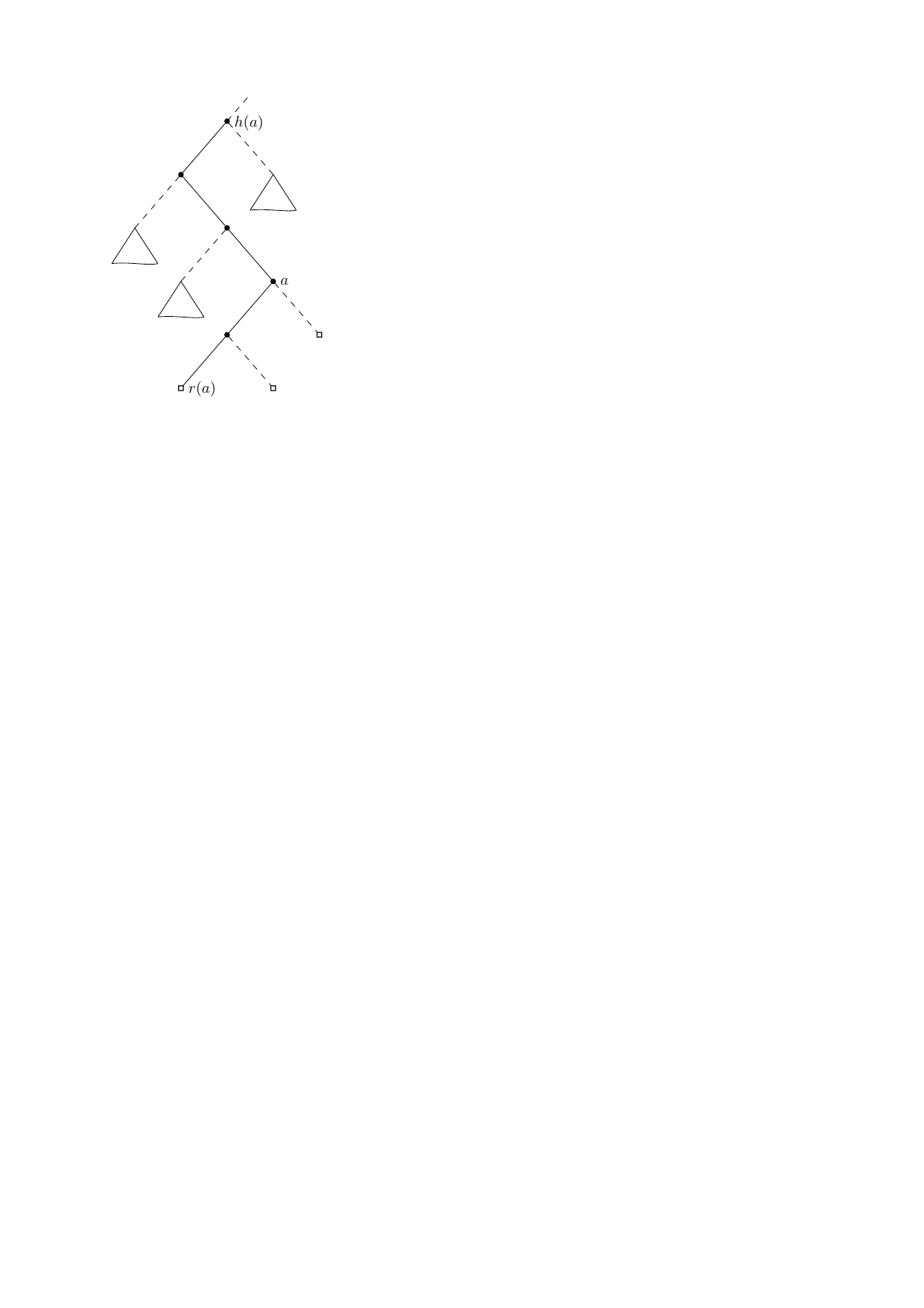}
\caption[A heavy path decomposition]{A heavy path in a tree. The leaf $r(a)$ is the representative of every node on the path from $r(a)$ to $h(a)$.}
\label{fig:ex-heavypath}
\end{figure}

\noindent
\begin{minipage}{0.65\textwidth}
\begin{lemma}
\label{lem:representative-hierarchy}
Let $T$ be a tree, and let $a$ be an internal node of $T$. Compute a heavy path decomposition of $T$. Let $r(a)$ be the representative of $a$. Then for every node $b$ on the path from $h(a)$ to $r(a)$, we have $r(b) = r(a)$.
\end{lemma}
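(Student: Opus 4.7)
The plan is to unfold the definitions of $r(\cdot)$ and $h(\cdot)$ and show that the path from $h(a)$ to $r(a)$ is entirely composed of heavy edges, so that it coincides with a single heavy path in the decomposition. Once this is established, the conclusion follows because the representative is uniquely determined by walking down heavy edges, and each internal node has a unique heavy child.

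First I would verify that every edge on the path from $h(a)$ down to $r(a)$ is heavy. By definition of $h(a)$, walking up from $a$ we stop precisely when the edge to the parent is no longer heavy; hence every edge on the (upward) path from $a$ to $h(a)$ is heavy. Symmetrically, $r(a)$ is defined by repeatedly descending to the heavy child starting from $a$, so every edge on the (downward) path from $a$ to $r(a)$ is heavy. Concatenating these two pieces, every edge on the path from $h(a)$ to $r(a)$ is heavy.

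Now fix an arbitrary node $b$ on this path; so $b$ is an ancestor of $r(a)$ and a descendant of $h(a)$, and the subpath from $b$ to $r(a)$ consists of heavy edges only. At every internal node $c$ on this subpath, the child of $c$ that also lies on the subpath is connected to $c$ by a heavy edge, and hence is the (unique) heavy child of $c$. Therefore the descent from $b$ that repeatedly takes the heavy child traces exactly this subpath and terminates at the leaf $r(a)$. By definition, this descent produces $r(b)$, so $r(b) = r(a)$.

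I do not expect any real obstacle here: the argument is purely combinatorial once one notes that heavy children are unique and that the concatenation of the up-segment to $h(a)$ and down-segment to $r(a)$ is a single heavy path. The only point that requires a little care is making sure the definitions line up so that ``on the path from $h(a)$ to $r(a)$'' unambiguously means the union of the two segments meeting at $a$.
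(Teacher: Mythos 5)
Your proof is correct and takes essentially the same route as the paper's: both argue that the segment from $h(a)$ to $r(a)$ is a single heavy path, so that following heavy edges down from any node $b$ on it necessarily terminates at $r(a)$. You merely spell out in more detail why the concatenated up-segment and down-segment form one heavy path and why the heavy-child descent from $b$ is forced along that path, which the paper leaves implicit.
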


\begin{proof}
The representative of $b$ is the leaf node defined by following the heavy path down the tree. Since $b$ is contained on the heavy path from $h(a)$ to $r(a)$, and there is a unique heavy path from each node down to a leaf, $r(a)$ must also be the representative of $b$.
\end{proof}
\end{minipage}
\hfill
\begin{minipage}{0.3\textwidth}
\includegraphics{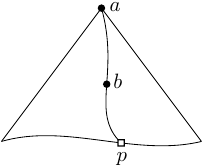}
\captionof{figure}[Illustration of Lemma~\ref{lem:representative-hierarchy}]{Illustration of Lemma~\ref{lem:representative-hierarchy}. The point $p$ is the representative of both $a$ and $b$.}
\label{fig:heavypath}
\end{minipage}

\subsection{Constructing a heavy path WSPD spanner}
\label{sec:hpw-spanners}

Constructing a spanner graph given a WSPD is simple. For each pair in the WSPD, choose an arbitrary point from each set and add an edge between those two points. The result is a $t$-spanner for $t = (s+4)/(s-4)$, where $s > 4$ is the separation ratio of the WSPD \cite{Narasimhan2007}. Since we can choose these points in any manner, we are free to decide on a scheme that benefits our application. In this article, we will choose the points using a method based on the heavy path decomposition described in the previous section. The spanner that we construct will be called a heavy path WSPD spanner.

Let $T$ be the compressed quadtree used to compute the WSPD. Compute a heavy path decomposition of $T$. For each pair $\{A, B\}$ in the WSPD, there is a corresponding pair $\{a, b\}$ of nodes in $T$. The edge that we add to the graph will be between the points $r(a)$ and $r(b)$.

Now we will prove that this graph is a $(1 + 2/s + 2/(s-1))$-spanner. To construct a path between two points $p$ and $q$, consider Algorithm~\ref{alg:spanner-path}. Let $\{S(a), S(b)\}$ be the WSPD pair that separates $p$ from $q$. The algorithm adds an edge between $r(a)$ and $r(b)$, and recursively constructs a path from $p$ to $r(a)$ and from $r(b)$ to $q$.
\begin{algorithm}
\caption{Constructing a short path in a heavy path WSPD spanner}
\label{alg:spanner-path}
\begin{algorithmic}
\State \textbf{Input}: Two points $p$ and $q$ in a heavy path WSPD spanner
\State \textbf{Output}: A path between $p$ and $q$
\Procedure{BuildPath}{$p, q$}
\If{$p = q$} \Comment{Base case}
    \State \Return $\emptyset$
\Else
    \State let $\{S(a), S(b)\}$ be the WSPD pair that separates $p$ from $q$
    \State \Return $\Call{BuildPath}{p, r(a)} \cup r(a)r(b) \cup \Call{BuildPath}{r(b), q}$
\EndIf
\EndProcedure
\end{algorithmic}
\end{algorithm}

To analyze the spanning ratio, we will first consider a special case, where $q$ is the representative of a node storing $p$. In other words, $h(q)$ is an ancestor of $h(p)$. In this case, in every call made to \textsc{BuildPath} that does not immediately return, at least one of the two recursive calls will be to the base case.

\begin{lemma}
\label{lem:span1}
Let $S$ be a set of points in $\R^d$, and let $T$ be a compressed quadtree for the points of $S$. Construct a heavy path WSPD spanner for $S$. Let $p$ and $q$ be points stored in the leaves of $T$ such that $q$ is the representative of some node containing $p$. Note that $r(p)$ is not necessarily equal to $r(q)$. In a call to \Call{BuildPath}{$p, q$}, at most one edge is added at each level of recursion.
\end{lemma}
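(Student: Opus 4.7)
The plan is an induction on the depth of recursion in \textsc{BuildPath}. The base case $p = q$ is immediate since no edges are added. For the inductive step, suppose $q = r(c)$ for some node $c$ of the compressed quadtree whose subtree contains $p$. Let $\{S(a), S(b)\}$ be the WSPD pair separating $p$ from $q$, with $p \in S(a)$ and $q \in S(b)$. The single edge $r(a)r(b)$ is added at this level, and the two recursive calls are \Call{BuildPath}{$p, r(a)$} and \Call{BuildPath}{$r(b), q$}. I want to show that the second call hits the base case, while the first call still satisfies the hypothesis of the lemma (so the induction goes through and only one edge is contributed at this level of recursion).

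The key observation is that $b$ lies on the heavy path from $c$ down to $r(c) = q$. Since $c$ contains both $p$ and $q$ in its subtree, Lemma~\ref{lem:wspd-subtree} implies that both $a$ and $b$ are descendants of $c$. Moreover, $b$ must contain the leaf $q$ in its subtree, so $b$ lies on the tree path from $c$ to $q$. But $q = r(c)$ means that this path from $c$ down to $q$ is exactly the heavy path defining $r(c)$. Hence $b$ is a node on the heavy path from $h(c)$ down to $r(c)$, and by Lemma~\ref{lem:representative-hierarchy} this forces $r(b) = r(c) = q$. Consequently the recursive call \Call{BuildPath}{$r(b), q$} is simply \Call{BuildPath}{$q, q$}, which returns immediately without adding any edges.

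For the other recursive call, note that $p \in S(a)$, so $p$ lies in the subtree of $a$, and by definition $r(a)$ is the representative of $a$. Thus the pair $(p, r(a))$ again satisfies the hypothesis of the lemma (with $a$ playing the role of $c$), and the inductive hypothesis applies. Combining the two observations, exactly one edge, namely $r(a)r(b)$, is added at the current level of recursion, and only the left recursive branch can contribute further edges. Unwinding this gives the claim.

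The only nontrivial step is identifying $r(b)$ with $q$; everything else is bookkeeping. That identification hinges on correctly combining Lemma~\ref{lem:wspd-subtree} (to place $a, b$ inside the subtree of $c$) with Lemma~\ref{lem:representative-hierarchy} (to propagate the representative along the heavy path), so I expect the main care needed in the write-up is to argue cleanly that $b$ actually sits on the heavy chain from $c$ to $r(c)$ rather than branching off it.
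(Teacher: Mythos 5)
Your proof is correct and follows essentially the same route as the paper's: induct, use Lemma~\ref{lem:wspd-subtree} to place $a$ and $b$ under $c$, and use Lemma~\ref{lem:representative-hierarchy} to conclude $r(b) = q$ so the call \textsc{BuildPath}$(r(b), q)$ hits the base case while only \textsc{BuildPath}$(p, r(a))$ continues. The only (cosmetic) difference is that the paper inducts on $\abs{S(c)}$ rather than on recursion depth, and your write-up actually spells out why $b$ lies on the heavy path from $h(c)$ to $r(c)$, a step the paper leaves implicit.
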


\begin{proof}
Let $c$ be the lowest common ancestor of both $p$ and $q$ with $q = r(c)$. The proof is by induction on the size of $S(c)$. In the base case, $S(c) = 1$. That means $p = q$, and so the algorithm returns immediately, adding no edges.

Now consider the case $p \ne q$. Let $\{S(a), S(b)\}$ be the pair that separates $p$ from $q$. By Lemma~\ref{lem:wspd-subtree}, both $a$ and $b$ are descendants of $c$. Since $a$ is a descendant of $c$, we have $\abs{S(a)} < \abs{S(c)}$. By induction, at most one edge is added in each level of recursion when constructing the path from $p$ to $r(a)$.

By Lemma~\ref{lem:representative-hierarchy}, $q = r(b)$. One of the recursive calls is \Call{BuildPath}{$r(b), q$}, which will return without adding any edges. Since only the edge $r(a)r(b)$ is being added in the original call, the induction holds.
\end{proof}

Consider an initial call to \Call{BuildPath}{$p, q$}, where $q$ is not necessarily the representative of an ancestor of $p$. Two recursive calls are made, to \Call{BuildPath}{$p, r(a)$} and \Call{BuildPath}{$r(b), q$}, respectively. Both of these calls satisfy the conditions for Lemma~\ref{lem:span1}.

We can also bound the length of the edges being added to the path, as a function of the recursion depth.

\begin{lemma}
\label{lem:span2}
Let $S$ be a set of points in $\R^d$, and let $T$ be a compressed quadtree for the points of $S$. Construct a heavy path WSPD spanner for $S$. Let $p$ and $q$ be points of $S$. Consider the series of recursive calls made during a call to \Call{BuildPath}{$p,q$}. If the level of recursion of some call is $k$, the length of the edge added during that call is at most $(1/s)^k\abs{pq}$.
\end{lemma}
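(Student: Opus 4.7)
The plan is to induct on the recursion depth $k$, exploiting the key observation that at every level both arguments of the recursive call land inside a single well-separated cell of the parent call. This will let me chain Lemma~\ref{lem:ws-pairs}(a) down the recursion tree and pick up one factor of $1/s$ per level.

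Concretely, let $(u_k, v_k)$ denote the arguments of the depth-$k$ call (so $(u_0, v_0) = (p,q)$), and let $\{S(a_k), S(b_k)\}$ with $u_k \in S(a_k)$ and $v_k \in S(b_k)$ be the WSPD pair used at that call. As a first step I would prove the auxiliary invariant $\abs{u_k v_k} \le (1/s)^k \abs{pq}$ by induction on $k$; the base case is trivial. For the inductive step, the two children of the depth-$(k-1)$ call are \Call{BuildPath}{$u_{k-1}, r(a_{k-1})$} and \Call{BuildPath}{$r(b_{k-1}), v_{k-1}$}, and in either child both new arguments live in one of the two sets $S(a_{k-1})$ or $S(b_{k-1})$ of the parent's separating pair. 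Call that common containing set $S(c_{k-1})$. The well-separation bound packaged in Lemma~\ref{lem:ws-pairs}(a) then gives $\abs{u_k v_k} \le \diam S(c_{k-1}) \le (1/s)\abs{u_{k-1} v_{k-1}}$, closing the induction.

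For the edge bound itself, observe that $c_{k-1}$ is a common ancestor in the compressed quadtree of the endpoints $u_k$ and $v_k$ of the depth-$k$ call. Applying Lemma~\ref{lem:wspd-subtree} with this ancestor forces both $a_k$ and $b_k$ to be descendants of $c_{k-1}$, so $r(a_k), r(b_k) \in S(c_{k-1})$. The edge added at depth $k$ therefore satisfies $\abs{r(a_k) r(b_k)} \le \diam S(c_{k-1}) \le (1/s)\abs{u_{k-1} v_{k-1}} \le (1/s)^k \abs{pq}$, which is exactly the claimed bound.

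The main subtlety is the observation that both arguments of a recursive call sit inside a single well-separated cell from the parent; this is the bridge between the tree-structural content of Lemma~\ref{lem:wspd-subtree} and the geometric content of Lemma~\ref{lem:ws-pairs}. Once that bridge is in place, both the auxiliary invariant and the edge-length bound reduce to the same diameter argument iterated down the recursion tree, and summing the resulting geometric series will later combine with the top-level estimate $\abs{r(a_0) r(b_0)} \le (1 + 2/s)\abs{pq}$ to produce the advertised spanning ratio.
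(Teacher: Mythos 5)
Your proof is correct and follows essentially the same route as the paper: an induction on recursion depth that gains one factor of $1/s$ per level from Lemma~\ref{lem:ws-pairs}(a), using the fact that each deeper call's separating pair lies inside one set of the parent's pair (the paper leaves this containment largely implicit, while you justify it explicitly via Lemma~\ref{lem:wspd-subtree}). The only cosmetic difference is that you carry the induction on the distance $\abs{u_k v_k}$ between the call arguments and deduce the edge bound from it, whereas the paper inducts directly on the length of the edge added at each level.
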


\begin{proof}
The proof is by induction on the depth of recursion. In the base case, $k = 1$. Let $\{S(a), S(b)\}$ be the pair separating $p$ from $q$. Assume without loss of generality that we are in the call to \Call{BuildPath}{$p, r(a)$}. Every edge added by this call and the recursive calls made by it will be in the subtree of $a$. By Lemma~\ref{lem:ws-pairs}, any such edge has length at most $(1/s)\abs{pq}$.

Now assume that it is true for recursion up to some depth $k \ge 1$. Consider a call \Call{BuildPath}{$x, y$} made at recursion depth $k$. Let $\{S(c), S(d)\}$ be the pair that separates $x$ from $y$. By induction the length of the edge $r(c)r(d)$ is at most $(1/s)^k\abs{pq}$. By Lemma~\ref{lem:ws-pairs} the length of any edge in $S(c)$ or in $S(d)$ is at most $(1/s)\abs{r(c)r(d)} \le (1/s)^{k+1}\abs{pq}$.
\end{proof}

Using these two lemmas, we can bound the spanning ratio of the path between $p$ and $q$. Note that this implies the graph is connected, since a graph with finite spanning ratio must have a path between any two vertices.

\begin{theorem}
\label{thm:spanner}
Let $S$ be a set of $n$ points in $\R^d$. The heavy path WSPD spanner $G$ for $S$ has a spanning ratio of at most $1 + 2/s + 2/(s - 1)$.
\end{theorem}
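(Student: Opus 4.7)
The plan is to analyze the path returned by Algorithm~\ref{alg:spanner-path} and sum up its total length via a geometric series. Let $p, q \in S$ and let $\{S(a), S(b)\}$ be the WSPD pair that separates them, so that $p \in S(a)$ and $q \in S(b)$. The call \Call{BuildPath}{$p,q$} outputs a path of the form
\[
p \leadsto r(a) \longrightarrow r(b) \leadsto q,
\]
where the two \emph{subpaths} are produced by the recursive calls \Call{BuildPath}{$p, r(a)$} and \Call{BuildPath}{$r(b), q$}. I will bound each of the three pieces separately.

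For the single ``bridge'' edge $r(a)r(b)$, note that $r(a) \in S(a)$ and $r(b) \in S(b)$, so part (b) of Lemma~\ref{lem:ws-pairs} applied to the well-separated pair $\{S(a), S(b)\}$ immediately yields $\abs{r(a)r(b)} \le (1 + 2/s)\abs{pq}$. For the subpath from $p$ to $r(a)$, observe that $r(a)$ is by construction the representative of an ancestor of $p$ (namely $a$), so Lemma~\ref{lem:span1} applies to this recursive call and guarantees that at most one edge is added at each level of recursion below it. Lemma~\ref{lem:span2} then bounds the length of the edge added at recursion depth $k$ by $(1/s)^k \abs{pq}$. Summing the resulting geometric series gives
\[
\text{length of subpath from $p$ to $r(a)$} \;\le\; \sum_{k=1}^{\infty} (1/s)^k \abs{pq} \;=\; \frac{1}{s-1}\abs{pq}.
\]
An identical argument, with the roles of $p$ and $q$ swapped, bounds the subpath from $r(b)$ to $q$ by $\frac{1}{s-1}\abs{pq}$. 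Adding the three contributions yields
\[
d_G(p, q) \;\le\; \left(1 + \frac{2}{s} + \frac{2}{s-1}\right)\abs{pq},
\]
which is the claimed spanning ratio.

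The main subtlety is verifying that Lemma~\ref{lem:span1} genuinely applies to the two subcalls: one must check that in \Call{BuildPath}{$p, r(a)$}, the target $r(a)$ really is the representative of some node whose subtree contains $p$, and similarly for the other subcall. This is immediate from the WSPD construction ($a$ is an ancestor of the leaf storing $p$, with $r(a)$ its representative), but it is the hinge that lets us peel off the single edge $r(a)r(b)$ and then apply the ``one edge per level'' bound twice to the recursive residue. Beyond this, the proof is a routine summation of a geometric series and requires no further work.
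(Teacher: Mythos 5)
Your proposal is correct and follows essentially the same route as the paper: bound the bridge edge $r(a)r(b)$ by $(1+2/s)\abs{pq}$ via Lemma~\ref{lem:ws-pairs}, then apply Lemmas~\ref{lem:span1} and~\ref{lem:span2} to each recursive subcall (after checking, as the paper also does, that $r(a)$ is the representative of an ancestor of $p$) and sum the geometric series to get $\frac{1}{s-1}\abs{pq}$ per subpath. No gaps; the argument matches the paper's proof in structure and detail.
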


\begin{proof}
Let $p$ and $q$ be points of $S$. Algorithm~\ref{alg:spanner-path} constructs a path from $p$ to $q$. The edge added from $r(a)$ to $r(b)$ has length at most $(1 + 2/s)\abs{pq}$ by Lemma~\ref{lem:ws-pairs}. The length of the path from $p$ to $r(a)$ can be bounded using Lemma~\ref{lem:span1} and Lemma~\ref{lem:span2}. There is at most one edge being added at each level of recursion, and the length of the edge being added at level $k$ is at most $(1/s)^k\abs{pq}$. Let $M$ be the maximum recursion depth. Therefore, the length of the path from $p$ to $r(a)$ is at most
\begin{equation*}
    \sum_{k=1}^M \bigg(\frac{1}{s}\bigg)^k \abs{pq} \le \frac{1}{s - 1}\abs{pq}.
\end{equation*}
The length of the path from $r(b)$ to $q$ can be bounded in the same way. Therefore the total length of the path is at most
\begin{equation*}
    \frac{1}{s - 1}\abs{pq} + \bigg(1+\frac{2}{s}\bigg)\abs{pq} + \frac{1}{s - 1}\abs{pq} = \bigg(1 + \frac{2}{s} + \frac{2}{s-1}\bigg)\abs{pq}. \qedhere
\end{equation*}
\end{proof}

In addition to bounding the length of the path, we can bound the number of edges on the path from $p$ to $q$, using Lemma~\ref{lem:light-paths}. This is because every edge on the spanner path ``traverses'' at least one light edge. The diameter of a spanner is the maximum number of edges over all the shortest paths between any pair of points in the spanner.

\begin{lemma}
\label{lem:spanner-diameter}
For two points $p$ and $q$ in a heavy path WSPD spanner, the number of edges on the path from $p$ to $q$ found by \Call{BuildPath}{$p, q$} is at most $2\lg n + 1$. In other words, the heavy path WSPD spanner is a $(2\lg n + 1)$-hop spanner.
\end{lemma}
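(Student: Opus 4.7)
The plan is to decompose the path produced by $\Call{BuildPath}{p, q}$ into the single edge $r(a)r(b)$ added at the top level, together with the edges added by the two recursive subcalls $\Call{BuildPath}{p, r(a)}$ and $\Call{BuildPath}{r(b), q}$. As noted after Lemma~\ref{lem:span1}, both subcalls satisfy that lemma's hypothesis, so each contributes at most one edge per level of recursion. It therefore suffices to bound the recursion depth of a call of the form $\Call{BuildPath}{x, r(c)}$, where $c$ is an ancestor in the compressed quadtree $T$ of the leaf storing $x$.

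The heart of the argument will be that each step of such a recursion descends past at least one fresh light edge on the tree path from $c$ down to $x$'s leaf. Specifically, if $\{S(a'), S(b')\}$ is the separating pair at the top of such a call, Lemma~\ref{lem:wspd-subtree} places both $a'$ and $b'$ in the subtree of $c$, and the argument in the proof of Lemma~\ref{lem:span1} gives $r(b') = r(c)$. Because $S(a') \cap S(b') = \emptyset$ and $r(c) \in S(b')$, the node $a'$ cannot lie on $c$'s heavy path down to $r(c)$; by Lemma~\ref{lem:representative-hierarchy} this forces $r(a') \ne r(c)$, so the tree path from $c$ down to $a'$ must diverge from $c$'s heavy path, charging at least one light edge. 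The recursion then continues with $\Call{BuildPath}{x, r(a')}$, and since the relevant ancestors $c, a', a'', \dots$ form a strictly descending chain along the root-to-leaf path of $x$, the light edges charged at successive recursion levels are pairwise disjoint.

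Assembling the pieces, the recursion depth of each subcall is bounded above by the number of light edges on a root-to-leaf path of $T$, which is at most $\lg n$ by Lemma~\ref{lem:light-paths}. Combined with Lemma~\ref{lem:span1}, each subcall adds at most $\lg n$ edges to the path, giving a total of $1 + 2\lg n = 2\lg n + 1$ edges. The main obstacle is the bookkeeping establishing the injection between recursion levels and light edges along the $c$-to-$x$ path; once that injection is in hand, Lemma~\ref{lem:light-paths} closes out the count immediately, and the $(2\lg n + 1)$-hop spanner conclusion follows because $\Call{BuildPath}{p, q}$ is invoked for arbitrary $p$ and $q$.
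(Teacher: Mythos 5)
Your proof is correct and follows essentially the same approach as the paper: both decompose the \textsc{BuildPath} output into the middle edge $r(a)r(b)$ plus the two subcalls (each meeting the hypothesis of Lemma~\ref{lem:span1}), and both bound each subcall's edge count by $\lg n$ via Lemma~\ref{lem:light-paths} by arguing that each successive step in the recursion must cross a fresh light edge on $p$'s (resp.\ $q$'s) root-to-leaf path. The paper phrases this by tracking the nodes $h(p_1), \dots, h(p_k)$ and their distinct representatives, while you charge light edges directly to the strictly descending chain of lca nodes in the recursion; these are the same argument in different clothing.
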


\begin{proof}
Let $\{S(a), S(b)\}$ be the WSPD pair that separates $p$ from $q$. Consider the subpath $p = p_1, p_2, \dots, p_k = r(a)$. For each edge $p_ip_{i+1}$, we know that $p_i$ is contained in the subtree of $h(p_{i+1})$, where $h(p_{i+1})$ is the shallowest node in the compressed quadtree that $p_{i+1}$ is the representative of.

The sequence of nodes $h(p_1), h(p_2), \dots, h(p_k)$ must then all lie on the same root-to-leaf path (that is, the path from $p$ to the root). Since all of these nodes have different representatives, by Lemma~\ref{lem:light-paths} there can be at most $\lg n$ of them.

The same is true for the subpath between $r(b)$ and $q$, and then adding the edge between $r(a)$ and $r(b)$ gives an upper bound of $2\lg n + 1$ edges on the spanner path.
\end{proof}

We end this section with a theorem that summarizes the entire construction of a heavy path WSPD spanner and all its properties. Note that the \textsc{BuildPath} algorithm that finds a path between two points in a heavy path WSPD spanner is not a local algorithm since it requires knowing the pair $\{S(a), S(b)\}$ from $p$.

\begin{theorem}
\label{thm:hpws-summary}
Let $S$ be a set of $n$ points in $\R^d$, and let $s > 2$. In $O(d(n \log n + s^dn))$ time, we can construct a graph $G$ called a heavy path WSPD spanner with the following properties:
\begin{itemize}
    \item The number of edges in $G$ is $O(s^dn)$.
    \item $G$ is a $(1 + 2/s + 2/(s - 1))$-spanner.
    \item $G$ is a $(2\lg n + 1)$-hop spanner.
\end{itemize}
Additionally, between any two points there is a single path (found by algorithm \textsc{BuildPath}) that achieves both the spanning and hop-spanning ratio.
\end{theorem}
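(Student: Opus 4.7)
The plan is to assemble this theorem directly from the results already developed in the section, since nearly every bullet corresponds to a prior lemma or theorem. The only work is to thread them together carefully and verify that the running time of the final construction does not exceed the sum of its pieces, and to observe that the path produced by \textsc{BuildPath} simultaneously witnesses both the spanning ratio and the hop bound.

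First I would describe the construction explicitly. Given $S$, build a compressed quadtree $T$ on $S$ in $O(dn \log n)$ time using the cited algorithm of Aluru, then run Algorithm~\ref{alg:wspd} on $(r,r)$ to obtain the WSPD $W$; by the theorem cited from Har-Peled, this yields $m = O(s^d n)$ pairs in $O(d s^d n)$ time. Compute the heavy path decomposition of $T$ in $O(n)$ time via Sleator--Tarjan, which at the same time annotates each internal node $a$ with its representative $r(a)$. For each pair $\{S(a), S(b)\} \in W$, insert the edge $r(a) r(b)$; this contributes $O(s^d n)$ edges and $O(s^d n)$ work, and the total running time is $O(d(n \log n + s^d n))$ as claimed. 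The edge count bound is immediate since there is exactly one edge per WSPD pair.

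Next I would obtain the spanning ratio and hop bound. The spanning ratio $1 + 2/s + 2/(s-1)$ is exactly the content of Theorem~\ref{thm:spanner}, applied to the graph just constructed. For the hop bound, Lemma~\ref{lem:spanner-diameter} shows that the path returned by \Call{BuildPath}{p,q} uses at most $2 \lg n + 1$ edges for any pair $p, q$; since the shortest hop path between $p$ and $q$ in $G$ can only be shorter, $G$ is a $(2 \lg n + 1)$-hop spanner.

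The last sentence of the theorem — that a single path attains both bounds simultaneously — is the only claim that requires a brief extra remark rather than a direct citation. Here I would note that the proofs of Theorem~\ref{thm:spanner} and Lemma~\ref{lem:spanner-diameter} both analyze the same object, namely the output of \Call{BuildPath}{p, q}: Theorem~\ref{thm:spanner} bounds the total Euclidean length of its edges, while Lemma~\ref{lem:spanner-diameter} bounds the number of its edges. Consequently a single execution of \textsc{BuildPath} produces one path that is simultaneously a $(1 + 2/s + 2/(s-1))$-spanning path and a $(2 \lg n + 1)$-hop path. I do not expect any real obstacle: all three bullets and the final remark are essentially corollaries of results proved earlier in the section, so the only care needed is in stating the construction clearly and in summing the running times of its stages.
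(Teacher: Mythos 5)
Your proposal is correct and takes exactly the route the paper intends (the paper leaves this theorem without an explicit proof precisely because it is a restatement of the prior results): build the compressed quadtree, run Algorithm~\ref{alg:wspd}, compute the heavy path decomposition, insert one edge per pair, and cite Theorem~\ref{thm:spanner} and Lemma~\ref{lem:spanner-diameter} for the two ratio bounds, both of which analyze the output of \textsc{BuildPath}. One small remark: the edge count is at most $O(s^d n)$ rather than exactly one edge per pair, since distinct WSPD pairs could in principle map to the same representative edge, but this only helps the stated upper bound, so the argument stands.
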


\section{Local routing in Euclidean space}
\label{ch:routing-euclidean}

In this section, we present a local routing algorithm for heavy path WSPD spanners. Let $S$ be a set of points, $T$ be a compressed quadtree for $S$, $W$ be a WSPD computed using $T$, and $G$ be a heavy path WSPD spanner constructed as described in the previous section. The main difficulty here is trying to make a decision at a point $p$ \textit{locally}, without knowing the neighbourhood of the destination. We now have all the tools to describe the routing algorithm. First, we will explain what we need to store at each vertex. Then, we show how to use this information to design a routing algorithm and analyze it.

\subsection{Routing tables}
\label{sec:routing-tables}

First, we describe a labelling scheme for the nodes of $T$. The vertices of $G$ will store these labels. The message will only use the label of the destination to route. In other words, the algorithm is memoryless.

Each leaf will get a unique label in the range $1, 2, \dots, n$. Perform a depth-first traversal of $T$, and label the leaves in the order that they are visited. The label of an internal node will be the set of all the labels in the leaves of that node's subtree. We call this the DFS labelling scheme. The labelling scheme ensures that this set will be an interval. This fact is well-known but we include a proof for completeness.

\begin{lemma}
Let $T$ be a tree, and label its leaves using a depth first search. Let $a$ be a node of $T$. Let $I$ be the set of labels of the leaves in the subtree rooted at $a$. The labels form a contiguous subset of $\{1, 2, \dots, n\}$. That is, if $i$ is the minimum label and $j$ is the maximum label in $I$, then $I = \{i, i+1, \dots, j-1, j\}$.
\end{lemma}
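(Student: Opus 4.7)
The plan is to prove this by induction on the height of the subtree rooted at $a$, exploiting the defining property of a depth-first traversal: once it enters a subtree it visits every leaf in that subtree before returning to the parent.

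For the base case, I would take $a$ to be a leaf. Then $I$ consists of a single label, which is vacuously a contiguous interval of $\{1, 2, \dots, n\}$.

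For the inductive step, let $a$ be an internal node with children $c_1, c_2, \dots, c_k$, ordered according to the sequence in which the DFS explores them. By the definition of depth-first search, the traversal fully explores the subtree rooted at $c_m$ before descending into the subtree rooted at $c_{m+1}$, and no leaf outside the subtree of $a$ is visited between the first and last leaf of $a$'s subtree. Applying the induction hypothesis to each $c_m$, the set of labels in the subtree of $c_m$ is a contiguous interval $\{i_m, i_m + 1, \dots, j_m\}$. Because the leaves in the subtree of $c_m$ are visited as an uninterrupted run, and those runs occur in the order $m = 1, 2, \dots, k$, we must have $j_m + 1 = i_{m+1}$ for each $1 \le m < k$. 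Taking the union gives $I = \{i_1, i_1 + 1, \dots, j_k\}$, which is a contiguous subset of $\{1, 2, \dots, n\}$ with minimum $i_1$ and maximum $j_k$, as required.

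I do not anticipate a serious obstacle; the only subtle point is justifying that the leaves of $a$'s subtree are visited in one uninterrupted block of the traversal, which is immediate from the recursive description of DFS. If a fully formal argument is desired, this fact itself can be stated as a short auxiliary observation (again by induction on subtree height) before being invoked, but phrasing the main induction directly in terms of the child subtrees absorbs this cleanly.
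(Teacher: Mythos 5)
Your proof is correct and takes essentially the same route as the paper: both argue by induction (you on subtree height, the paper on subtree size — an inessential difference), apply the inductive hypothesis to the children in DFS visitation order, observe that consecutive child subtrees yield consecutive intervals with $j_m + 1 = i_{m+1}$, and take the union. Your explicit mention that the leaves of $a$'s subtree form an uninterrupted block of the traversal is a reasonable extra justification that the paper leaves implicit.
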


\begin{proof}
We prove this by induction on the size of the tree. Let $n(a)$ be the size of the subtree rooted at a node $a$. If $n(a) = 1$ then $a$ is a leaf and the label of $a$ is a single integer.

If $n(a) > 1$, then $a$ is an internal node and it has children $a_1, \dots, a_k$. Assume that $a_1$ is the first child visited in the depth-first search, $a_2$ is the second child visited, and so on. By induction, the label of each child node is an interval. They must be disjoint since the label of each leaf is unique. Let $[x_i, y_i]$ be the label of $a_i$, for all $1 \le i \le k$, where $x_i$ is the minimum label and $y_i$ is the maximum label. Since depth-first search will visit the nodes in the subtree of $a_{i+1}$ immediately after visiting the nodes in the subtree of $a_i$, we must have $x_{i+1} = y_i + 1$ for all $1 \le i < k$. Therefore when we take the union of all $k$ intervals, we get an interval of the form $[x_1, y_k]$.
\end{proof}

Since we only need to store the minimum and maximum labels of each interval,
we only need $2\lg n$ bits to store the label of an internal node.

In a depth-first search, the children of a node can be visited in any order. If we always visit the child with the largest subtree first (\ie, always follow the heavy edge), and then visit the other children in an arbitrary order, then we can save some memory as shown in the following lemma. We call this type of DFS labelling scheme a \textit{heavy path DFS} labelling scheme.

\begin{lemma}
\label{lem:rep-label}
Let $T$ be a compressed quadtree and let $a$ be an internal node of $T$. In a heavy path DFS labelling of $T$, the label of $r(a)$ is the minimum label of all the points stored in the subtree of $a$. That is, if the label of $a$ is $[x, y]$, then the label of $r(a)$ is $x$.
\end{lemma}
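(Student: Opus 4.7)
The plan is a short induction on the size $\abs{S(a)}$ of the subtree rooted at $a$, combining the definition of $r(a)$ with the fact that a heavy path DFS visits the heavy child of every internal node first.

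For the base case, $\abs{S(a)} = 1$ and $a$ is itself a leaf, so $r(a) = a$ and the ``interval'' $[x,y]$ collapses to a single integer $x = y$; the claim is immediate. For the inductive step, let $a$ have children $a_1, \dots, a_k$, and assume without loss of generality that $a_1$ is the heavy child of $a$, so that the heavy path DFS visits $a_1$ (and recursively labels $S(a_1)$ entirely) before touching any $a_i$ with $i \ge 2$. By the preceding lemma, the label of $a_1$ is an interval $[x_1, y_1]$, and since $a_1$'s subtree is traversed first inside $a$, we have $x_1 = x$.

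By the definition of the heavy path decomposition, the heavy path down from $a$ passes through $a_1$, so $r(a) = r(a_1)$. Applying the inductive hypothesis to $a_1$ (which has strictly smaller subtree size) gives that the label of $r(a_1)$ is $x_1$, and hence the label of $r(a)$ is $x_1 = x$, as desired.

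There is no substantial obstacle here; the only thing to be careful about is making sure that ``visit the heavy child first'' in the heavy path DFS is compatible with the definition of $r(a)$ that follows heavy edges downward, which is exactly what links the minimum label of $[x,y]$ to the representative. The lemma characterising labels as contiguous intervals, already proved, does the rest of the bookkeeping.
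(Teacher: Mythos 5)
Your proof is correct and follows essentially the same route as the paper's: induction on the subtree size, observing that the heavy-path DFS visits the heavy child first so its labels form the initial segment of $a$'s interval, that $r(a)$ equals the representative of the heavy child by definition of the heavy path, and then applying the inductive hypothesis to that child. The only cosmetic difference is that you explicitly invoke the preceding contiguous-interval lemma to justify writing $[x_1,y_1]$, which the paper leaves implicit.
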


\begin{proof}
The proof is by induction on the size of the tree. Let $a$ be a node of $T$. If $a$ is a leaf, then the label of $a$ is of the form $[x, x]$, where $x$ is the label of the point stored at $a$.

Now assume that $a$ is an internal node, and that $b$ is the first child visited by the heavy path DFS labelling. By definition, $a$ and $b$ are on the same heavy path, and so $r(a) = r(b)$. If $[x, y]$ is the label of $b$, then by induction the label of $r(a) = r(b)$ is $x$. Since $b$ is the first child of $a$ visited, the labels of all the points in $b$ are smaller than the labels of the points stored in the subtrees of other children of $a$. Therefore, $x$ is the smallest label in the subtree of $a$.
\end{proof}

We now describe the information that needs to be stored in the routing table of a vertex $u$ of the graph $G$. First, store the label of $u$. Second, for each neighbour $v$ of $u$, let $\{S(a_v), S(b_v)\}$ be the WSPD pair that generated the edge between $u$ and $v$, where $v \in S(b_v)$, \ie $v$ is the leaf in the subtree of $b_v$ with minimum label. Store the labels (defined by the heavy path DFS labelling of $T$) of $v$, $b_v$, and $h(v)$. Recall that $h(v)$ is the shallowest node in $T$ for which $v$ is a representative. Notice that the label of $a_v$ is not stored, as it is never used by the routing algorithm. Furthermore, since there is an edge between $u$ and $v$ we know that $u$ is the representative of $S(a_v)$.

\begin{lemma}
\label{lem:memory}
The total size of the routing tables is $O(s^dn\log n)$.
\end{lemma}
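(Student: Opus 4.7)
The plan is a two-step counting argument: first bound the size of the routing table at a single vertex, then sum over all vertices using the edge-count bound from Theorem~\ref{thm:hpws-summary}.

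For the per-vertex bound, I would argue as follows. A heavy path DFS label is an interval $[x,y] \subseteq \{1,\dots,n\}$, so it can be stored with $2\lg n$ bits. At vertex $u$ the routing table holds (i) the label of $u$, costing $O(\log n)$ bits, and (ii) for each neighbour $v$ a constant-size record containing the labels of $v$, $b_v$, and $h(v)$, each of which is an interval in $\{1,\dots,n\}$ and so costs $O(\log n)$ bits. Hence the total storage at $u$ is at most $c \cdot (1 + \deg(u)) \log n = O(\deg(u) \log n)$ bits for some absolute constant $c$ (using $\deg(u) \ge 1$ whenever $u$ has any table entries; isolated vertices only pay the $O(\log n)$ term, which is absorbed).

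For the global bound, sum over the vertices of $G$:
\begin{equation*}
    \sum_{u \in V} O(\deg(u) \log n) \;=\; O(\log n) \sum_{u \in V} \deg(u) \;=\; O(\log n \cdot \lvert E(G)\rvert).
\end{equation*}
By Theorem~\ref{thm:hpws-summary}, $\lvert E(G) \rvert = O(s^d n)$, so the total size of the routing tables is $O(s^d n \log n)$, as claimed.

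There is no real obstacle here; the only thing to be careful about is confirming that each of the three per-neighbour pieces of data really is a label in the heavy path DFS scheme and therefore fits in $O(\log n)$ bits. The labels of $v$ and $h(v)$ are labels of nodes of $T$ (a leaf and an internal node, respectively), and $b_v$ is also a node of $T$; all three are intervals by the preceding lemma, so each fits in $2\lg n$ bits. After that, the calculation is just the handshake identity combined with the edge bound already established.
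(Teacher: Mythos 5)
Your proof is correct and follows essentially the same route as the paper: a constant number of $O(\log n)$-bit interval labels per neighbour, summed over all vertices via the handshake identity, combined with the $O(s^d n)$ bound on the number of edges. The paper additionally uses Lemma~\ref{lem:rep-label} to shave the per-neighbour constant from $5\lg n$ to $3\lg n$ bits, but that refinement is irrelevant to the asymptotic claim.
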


\begin{proof}
The label of a point is a single integer in the range $\{1, \dots, n\}$, and the label of an internal node is two integers in the same range, so in total we need to store $5\lg n$ bits for each neighbour of $u$: we need $\lg n$ bits to store the label of $v$, and $2\lg n$ bits each for the labels of $b_v$ and $h(v)$. This can be improved by applying Lemma~\ref{lem:rep-label}. We know that $v$ is the representative of $b_v$, since the edge $uv$ was generated by the pair $\{a_v, b_v\}$. We also know that $v$ is the representative of $h(v)$, by the definition of $h(v)$. So if $x$ is the label of $v$, then the labels of $b_v$ and $h(v)$ are of the form $[x, y]$ and $[x, z]$, respectively. Since three of the integers being stored are equal, we actually only need $3\lg n$ bits.

The total size of the routing table at a vertex $u$ of $G$ is $(3\deg(u) + 1)\lg n$, where $\deg(u)$ is the number of neighbours of $u$ in $G$. The total size of the routing tables stored in the entire graph is
\begin{equation*}
    \sum_{u \in P} (3\deg(u) + 1)\lg n = (6m + n)\lg n
\end{equation*}
where $m$ is the number of edges in the spanner. Since we know $m = O(s^dn)$, the total size of the routing tables is therefore $O(s^dn\log n)$.
\end{proof}

\subsection{Routing in a heavy path WSPD spanner}
\label{sec:routing-algorithm}

We now present the routing algorithm. Let $p$ be the starting vertex, and let $q$ be the destination vertex. We can assume that the label of the destination $q$ is stored with the message. No other information is stored with the message (that is, the algorithm is memoryless). The algorithm proceeds in two stages: the ascending stage and the descending stage. We first check if we are in the descending stage of the algorithm. If so, perform a descending step. If not, perform an ascending step. We will refer to this algorithm as the heavy path routing algorithm. Let $u$ be the current vertex.

\begin{enumerate}
    \item{} [\textsl{Descending step}] If $u$ has a neighbour $v$ (with WSPD pair $\{S(a_v), S(b_v)\}$) such that $q \in b_v$, then forward the message to $v$.
    \item{} [\textsl{Ascending step}] Otherwise, find the representative of the parent of $h(u)$, and forward the message to that vertex.
\end{enumerate}

The proof that this routing algorithm guarantees delivery is split into two stages. Let $\{S(a), S(b)\}$ be the WSPD pair separating $p$ from $q$. First we will prove that the ascending step will be applied until the message reaches $r(a)$. Then, we will prove that the descending step will be applied until the message reaches its destination. That is, the routing algorithm can be split into two ``stages'': a series of ascending steps followed by a series of descending steps.

First we need to show that it is possible to implement an ascending step using only the information stored in the vertices of $G$.

\begin{lemma}
\label{lem:ascending1}
The representative of the parent of $h(u)$ is a neighbour of $u$ in $G$, and can be found using only the information in the routing table at $u$.
\end{lemma}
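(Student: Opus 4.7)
Set $c := \parent(h(u))$; the goal is to show that $w := r(c)$ is a neighbour of $u$ in $G$ and that $u$ can recognise this neighbour from its routing table alone. I would split this into an existence part and an identification part.

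For the existence part, note first that since $h(u)$ is by definition the shallowest ancestor of $u$ whose representative is $u$, the edge $c$--$h(u)$ is light; hence $h(u)$ is \emph{not} the heavy child of $c$, and $w = r(c)$ lies in the subtree of the heavy child of $c$, which differs from $h(u)$. Thus $u$ and $w$ lie in different children of $c$, so their least common ancestor in $T$ is exactly $c$. Let $\{S(a), S(b)\}$ be the WSPD pair that separates $u$ from $w$, with $u \in S(a)$ and $w \in S(b)$. Lemma~\ref{lem:wspd-subtree} forces $a$ and $b$ to be descendants of $c$, and (being disjoint from $S(b)$ and $S(a)$ respectively) strict descendants. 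Since $a$ is an ancestor of $u$ strictly below $c$, it lies on the heavy path from $u$ up to $h(u)$, so $r(a) = u$; symmetrically, $b$ is a strict descendant of $c$ lying on the heavy path from $w$ up to $c$, so $r(b) = w$. The spanner edge generated by $\{S(a), S(b)\}$ is therefore exactly $uw$.

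For the identification part, recall that the routing table at $u$ stores the label of $u$ together with, for each neighbour $v$, the interval label of $h(v)$; by Lemma~\ref{lem:rep-label} this interval has the form $[\text{label of } v,\ y_v]$. The interval of $h(v)$ contains the label of $u$ if and only if $h(v)$ is an ancestor of $u$ in $T$, which in turn happens exactly when $v = r(h(v))$ is the representative of some proper ancestor of $u$. Walking up the spine from $u$, these distinct representatives are $r(c),\ r(\parent(h(c))),\ r(\parent(h(\parent(h(c))))),\ \ldots$, and the corresponding $h(v)$-subtrees form a strictly nested chain all containing $u$. Hence $w = r(c)$ is singled out as the unique neighbour whose stored $h(v)$-interval contains the label of $u$ and is inclusion-minimal among such intervals (equivalently, has the smallest right endpoint, or the largest left endpoint). $u$ finds it by a single linear scan of its table.

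I expect the identification step to be the main obstacle: the existence step follows directly from Lemma~\ref{lem:wspd-subtree} once the LCA is pinned down to be $c$, but the identification requires ruling out that any ``sideways'' neighbour whose generating pair lives strictly below $h(u)$ has $h(v)$ ancestral to $u$, and then arguing that the inclusion-minimal candidate is $r(c)$ rather than a deeper representative. Both points reduce to the structural fact that the distinct representatives of ancestors of $u$ form the chain $u,\ r(c),\ r(\parent(h(c))),\ \ldots$ of strictly increasing $h$-depth along the spine.
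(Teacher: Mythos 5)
Your proposal is correct and follows essentially the same route as the paper: existence of the edge $u\,r(\parent(h(u)))$ via the WSPD pair separating the two points, Lemma~\ref{lem:wspd-subtree}, and the heavy-path representative property (Lemma~\ref{lem:representative-hierarchy}), followed by identifying the neighbour whose $h(\cdot)$-interval contains $u$'s label and is inclusion-minimal. Your identification step is in fact argued in more detail than in the paper, which simply asserts that the minimal such neighbour is the desired representative.
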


\begin{proof}
Let $z$ be the parent of $h(u)$, and let $v$ be the point that represents $z$, \ie $v = r(z)$. Consider the WSPD pair $\{S(c), S(d)\}$ that separates $u \in S(c)$ from $v \in S(d)$. Both $c$ and $d$ must be descendants of $z$, since $u$ and $v$ are both in the subtree rooted at $z$.

We know that $c$ is somewhere on the path from $u$ to $h(u)$, and since $u$ is the representative of every node on that path, it is the representative of $c$. Likewise, since $v$ is the representative of every node on the path from $v$ to $a$, it must be the representative of $d$. Therefore, there must be an edge between $u$ and $v$.

\begin{figure}
\centering
\includegraphics{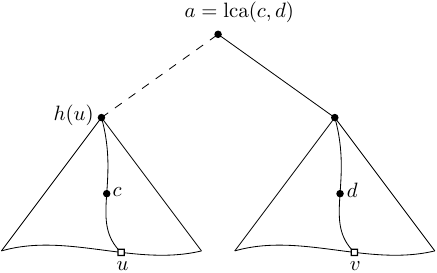}
\caption[Illustration of Lemma~\ref{lem:ascending1}]{Illustration of Lemma~\ref{lem:ascending1}. There must be an edge between $u$ and $v$ in the heavy path WSPD spanner.}
\label{fig:ascending1}
\end{figure}

To find that edge using the routing tables, iterate over all neighbours of $u$, finding the neighbour $w$ such that $u \in h(w)$ and $h(w)$ is as small as possible. That is, $h(w)$ is a descendant of $h(x)$ for any other neighbour $x$ with $u \in h(x)$. The point $w$ is the representative of the parent of $h(u)$.
\end{proof}

Another way of viewing Lemma~\ref{lem:ascending1} is that one application of the ascending step will move the message one light edge ``up'' in the quadtree.

The next two lemmas prove that, from $p$, the routing algorithm will repeatedly apply an ascending step until $r(a)$ is reached. This part of the algorithm is called the ascending stage. Let $u$ refer to the current vertex.

\begin{lemma}
\label{lem:ascending2}
Starting from $p$, repeated application of the ascending step will forward the message to $r(a)$.
\end{lemma}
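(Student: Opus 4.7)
The plan is to analyze the sequence of vertices $u_0 = p, u_1, u_2, \dots$ generated by repeated ascending steps, where $u_{i+1} = r(\parent(h(u_i)))$, and to show that some $u_j$ equals $r(a)$. The starting observation is that $p \in S(a)$ forces $a$ to be an ancestor of $p$ in $T$, so the ancestors of $p$ form a chain in $T$ that contains $a$.

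First I would show by induction on $i$ that $h(u_i)$ is an ancestor of $p$ and that its depth in $T$ strictly decreases with $i$. The base case $h(u_0) = h(p)$ is immediate. For the inductive step, $\parent(h(u_i))$ is a strict ancestor of $h(u_i)$ and hence of $p$; since $u_{i+1} = r(\parent(h(u_i)))$ lies on the same heavy path as $\parent(h(u_i))$, the top $h(u_{i+1})$ of that heavy path is an ancestor of $\parent(h(u_i))$, so it is an ancestor of $p$ lying strictly above $h(u_i)$.

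Because the depths of the $h(u_i)$ strictly decrease while all remain ancestors of $p$, there is a smallest index $j$ for which $h(u_j)$ is an ancestor of $a$ in $T$ (possibly equal to $a$). I would then prove $u_j = r(a)$ by a short case analysis. If $j = 0$, then $h(p)$ is an ancestor of $a$, so $a$ lies on the heavy path from $p$ to $h(p)$, giving $r(a) = p = u_0$ by Lemma~\ref{lem:representative-hierarchy}. Otherwise $h(u_{j-1})$ is a strict descendant of $a$, and $\parent(h(u_{j-1}))$ is either $a$ itself---yielding $u_j = r(a)$ immediately---or a proper descendant of $a$. In the latter case, the heavy path of $u_j$ contains $\parent(h(u_{j-1}))$ and extends up to $h(u_j)$, which is an ancestor of $a$; since $a$ lies strictly between these two nodes on the chain of ancestors of $p$, it must lie on this heavy path, and so $r(a) = u_j$ by Lemma~\ref{lem:representative-hierarchy} again.

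The main obstacle is ruling out that the ascending sequence ``jumps over'' $a$ in such a way that $a$ is left off the heavy path of the new current vertex. The resolution is that $\parent(h(u_{j-1}))$ sits strictly below $a$ while $h(u_j)$ sits weakly above $a$ along the root-to-$p$ chain, and both nodes lie on one common heavy path; this sandwich forces $a$ onto that heavy path as well, pinning its representative to $u_j$.
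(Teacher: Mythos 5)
Your proof is correct and takes essentially the same route as the paper's: the paper argues tersely that repeated ascending steps place the message at the representative of every node on the root-to-$p$ path in turn, so in particular at $r(a)$, since $a$ is an ancestor of $p$. Your induction on the tops $h(u_i)$ and the heavy-path ``sandwich'' argument is a rigorous spelling-out of exactly that claim, using Lemma~\ref{lem:representative-hierarchy} where the paper leaves the step implicit.
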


\begin{proof}
The node $a$ is an ancestor of $p$ in the quadtree. By Lemma~\ref{lem:ascending1}, repeatedly applying the ascending step will send the message to the representative of every node on the path from $p$ to the root in turn. Since $r(a)$ is one of these representatives, eventually the message reaches $r(a)$.
\end{proof}

\begin{lemma}
\label{lem:ascending3}
The ascending step is always applied if $u$ is a leaf in the subtree rooted at $a$, but not equal to $r(a)$.
\end{lemma}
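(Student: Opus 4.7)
The plan is to prove this lemma by contradiction, leveraging the uniqueness property~(b) of the WSPD definition. I would assume a descending step is applicable at $u$ and derive that $u = r(a)$, which contradicts the hypothesis that $u \ne r(a)$.

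First I would unpack what ``the descending step is applicable'' means. By the routing algorithm, there would have to exist a neighbour $v$ of $u$ such that the WSPD pair $\{S(a_v), S(b_v)\}$ that generated the edge $uv$ satisfies $q \in S(b_v)$. Since the heavy path WSPD spanner construction places this edge between $r(a_v)$ and $r(b_v)$, and the stored convention is that $v \in S(b_v)$, the other endpoint must be $u = r(a_v)$; in particular $u \in S(a_v)$.

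Then I would observe that the pair $\{S(a_v), S(b_v)\}$ separates the two points $u$ and $q$ in the WSPD sense. But the pair $\{S(a), S(b)\}$ does so as well: the hypothesis that $u$ is a leaf in the subtree rooted at $a$ gives $u \in S(a)$, and $q \in S(b)$ by the standing assumption that $\{S(a), S(b)\}$ is the pair separating $p$ from $q$. Property~(b) of the WSPD says there is exactly one pair that separates any two given points, so $\{S(a_v), S(b_v)\} = \{S(a), S(b)\}$. Because $u \in S(a_v) \cap S(a)$ while $u \notin S(b)$ (using disjointness of the partner sets), the correct identification is $S(a_v) = S(a)$. Finally, distinct nodes of a compressed quadtree have distinct leaf sets (internal nodes have at least two non-empty children, so the point set strictly shrinks along any downward step), hence $a_v = a$, and therefore $u = r(a_v) = r(a)$, the desired contradiction.

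I do not anticipate a real obstacle; the whole argument is essentially a one-step application of WSPD uniqueness once the right three objects have been lined up (the pair generating $uv$, the pair $\{S(a), S(b)\}$, and the representative-based edge construction). The only minor subtlety is the inference $S(a_v) = S(a) \Rightarrow a_v = a$, which rests on the structural remark about compressed quadtrees noted above.
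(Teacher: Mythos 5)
Your proof is correct and follows essentially the same approach as the paper's: assume a descending step fires at $u$, observe that the pair generating the edge to $v$ separates $u$ from $q$, and invoke uniqueness of WSPD pairs to identify it with $\{S(a), S(b)\}$, forcing $u = r(a)$. The only difference is that you explicitly justify the step from $S(a_v) = S(a)$ to $a_v = a$ (via the fact that internal nodes of a compressed quadtree have at least two children, so distinct nodes have distinct leaf sets); the paper leaves this implicit but relies on it just the same, so your added remark is a small clarification rather than a different route.
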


\begin{proof}
Consider the pair $\{S(c), S(d)\}$ separating $u$ from $q$. Assume that $r(c) = u$, so that a descending step is applied. Recall that $\{S(a), S(b)\}$ is the WSPD pair that separates $p$ from $q$. Since there can only be one pair in the WSPD separating $u$ from $q$, we must have $\{S(c), S(d)\} = \{S(a), S(b)\}$, and so $u = r(a)$. Therefore the only way for the descending step to be applied if $u$ is in the subtree of $a$ is for $u$ to be equal to $r(a)$.
\end{proof}

Once the message reaches $r(a)$, the descending step will be applied until the destination is reached. In fact, the path constructed by the descending steps from $r(a)$ to $q$ is identical to the path from $r(a)$ to $q$ constructed by Algorithm~\ref{alg:spanner-path}, as the following lemma shows.

\begin{lemma}
\label{lem:descending}
If $u$ is on the path constructed in Algorithm~\ref{alg:spanner-path} from $p$ to $q$ and $u \not\in S(a)$, then the descending step is applied and $u$ forwards the message to the next point on the spanner path constructed by Algorithm~\ref{alg:spanner-path}.
\end{lemma}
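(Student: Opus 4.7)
The plan is to unfold the recursive structure of Algorithm~\ref{alg:spanner-path} to pinpoint the next vertex after $u$ on the spanner path, then show that the descending step of the routing algorithm must select exactly that vertex.

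First, I would use the form of the path produced by Algorithm~\ref{alg:spanner-path}: starting from $p$ and $q$ with separating pair $\{S(a), S(b)\}$, the path is $p \to \cdots \to r(a) \to r(b) \to \cdots \to q$, where the prefix comes from the recursive call \Call{BuildPath}{$p, r(a)$} and lies entirely in $S(a)$, while the suffix comes from \Call{BuildPath}{$r(b), q$} and lies in $S(b)$. Since $S(a) \cap S(b) = \emptyset$, the hypothesis $u \notin S(a)$ forces $u$ to lie on the suffix; we may assume $u \ne q$.

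Next, let $\{S(c), S(d)\}$ be the WSPD pair separating $u$ from $q$, with $u \in S(c)$ and $q \in S(d)$. Applying Lemma~\ref{lem:wspd-subtree} at the current level of the recursion, both $c$ and $d$ lie inside the node $e$ for which the recursion was invoked (so that $u = r(e)$), and in particular $c$ is a descendant of $e$ whose subtree contains the leaf $u$. Hence $c$ lies on the heavy path from $e$ down to $u$, and Lemma~\ref{lem:representative-hierarchy} gives $r(c) = u$. Consequently \Call{BuildPath}{$u, r(c)$} returns the empty path, and the edge added at this level of the recursion is $u\,r(d)$, making $v := r(d)$ the next vertex on the spanner path.

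I then verify that the descending step selects this $v$. By construction, the edge $uv$ was generated by the pair $\{c, d\}$, and $u$'s routing table records $b_v = d$ together with the label of $v$. Since $q \in S(d) = S(b_v)$, the descending-step condition is satisfied and some such neighbour is chosen. To pin this down uniquely, I will argue that any neighbour $v'$ of $u$ whose stored pair $\{a_{v'}, b_{v'}\}$ satisfies $q \in S(b_{v'})$ must coincide with $v$: by the definition of the edge $uv'$ we have $u = r(a_{v'}) \in S(a_{v'})$, so $\{a_{v'}, b_{v'}\}$ is a WSPD pair separating $u$ from $q$; uniqueness of WSPD separation forces $\{a_{v'}, b_{v'}\} = \{c, d\}$, whence $v' = r(d) = v$.

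The substantive ingredient is the uniqueness argument in the previous paragraph, since the rest of the proof is essentially bookkeeping with Lemmas~\ref{lem:wspd-subtree} and \ref{lem:representative-hierarchy}; all the real content is the observation that the defining WSPD property collapses any apparent ambiguity in the descending step to a single admissible neighbour, and that this neighbour agrees with the edge chosen by \Call{BuildPath}.
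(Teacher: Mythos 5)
Your proof is correct and takes essentially the same route as the paper's much terser argument: both rest on the observation that the unique WSPD pair $\{S(c), S(d)\}$ separating $u$ from $q$ satisfies $u = r(c)$, so the next edge on the spanner path is $u\,r(d)$ and the descending step must forward to $r(d)$. You simply make explicit what the paper leaves implicit, namely the recursion structure via Lemmas~\ref{lem:wspd-subtree} and \ref{lem:representative-hierarchy} and the uniqueness of the admissible neighbour.
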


\begin{proof}
Consider the construction of the spanner path from Algorithm~\ref{alg:spanner-path}. Let $\{c, d\}$ be the WSPD pair separating $u$ from $q$. An edge from $r(c)$ to $r(d)$ is added to the path. In both the spanner construction and the descending step, $u$ is the representative of $c$.
\end{proof}

Putting the ascending and descending steps together will therefore successfully route a message from $p$ to $q$.

\begin{theorem}
\label{thm:routing-correct}
The heavy path routing algorithm will successfully route a message in a heavy path WSPD spanner, with information stored in each vertex as outlined in Section~\ref{sec:routing-tables}.
\end{theorem}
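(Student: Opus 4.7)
The plan is to combine the lemmas about the ascending and descending steps to show that the routing algorithm partitions cleanly into an ascending stage followed by a descending stage, and that each stage terminates at the right vertex. Fix the source $p$ and destination $q$, and let $\{S(a), S(b)\}$ be the unique WSPD pair that separates $p$ from $q$, with $p \in S(a)$ and $q \in S(b)$.

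First I would handle the ascending stage. By Lemma~\ref{lem:ascending3}, as long as the current vertex $u$ is a leaf in the subtree of $a$ and is not $r(a)$, no neighbour $v$ of $u$ satisfies $q \in S(b_v)$ (since such a neighbour would give a WSPD pair separating $u$ from $q$, which would then have to coincide with $\{S(a), S(b)\}$, forcing $u = r(a)$). Therefore the algorithm performs an ascending step at every such $u$. By Lemma~\ref{lem:ascending1}, each ascending step is well-defined using only the routing table at $u$ and moves the message to the representative of the parent of $h(u)$. Applying Lemma~\ref{lem:ascending2}, a finite sequence of ascending steps starting from $p$ forwards the message through the representatives of successive ancestors of $p$ until it reaches $r(a)$.

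Next I would show the descending stage is triggered at $r(a)$ and terminates at $q$. At $u = r(a)$, the edge to $r(b)$ exists in $G$ by construction (it is precisely the edge contributed by the pair $\{S(a), S(b)\}$), and for this neighbour $v = r(b)$ we have $q \in S(b_v) = S(b)$, so the descending step fires and forwards the message to $r(b)$. From $r(b)$ onward, Lemma~\ref{lem:descending} shows that at each step the current vertex $u$ (which lies on the path constructed by \textsc{BuildPath} and is outside $S(a)$) takes the descending step, forwarding the message to the next vertex on the \textsc{BuildPath} path. Iterating, the algorithm traces exactly the path from $r(b)$ to $q$ produced by Algorithm~\ref{alg:spanner-path}, which has finite length and ends at $q$.

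The main obstacle, though all groundwork is already in place, is justifying rigorously that the two stages cannot interleave: that once we leave $S(a)$ by crossing the edge $r(a)r(b)$ we never revert to an ascending step, and conversely that we never prematurely start descending while still below $a$. Both facts follow from the uniqueness of the WSPD pair separating any two points, which is exactly how Lemmas~\ref{lem:ascending3} and~\ref{lem:descending} are used. Combining the two stages, the algorithm succeeds in routing from $p$ to $q$, which proves the theorem.
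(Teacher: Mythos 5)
Your proof is correct and follows essentially the same decomposition as the paper's: Lemmas~\ref{lem:ascending1}--\ref{lem:ascending3} govern the ascending stage, and Lemma~\ref{lem:descending} governs the descending stage. You are slightly more careful than the paper in one respect: Lemma~\ref{lem:descending} is stated only for $u \not\in S(a)$, so it does not literally cover the forwarding step at $r(a)$ itself, and you handle this explicitly by observing that the edge $r(a)r(b)$ comes from the pair $\{S(a), S(b)\}$ with $q \in S(b)$, so the descending step fires there. This small extra step makes the transition between the two stages airtight where the paper's version tacitly assumes it.
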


\begin{proof}
Starting at $p$, the message will reach $r(a)$ by repeatedly applying the ascending step by Lemmas~\ref{lem:ascending2} and \ref{lem:ascending3}. Then Lemma~\ref{lem:descending} implies that the message will be forwarded along the path of Algorithm~\ref{alg:spanner-path} from $r(a)$ to $q$.
\end{proof}

\subsection{Analysis of the local routing algorithm}
\label{sec:routing-analysis}

In this section we will bound the routing ratio of the heavy path local routing algorithm. First we will bound the length of the path found in the descending stage, as it is much easier to do.

\begin{lemma}
\label{lem:descending-analysis}
Let $p$ and $q$ be points in a heavy path WSPD spanner. The length of the path constructed during the descending stage of the heavy path local routing algorithm is at most $(1 + 2/s + 1/(s-1))\abs{pq}$.
\end{lemma}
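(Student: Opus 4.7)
The key observation is that, by Lemma~\ref{lem:descending}, the descending stage of the local routing algorithm traces out exactly the ``right half'' of the path built by Algorithm~\ref{alg:spanner-path} on input $(p,q)$: namely, the edge $r(a)r(b)$ followed by the spanner-path from $r(b)$ to $q$. Bounding the total length therefore reuses essentially the same estimates used in the proof of Theorem~\ref{thm:spanner}, but restricted to this portion of the path.

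The plan is to split the descending path into two pieces and bound each separately. First, I would isolate the ``crossing'' edge $r(a)r(b)$, which is the edge added at the outermost level of the recursion in Algorithm~\ref{alg:spanner-path}. Since $\{S(a),S(b)\}$ is the WSPD pair separating $p$ from $q$, with $r(a)\in S(a)$ and $r(b)\in S(b)$, Lemma~\ref{lem:ws-pairs}(b) immediately yields $\abs{r(a)r(b)} \le (1+2/s)\abs{pq}$.

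Next I would bound the length of the $r(b)$-to-$q$ subpath constructed by the recursive call $\Call{BuildPath}{r(b), q}$. This call satisfies the hypotheses of Lemma~\ref{lem:span1} (since $r(b)$ is the representative of $b$, whose subtree contains $q$), so at most one edge is added per level of recursion. Applying Lemma~\ref{lem:span2}, which measures depth relative to the outer call $\Call{BuildPath}{p,q}$, the edges along this subpath appear at depths $k \ge 1$ and have length at most $(1/s)^k\abs{pq}$. Summing the geometric series yields
\[
    \sum_{k=1}^{M} \left(\frac{1}{s}\right)^k \abs{pq} \;\le\; \frac{1}{s-1}\abs{pq},
\]
so the total length of the descending stage is at most $(1 + 2/s + 1/(s-1))\abs{pq}$.

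The main subtlety is merely recognizing that the analysis of Theorem~\ref{thm:spanner} can be ``split in half'': the full spanning-ratio bound contributes $1/(s-1)$ from the $p$-to-$r(a)$ subpath and another $1/(s-1)$ from the $r(b)$-to-$q$ subpath, but since the ascending stage has already delivered the message to $r(a)$ by the time the descending stage begins, the descending stage only pays for the latter half, saving a $1/(s-1)$ term compared to the full spanning ratio. No new geometric estimates are needed beyond what is already established in Section~\ref{sec:hpw-spanners}.
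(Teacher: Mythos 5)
Your proposal is correct and follows the paper's argument: the paper likewise uses Lemma~\ref{lem:descending} to identify the descending path with the spanner path from $r(a)$ to $q$, and then bounds the edge $r(a)r(b)$ by $(1+2/s)\abs{pq}$ and the $r(b)$-to-$q$ subpath by $\frac{1}{s-1}\abs{pq}$, citing Theorem~\ref{thm:spanner} for these estimates rather than re-deriving them from Lemmas~\ref{lem:span1} and~\ref{lem:span2} as you do. The difference is only in presentation, not substance.
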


\begin{proof}
Lemma~\ref{lem:descending} implies that the descending stage finds a path from $r(a)$ to $q$, where $r(a)$ is the representative of the set containing $p$ in the WSPD pair $\{S(a), S(b)\}$ that separates $p$ from $q$.

From Theorem~\ref{thm:spanner}, we know that the edge $r(a)r(b)$ has length at most $(1 + 2/s)\abs{pq}$, and that the subpath constructed in Algorithm~\ref{alg:spanner-path} from $r(b)$ to $q$ has length at most $1/(s-1)\abs{pq}$. Lemma~\ref{lem:descending} implies that the path from $r(a)$ to $q$ found by the heavy path routing algorithm is the same as the spanner path, so its length is at most $1 + 2/s + 1/(s - 1)$ times $\abs{pq}$, because that is the length of the spanner path from $r(a)$ to $q$ as proven in Theorem~\ref{thm:spanner}.
\end{proof}

The only thing that remains to be bounded is the length of the path constructed from $p$ to $r(a)$ during the ascending stage.

\begin{lemma}
\label{lem:ascending4}
Let $p = p_1, p_2, \dots, p_k = r(a)$ be the points visited during the ascending stage. For any point $p_i$, the points $p_1$ through $p_{i-1}$ are all stored in the subtree rooted at $h(p_i)$.
\end{lemma}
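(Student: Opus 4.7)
The plan is to proceed by induction on $i$. The base case $i = 1$ is vacuous, since there are no points $p_1$ through $p_0$ to check. For the inductive step, assume that $p_1, \dots, p_{i-1}$ are all stored in the subtree rooted at $h(p_i)$, and aim to show the analogous statement with $i$ replaced by $i+1$.

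The key is to establish that $h(p_i)$ lies in the subtree rooted at $h(p_{i+1})$. By the definition of the ascending step (and by Lemma~\ref{lem:ascending1}), $p_{i+1}$ is the representative of $\parent(h(p_i))$, where $\parent(\cdot)$ denotes the parent in the compressed quadtree. Applying Lemma~\ref{lem:representative-hierarchy} to $p_{i+1}$, every node on the heavy path from $h(p_{i+1})$ down to $p_{i+1}$ has $p_{i+1}$ as its representative; in particular, $\parent(h(p_i))$ lies on this heavy path, so $h(p_{i+1})$ is an ancestor of $\parent(h(p_i))$. Consequently, $h(p_i)$ is a strict descendant of $h(p_{i+1})$, hence contained in its subtree.

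Next, I observe that $p_i$ itself lies in the subtree of $h(p_i)$, because $p_i = r(h(p_i))$ must be a leaf of that subtree. Combining this with the previous paragraph, $p_i$ lies in the subtree of $h(p_{i+1})$. Together with the inductive hypothesis (which places $p_1, \dots, p_{i-1}$ in the subtree of $h(p_i)$, itself inside the subtree of $h(p_{i+1})$), this gives that $p_1, \dots, p_i$ are all stored in the subtree rooted at $h(p_{i+1})$, completing the induction.

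The only step that requires real care is verifying that $h(p_{i+1})$ is an ancestor of $h(p_i)$ — the rest is just chasing the subtree containments. This hinges on correctly unwinding the two definitions: $h(u)$ as the shallowest node on $u$'s heavy path, and the ascending step as moving to the representative of $\parent(h(u))$. Once Lemma~\ref{lem:representative-hierarchy} is invoked at the right node, the ancestry is immediate.
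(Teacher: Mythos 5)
Your proof is correct and takes essentially the same route as the paper: induction on $i$, with the key observation that the ascending step makes $p_{i+1}$ the representative of the parent of $h(p_i)$, so $h(p_i)$ is a descendant of $h(p_{i+1})$ and the subtree containments chain together. The only difference is that you spell out, via Lemma~\ref{lem:representative-hierarchy} and the fact that $p_i = r(h(p_i))$, two steps the paper treats as immediate from the definitions.
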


\begin{proof}
For $i = 1$, the claim is vacuously true. Assume that the claim is true for some $i \ge 1$. The points $p_1, \dots, p_i$ are all stored in the subtree rooted at $h(p_i)$. By definition of the ascending step, $p_{i+1}$ is the representative of the parent of $h(p_i)$ in the compressed quadtree, so $h(p_i)$ is a descendant of $h(p_{i+1})$. Therefore, the points $p_1, \dots, p_i$ are all stored in the subtree rooted at $h(p_{i+1})$. Since $p_{i+1}$ is also stored in that subtree, the proof is complete.
\end{proof}

We can bound the length of the path constructed during the ascending stage using the previous lemma.

\begin{lemma}
\label{lem:ascending-length}
The length of the path constructed from $p$ to $r(a)$ during the ascending stage of the algorithm is no more than $(2/s)\abs{pq}$.
\end{lemma}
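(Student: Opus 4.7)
The plan is to bound each edge $\abs{p_ip_{i+1}}$ by the diagonal of a compressed quadtree cell and then exploit the halving property of Lemma~\ref{lem:quadtree-cell} to sum a geometric series. Set $z_i = \parent(h(p_i))$. By the definition of the ascending step, $p_{i+1} = r(z_i)$ lies in the subtree of $z_i$; and since $p_i$ lies in the subtree of $h(p_i)$, a child of $z_i$, the point $p_i$ is also in the subtree of $z_i$. Hence both endpoints are contained in $C_S(z_i)$, giving $\abs{p_ip_{i+1}} \le \ell(z_i)$.

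Next I would argue that $z_1, z_2, \dots, z_{k-1}$ form a strictly ascending chain in the quadtree: because $p_{i+1} = r(z_i)$, the node $h(p_{i+1})$ is a weak ancestor of $z_i$, so $z_{i+1} = \parent(h(p_{i+1}))$ is a strict ancestor of $z_i$. Iterating Lemma~\ref{lem:quadtree-cell} along this chain gives $\ell(z_i) \le \ell(z_{k-1})/2^{k-1-i}$, and a geometric sum yields
\begin{equation*}
    \sum_{i=1}^{k-1} \abs{p_ip_{i+1}} \le \sum_{i=1}^{k-1} \ell(z_i) \le 2\,\ell(z_{k-1}).
\end{equation*}

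The main obstacle is bounding $\ell(z_{k-1})$, which I plan to handle by showing $\ell(z_{k-1}) \le \ell(a)$, i.e.\ that $z_{k-1}$ is a (weak) descendant of $a$ in the compressed quadtree. For this I need an induction establishing that $h(p_i)$ is a strict descendant of $a$ for every $1 \le i \le k-1$. The base case uses $p_1 = p \in S(a)$ together with $p_1 \ne r(a) = p_k$: if $a$ lay on the heavy path through $p_1$, then $r(a) = p_1$, a contradiction, so $a$ must be a strict ancestor of $h(p_1)$. The inductive step repeats this heavy-path argument on $p_{i+1} = r(z_i) \in S(a)$, using $p_{i+1} \ne p_k$ whenever $i + 1 \le k - 1$. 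Once $z_{k-1}$ is a descendant of $a$, the cell containment $C_S(z_{k-1}) \subseteq C_S(a)$ forces $\ell(z_{k-1}) \le \ell(a)$. Finally, the termination test of Algorithm~\ref{alg:wspd} on the pair $\{S(a), S(b)\}$ gives $s \cdot \ell(a) \le d(C_S(a), C_S(b)) \le \abs{pq}$, so $\ell(a) \le \abs{pq}/s$, and combining with the geometric-series bound yields $\sum_{i=1}^{k-1}\abs{p_ip_{i+1}} \le 2\ell(a) \le (2/s)\abs{pq}$. The case $k = 1$ is trivial since the ascending path is empty.
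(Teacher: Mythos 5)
Your proposal is correct and takes essentially the same approach as the paper's own proof: bound each ascending-stage edge by the diagonal of an enclosing quadtree cell, use the halving property (Lemma~\ref{lem:quadtree-cell}) along an ascending chain to sum a geometric series to $2\ell(a)$, and then bound $\ell(a) \le \abs{pq}/s$ via the WSPD termination test. The paper organizes the same argument slightly differently (peeling off the last edge repeatedly via Lemma~\ref{lem:ascending4} rather than explicitly naming the chain $z_i = \parent(h(p_i))$), and the ``induction'' you invoke to show each $h(p_i)$ is a strict descendant of $a$ is really a direct consequence of Lemma~\ref{lem:ascending3}, but these are presentational differences, not different proofs.
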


\begin{proof}
Recall that $\ell(a)$ is the diagonal length of the hypercube $C_S(a)$ that contains $S(a)$. First, note that if $a$ is the parent of $b$ in the quadtree, then $\ell(a) \ge (1/2)\ell(b)$, by Lemma~\ref{lem:quadtree-cell}. The path from $p$ to $r(a)$ is contained in the subtree rooted at $a$, so the length of every edge on the path is at most $\ell(a)$. That path, minus the last edge, is contained in the subtree rooted at one of the children of $a$ by Lemma~\ref{lem:ascending4}, so the length of all but the last edge is at most $(1/2)\ell(a)$.

Repeating this argument shows that the length of the entire path is not more than $\ell(a) + (1/2)\ell(a) + (1/2)^2\ell(a) + \cdots = 2\ell(a)$. By the condition for checking well-separatedness in the WSPD construction algorithm, $\ell(a) \le (1/s)d(a, b) \le (1/s)\abs{pq}$. Therefore, the length of the path from $p$ to $r(a)$ is at most $(2/s)\abs{pq}$.
\end{proof}

\begin{theorem}
The routing ratio of the heavy path routing algorithm is at most $1 + 4/s + 1/(s - 1)$.
\end{theorem}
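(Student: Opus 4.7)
The plan is to combine the two preceding lemmas that bound the lengths of the two stages of the routing path. Recall from the definition in Section~\ref{ch:preliminaries} that the routing ratio is the smallest constant $t$ such that $d_R(p,q) \le t\abs{pq}$ for every pair of points $p,q$, so it suffices to exhibit this bound for an arbitrary pair.

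First I would fix arbitrary points $p$ and $q$ in the spanner and let $\{S(a), S(b)\}$ denote the unique WSPD pair that separates them, with $p \in S(a)$ and $q \in S(b)$. By Theorem~\ref{thm:routing-correct}, the routing algorithm produces a path that decomposes into an ascending portion from $p$ to $r(a)$, followed by a descending portion from $r(a)$ to $q$, so the total length $d_R(p,q)$ is the sum of the lengths of these two subpaths.

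Next I would invoke the two bounds already established: Lemma~\ref{lem:ascending-length} gives that the ascending portion has length at most $(2/s)\abs{pq}$, and Lemma~\ref{lem:descending-analysis} gives that the descending portion has length at most $(1 + 2/s + 1/(s-1))\abs{pq}$. Adding these yields
\begin{equation*}
    d_R(p,q) \;\le\; \frac{2}{s}\abs{pq} + \left(1 + \frac{2}{s} + \frac{1}{s-1}\right)\abs{pq} \;=\; \left(1 + \frac{4}{s} + \frac{1}{s-1}\right)\abs{pq},
\end{equation*}
which is the required bound. There is no real obstacle here; all the work has already been done in the preceding lemmas, and this theorem is essentially a bookkeeping step that sums the two contributions. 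The only subtlety worth noting is that $r(a)$ is well-defined and lies on the routing path precisely because of Lemmas~\ref{lem:ascending2} and \ref{lem:ascending3}, which justify the clean split of the analysis into the two stages.
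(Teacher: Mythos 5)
Your proof is correct and takes essentially the same approach as the paper: decompose the routing path into its ascending and descending stages, apply Lemma~\ref{lem:ascending-length} and Lemma~\ref{lem:descending-analysis} respectively, and sum the two bounds.
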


\begin{proof}
By Lemma~\ref{lem:ascending-length}, the total length of the path constructed during the ascending stage from $p$ to $r(a)$ is no more than $(2/s)\abs{pq}$. The path constructed during the descending step is equal to the spanner path from $r(a)$ to $q$. The length of the spanner path from $r(a)$ to $q$ is at most $(1 + 2/s + 1/(s-1))\abs{pq}$, as proven in Lemma~\ref{lem:descending-analysis}. Therefore the length of the path from $p$ to $q$ is
\begin{equation*}
    \frac{2}{s}\abs{pq} + \bigg(1 + \frac{2}{s} + \frac{1}{s-1}\bigg)\abs{pq} = \bigg(1 + \frac{4}{s} + \frac{1}{s-1}\bigg)\abs{pq}. \qedhere
\end{equation*}
\end{proof}

Similar to Lemma~\ref{lem:spanner-diameter}, we can bound the number of edges on the spanner path. In fact, the proof is almost identical to the proof of Lemma~\ref{lem:spanner-diameter}.

\begin{lemma}
\label{lem:routing-diameter}
Starting at a point $p$, a message can be forwarded to any other point $q$ after forwarding only $2\lg n + 1$ times.
\end{lemma}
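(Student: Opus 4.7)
The plan is to mirror the proof of Lemma~\ref{lem:spanner-diameter} by splitting the routed path into its two stages and bounding each one separately in terms of the number of light edges traversed in the compressed quadtree $T$.

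First I would handle the ascending stage. Let $p = p_1, p_2, \dots, p_k = r(a)$ be the sequence of points visited during this stage, where $\{S(a), S(b)\}$ is the WSPD pair separating $p$ from $q$. By the definition of the ascending step, $p_{i+1}$ is the representative of the parent of $h(p_i)$; in particular $p_{i+1} \ne p_i$ and the edge from $h(p_i)$ to its parent is a light edge (otherwise $h(p_i)$ would not be the shallowest node with representative $p_i$). Applying Lemma~\ref{lem:ascending4}, every previous point lies in the subtree of $h(p_i)$, so all of the nodes $h(p_1), h(p_2), \dots, h(p_k)$ lie on the root-to-leaf path ending at $p$, and they are strictly ascending along this path. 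Since the representatives $p_1, \dots, p_k$ are distinct, each step up the path crosses at least one light edge, and by Lemma~\ref{lem:light-paths} there are at most $\lg n$ light edges on any root-to-leaf path. Hence $k - 1 \le \lg n$, so the ascending stage uses at most $\lg n$ edges.

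Next I would handle the descending stage. By Lemma~\ref{lem:descending}, starting from $r(a)$ the routing algorithm traces out exactly the path produced by \Call{BuildPath}{$r(a), q$}, which consists of the edge $r(a)r(b)$ followed by the subpath from $r(b)$ to $q$. The analysis inside the proof of Lemma~\ref{lem:spanner-diameter} shows that the second subpath contains at most $\lg n$ edges, by the same light-edge argument applied to the root-to-leaf path ending at $q$ inside the subtree of $b$. Adding the single connecting edge $r(a)r(b)$ gives at most $\lg n + 1$ edges for the descending stage.

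Combining the two stages yields at most $\lg n + (\lg n + 1) = 2\lg n + 1$ forwarding steps from $p$ to $q$, as claimed. The main subtlety, and the only nontrivial verification, is that each ascending step really does correspond to crossing a genuine light edge in $T$; this is exactly what the definition of $h(\cdot)$ guarantees, so the argument from Lemma~\ref{lem:spanner-diameter} transfers essentially verbatim.
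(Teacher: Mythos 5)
Your proof is correct and follows essentially the same route as the paper: bound the ascending stage by charging each forward to a distinct light edge on the root-to-leaf path of $p$ (at most $\lg n$ by Lemma~\ref{lem:light-paths}), then use Lemma~\ref{lem:descending} to identify the descending stage with the spanner path from $r(a)$ to $q$ and bound it by $\lg n + 1$ via Lemma~\ref{lem:spanner-diameter}. The extra detail you supply --- that the edge from $h(p_i)$ to its parent is light by the definition of $h(\cdot)$ --- is exactly the justification the paper leaves implicit.
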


\begin{proof}
Let $\{S(a), S(b)\}$ be the WSPD pair that separates $p$ from $q$. Consider the subpath $p = p_0, p_1, \dots, p_k = r(a)$ found during the ascending stage. There will be one edge added to this path for each light edge on the path from $p$ to $a$ in the compressed quadtree. By Lemma~\ref{lem:light-paths}, this is at most $\lg n$.

Since the path constructed during the descending stage follows the spanner path, Lemma~\ref{lem:spanner-diameter} implies that the number of forwards during the descending stage is at most $\lg n + 1$. Therefore the number of forwards for the entire routing algorithm is at most $2\lg n + 1$.
\end{proof}

The results of this section are summarized in the following theorem.

\begin{theorem}
Let $G$ be a heavy path WSPD spanner for a set $S$ of points in $\R^d$, and let $p$ and $q$ be points of $S$. There exists a local, memoryless routing algorithm that can find a path from $p$ to $q$, such that:
\begin{itemize}
    \item The number of bits stored at each vertex $u$ is $(3\deg(u) + 1)\lg n$
    \item The length of the path found from $p$ to $q$ is at most $(1 + 4/s + 1/(s-1))d(p, q)$
    \item The number of edges on the path is at most $2\lg n + 1$
\end{itemize}
\end{theorem}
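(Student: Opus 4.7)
The plan is to observe that this theorem is essentially a summary of the three quantitative results established in Section~\ref{sec:routing-analysis}, together with the correctness statement of Theorem~\ref{thm:routing-correct}. So the proof is almost purely a matter of assembling previously proved facts, and no new argument is required; the main task is simply to identify which earlier statement gives each bullet.

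First I would invoke Theorem~\ref{thm:routing-correct} to guarantee that the heavy path routing algorithm described in Section~\ref{sec:routing-algorithm} is well-defined and successfully forwards the message from $p$ to $q$. This establishes that a routing algorithm with the desired properties actually exists; recall that the algorithm is memoryless (only the label of $q$ travels with the message) and $1$-local (every forwarding decision at $u$ consults only the routing table of $u$), which takes care of the qualitative part of the statement.

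Next, I would address each of the three bulleted quantitative claims in turn. The bit bound $(3\deg(u)+1)\lg n$ at a vertex $u$ is exactly the per-vertex count derived inside the proof of Lemma~\ref{lem:memory} (the label of $u$ contributes $\lg n$ bits and each neighbour contributes $3\lg n$ bits after the compression enabled by Lemma~\ref{lem:rep-label}). The length bound $(1 + 4/s + 1/(s-1))\abs{pq}$ is the content of the theorem proved immediately before this summary, whose proof combines Lemma~\ref{lem:ascending-length} with Lemma~\ref{lem:descending-analysis}. Finally, the hop bound $2\lg n + 1$ is exactly Lemma~\ref{lem:routing-diameter}.

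Since every piece is already in place, there is essentially no obstacle; the only minor issue is making sure that the routing ratio is phrased in terms of the Euclidean distance $d(p,q) = \abs{pq}$ rather than the spanner distance (which is fine because the bound of Lemma~\ref{lem:descending-analysis} and of Lemma~\ref{lem:ascending-length} are both already stated with respect to $\abs{pq}$). The proof therefore consists of a single sentence citing Theorem~\ref{thm:routing-correct}, Lemma~\ref{lem:memory}, the routing-ratio theorem, and Lemma~\ref{lem:routing-diameter}, and noting that the algorithm of Section~\ref{sec:routing-algorithm} is memoryless and local by construction.
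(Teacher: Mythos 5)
Your proposal is correct and matches the paper's approach exactly: the paper states this theorem explicitly as a summary of Section~\ref{sec:routing-analysis} and gives no separate proof, relying on exactly the references you cite (Theorem~\ref{thm:routing-correct} for correctness, Lemma~\ref{lem:memory} for the per-vertex bit count, the preceding routing-ratio theorem for the length bound, and Lemma~\ref{lem:routing-diameter} for the hop bound).
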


Note that the routing ratio is slightly larger than the spanning ratio. This is due to the fact that the path found by Algorithm~\ref{alg:spanner-path} and the path found by the routing algorithm differ in the ascending step. In the routing algorithm, the ascending step ``goes up'' the tree $T$ one light edge at a time. However, in Algorithm~\ref{alg:spanner-path}, the corresponding ascending steps may jump up several levels.

The routing algorithm cannot make these jumps since it has no way of knowing if one of the levels that it skips will be the one that contains the pair $\{S(a), S(b)\}$. In other words, it cannot recognise from a vertex $u$ if one of its neighbours is $r(a)$. It is only when we reach $r(a)$ and have access to its neighbours when we realize that we are at $r(a)$.

In the next section, we give a lower bound that shows our analysis of the routing ratio is fairly tight.

\subsection{A lower bound on the routing ratio}
\label{sec:lower-bound}

We can get a lower bound on the routing ratio by constructing a point set and explicitly computing the routing ratio for a pair of points. In this section we will give an example of a point set where the lower bound is close, but not equal to, the upper bound from the previous section.

Given any $s > 2$, we can construct a set of points in $\R$ that demonstrates this. Fix $k = \ceil{\lg(4s + 8)}$ and let $\alpha = 2^{-k}$. Now let $P = \{\alpha, 3\alpha, 5\alpha, 7\alpha, 1 - 7\alpha, 1 - 5\alpha, 1 - 3\alpha, 1 - \alpha\}$. Construct a compressed quadtree for $P$ with $[0, 1]$ as the hypercube that contains $P$, which is needed for the construction. The compressed quadtree for these points is shown in Figure~\ref{fig:quadtree-example}, with heavy and light edges marked.

Each internal node of the quadtree represents a one-dimensional hypercube, which is just an interval. It is easy to check that $C_S(a)$ and $C_S(b)$ are well-separated. However, for any pair of points in the subtree of $a$, the WSPD pair must be a pair of singletons. For example, we can check that $\{p_2\}$ and $C_S(d)$ are not well-separated since $d(\{p_2\}, C_S(d)) = 3\alpha/2 < 2s\alpha = s \cdot \ell(d)$. So the pair that separates $p_2$ from $p_3$ must be $\{\{p_2\}, \{p_3\}\}$.

The WSPD that results from this quadtree will have $13$ pairs. One of the pairs will be $\{S(a), S(b)\}$. The other $12$ will be pairs of singleton sets, each separating one pair of points in either the subtree of $a$ or the subtree of $b$. The combinatorial structure of the heavy path WSPD spanner for $P$ will be two $4$-cliques joined by a single edge.

Consider routing from $p_4$ to $p_5$. The pair separating these two points is $\{S(a), S(b)\}$. The ascending step forwards the message one light edge up in the compressed quadtree until $r(a) = p_1$ is reached. The first application of the ascending step sends the message to $p_3$. The second application sends it to $p_1$. From there, the descending stage is entered and the message is forwarded to $r(b) = p_8$, and then to $p_5$ since those two points are neighbours. Therefore, the path followed by the routing algorithm is $p_4, p_3, p_1, p_8, p_5$.

Now consider the spanner path from $p_4$ to $p_5$ as computed in Algorithm~\ref{alg:spanner-path}. First, an edge between $r(a) = p_1$ and $r(b) = p_8$ is added. Then an edge is added between the representatives of the pair that separates $p_2$ from $p_3$. Since that pair must be a singleton, an edge is added between those two points. The spanner path is $p_4, p_1, p_8, p_5$, meaning that the routing algorithm forwards the message to one extra vertex along the way.

\begin{figure}
\centering
\includegraphics{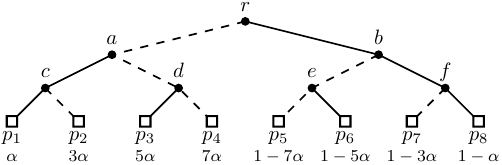}

\vspace{0.5em}

\begin{tabular}{ccc}
\hline
Node $n$ & $C(n)$ & $\ell(n)$ \\
\hline
$r$ & $[0, 1]$ & $1$ \\
$a$ & $[0, 8\alpha]$ & $8\alpha$ \\
$b$ & $[1 - 8\alpha, 1]$ & $8\alpha$ \\
$c$ & $[0, 4\alpha]$ & $4\alpha$ \\
$d$ & $[4\alpha, 8\alpha]$ & $4\alpha$ \\
$e$ & $[1 - 8\alpha, 1 - 4\alpha]$ & $4\alpha$ \\
$f$ & $[1 - 4\alpha, 1]$ & $4\alpha$ \\
\hline
\end{tabular}
\caption[Example point set for a lower bound]{The compressed quadtree for the point set $P$. The heavy edges are solid, and the light edges are dashed. The table gives the quadtree cell for each node, along with the length.}
\label{fig:quadtree-example}
\end{figure}

Since $p_3$ is between $p_4$ and $p_1$, the length of both the routing path and the spanning path will be the same, even though they visit different points. The routing ratio achieved by this example is therefore
\begin{equation}
    \label{eq:routing-ratio-example}
    R = \frac{\abs{p_4p_1} + \abs{p_1p_8} + \abs{p_8p_5}}{\abs{p_4p_5}}.
\end{equation}

The next theorem shows that this fraction can be made arbitrarily close to $1 + 4/s$. This will also imply a lower bound of $1 + 4/s$ on the spanning ratio, since the spanner path has the same length as the routing path in this example. This lower bound is nearly tight, the upper bound for the routing ratio is only $1/(s-1)$ more than the lower bound. The upper bound for the spanning ratio is even closer, the difference between the upper and lower bounds for the spanning ratio is $2/(s-1) - 2/s$.

\begin{theorem}
\label{thm:routing-ratio-lower-bound}
The routing ratio of the heavy path routing algorithm is at least $1 + 4/s$ in the worst case, and the spanning ratio of the heavy path WSPD spanner is at least $1 + 4/s$ in the worst case.
\end{theorem}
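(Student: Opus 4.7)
The plan is to use the point set $P$ and the source-destination pair $(p_4, p_5)$ introduced just before the theorem, then compute the ratio $R$ in Equation~\eqref{eq:routing-ratio-example} directly and show that $R \ge 1 + 4/s$.

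First, I would verify that the heavy path routing algorithm on $P$ produces the path $p_4 \to p_3 \to p_1 \to p_8 \to p_5$, while \textsc{BuildPath} produces the path $p_4 \to p_1 \to p_8 \to p_5$. This amounts to tracing the WSPD construction and the heavy path decomposition on the compressed quadtree of Figure~\ref{fig:quadtree-example}: check that $\{S(a), S(b)\}$ is the unique WSPD pair separating the two clusters (all intra-cluster pairs are pairs of singletons, as argued just after the figure), that the heavy path decomposition can be chosen so that $r(a) = p_1$, $r(d) = p_3$, $r(b) = p_8$, and $r(e) = p_6$, and that the two ascending steps send the message from $p_4$ to $r(d) = p_3$ and then to $r(a) = p_1$ by Lemma~\ref{lem:ascending1}. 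A symmetric analysis of the descending stage gives $p_1 \to p_8 \to p_5$. Because all points of $P$ are collinear and $p_3$ lies strictly between $p_4$ and $p_1$, the routing path and the \textsc{BuildPath} path have the same total Euclidean length, which is exactly the numerator of Equation~\eqref{eq:routing-ratio-example}.

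Second, substituting the coordinates of $P$ into Equation~\eqref{eq:routing-ratio-example} yields
\begin{equation*}
    R = \frac{6\alpha + (1 - 2\alpha) + 6\alpha}{1 - 14\alpha} = \frac{1 + 10\alpha}{1 - 14\alpha}.
\end{equation*}
A short manipulation shows that the inequality $R \ge 1 + 4/s$ is equivalent to $\alpha(24s + 56) \ge 4$, i.e., $\alpha \ge 1/(6s+14)$, and the choice $\alpha = 2^{-\ceil{\lg(4s+8)}}$ is made precisely large enough (while still small enough for $\{S(a), S(b)\}$ to be well separated) to meet this bound. The spanning-ratio lower bound follows with no extra work, because the spanner path returned by \textsc{BuildPath} between $p_4$ and $p_5$ has the same length $R \cdot \abs{p_4 p_5}$ as the routing path.

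The main obstacle is the bookkeeping of the first step: one must carefully confirm that for this specific $\alpha$ the WSPD construction does output $\{S(a), S(b)\}$ as a pair (rather than splitting it further), and that the resulting spanner consists of two $4$-cliques joined by the single edge $p_1 p_8$ so that the ascending stage has no shortcut and must in fact visit $p_3$ on the way up. Once the combinatorial structure is pinned down, the routing-ratio computation is an immediate substitution.
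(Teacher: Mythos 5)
There is a genuine gap in your second step. Your reduction is correct as far as it goes: for the unperturbed point set, $R = \frac{1+10\alpha}{1-14\alpha}$, and $R \ge 1 + 4/s$ is equivalent to $\alpha \ge \frac{1}{6s+14}$. But the choice $\alpha = 2^{-\ceil{\lg(4s+8)}}$ does \emph{not} guarantee this: because of the ceiling, all you can say in general is $\alpha > \frac{1}{2(4s+8)} = \frac{1}{8s+16}$, which is weaker than $\frac{1}{6s+14}$. Concretely, take $s = 2.1$: then $4s+8 = 16.4$, so $k = 5$ and $\alpha = 2^{-5}$, giving
\begin{equation*}
  R = \frac{1 + 10/32}{1 - 14/32} = \frac{1.3125}{0.5625} \approx 2.33 \;<\; 1 + \frac{4}{2.1} \approx 2.90 .
\end{equation*}
So the fixed configuration, evaluated directly as you propose, falls short of the claimed lower bound whenever the ceiling rounds $\lg(4s+8)$ up by nearly a full unit ($\alpha$ close to $\frac{1}{8s+16}$); it only works when $4s+8$ happens to be (close to) a power of two.

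This is exactly why the paper's proof does not stop at the unperturbed example: it perturbs the four extreme points, moving $p_4, p_5$ inward and $p_1, p_8$ outward by $\epsilon < \alpha$ (which leaves the quadtree, the WSPD, and hence the routing path combinatorially unchanged), obtaining $R = \frac{1+10\alpha+6\epsilon}{1-14\alpha-2\epsilon}$ and then letting $\epsilon \to \alpha$ to get a worst-case bound of $\frac{1+16\alpha}{1-16\alpha} = 1 + \frac{32\alpha}{1-16\alpha}$. That quantity is at least $1 + 4/s$ precisely under the guarantee the ceiling does provide, namely $2\alpha \ge \frac{1}{4s+8}$ (at $\alpha = \frac{1}{8s+16}$ it equals $1 + 4/s$ exactly). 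To repair your argument you must either add this perturbation/supremum step, or restrict to values of $s$ for which $\alpha \ge \frac{1}{6s+14}$ actually holds, which does not establish the theorem as a worst-case statement for all $s > 2$. The rest of your write-up (tracing the ascending/descending stages to get the path $p_4, p_3, p_1, p_8, p_5$, collinearity making routing and spanner paths equal in length, and the spanning-ratio bound following for free) matches the paper's setup and is fine.
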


\begin{proof}
Let $P$ be the same set of points from the previous example. Notice that we can perturb the points of $P$ slightly without changing the structure of the compressed quadtree or the WSPD. As long as no point is shifted by a distance greater than $\alpha$, the structure will remain the same.

Shifting $p_4$ and $p_5$ inwards and shifting $p_1$ and $p_8$ outwards will simultaneously increase the length of the routing path, and decrease the distance between $p_4$ and $p_5$. This leads to an increased routing ratio.

More precisely, let $0 \le \epsilon < \alpha$ be the amount to shift by. Now change the set $P$ so that $p_1 = \alpha - \epsilon$, $p_4 = 7\alpha + \epsilon$, $p_5 = 1 - 7\alpha - \epsilon$, and $p_8 = 1 - \alpha + \epsilon$.

Now, we can plug values into Equation~\ref{eq:routing-ratio-example}. The routing ratio achieved by this example is
\begin{align*}
R &= \frac{\abs{p_4p_1} + \abs{p_1p_8} + \abs{p_8p_5}}{\abs{p_4p_5}} \\
\begin{split}
=\frac{[(7\alpha + \epsilon) - (\alpha - \epsilon)] + [(1 - \alpha + \epsilon) - (\alpha - \epsilon)]}{[(1 - 7\alpha - \epsilon) - (7\alpha + \epsilon)]} \\
\qquad\qquad\qquad + \frac{[(1 - \alpha + \epsilon) - (1 - 7\alpha - \epsilon)]}{[(1 - 7\alpha - \epsilon) - (7\alpha + \epsilon)]}
\end{split} \\
&= \frac{1 + 10\alpha + 6\epsilon}{1 - 14\alpha - 2\epsilon}.
\end{align*}
The true worst-case routing ratio of the algorithm, $R^*$, is at least $R$ for every value of $\epsilon < \alpha$. In other words, $R^* \ge \sup\{R : 0 \le \epsilon < \alpha\}$. But since $R$ is increasing as a function of $\epsilon$, this can be computed by setting $\epsilon = \alpha$. By doing so we get
\begin{equation*}
    R^* \ge \frac{1 + 16\alpha}{1 - 16\alpha} = 1 + \frac{32\alpha}{1 - 16\alpha}.
\end{equation*}
Finally, due to the way $\alpha$ was chosen we have $2\alpha > (4s + 8)^{-1}$, so
\begin{align*}
    R^* &\ge 1 + \frac{16(4s + 8)^{-1}}{1 - 8(4s + 8)^{-1}} = 1 + \frac{4}{s}
    \qedhere
\end{align*}
\end{proof}

\subsection{Comparing the spanning and routing ratios}
\label{sec:error-analysis}

One more thing we can do is quantify the difference between the spanning and routing ratios. Let $R = 1 + 4/s + 1/(s - 1)$ and $S = 1 + 2/s + 2/(s - 1)$ be the upper bounds on the routing and spanning ratios, respectively. We can see that $R > S$ whenever $s > 2$.

The routing algorithm can be thought of as an approximate shortest path algorithm that only uses local information. With that view we can quantify the difference in terms of absolute and relative error. The absolute error is defined to be $\Delta = R - S = 2/s - 1/(s-1)$ and the relative error is defined to be $\delta = \Delta/S = R/S - 1 = (s-2)/(s^2+3s-2)$. Since $R > S$ for all $s > 2$, both of these errors will be positive.

We can calculate an upper bound for these two quantities. Let $d_R(p, q)$ be the length of the routing path and $d_S(p, q)$ be the length of the spanner path between two given points in a heavy path WSPD spanner. If we have some upper bound $\Delta \le \epsilon$ on the absolute error, then we know that $d_R(p, q) \le d_S(p, q) + \epsilon\abs{pq}$. Likewise, if we have some bound $\delta \le \eta$ on the relative error, then $d_R(p, q) \le (1 + \eta)d_S(p, q)$. The following two lemmas compute actual values for these bounds on the errors.

\begin{lemma}
The absolute error $\Delta(s)$ is bounded by $0.1716$, meaning in any heavy path WSPD spanner with separation ratio $s > 2$ we have $d_R(p, q) \le d_S(p, q) + 0.1716\abs{pq}$ for all points $p$ and $q$.
\end{lemma}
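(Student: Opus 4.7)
The plan is to show that the function $\Delta(s) = 2/s - 1/(s-1)$ attains its maximum on $(2, \infty)$ at some interior critical point, and then to evaluate $\Delta$ there and verify the numerical bound. Since $\Delta$ is continuous on $(2, \infty)$ with $\Delta(s) \to 0$ as $s \to 2^+$ (both terms tend to $1$) and $\Delta(s) \to 0$ as $s \to \infty$, and $\Delta(s) > 0$ for $s$ slightly larger than $2$ (for instance $\Delta(3) = 1/6$), the maximum is achieved at an interior critical point rather than at the boundary.

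To locate the critical point, I would compute
\begin{equation*}
    \Delta'(s) = -\frac{2}{s^2} + \frac{1}{(s-1)^2}
\end{equation*}
and set $\Delta'(s) = 0$. This yields $s^2 = 2(s-1)^2$, and taking the positive square root (since $s > 2$ implies $s$ and $s-1$ are both positive) gives $s = \sqrt{2}(s-1)$, i.e.\ $s = \sqrt{2}/(\sqrt{2}-1) = 2 + \sqrt{2}$. This is the unique critical point in $(2, \infty)$, and since the boundary values are $0$ while $\Delta(2+\sqrt{2}) > 0$, it must be the global maximum.

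Next I would evaluate $\Delta$ at the critical point. Rationalizing,
\begin{equation*}
    \frac{2}{2+\sqrt{2}} = \frac{2(2-\sqrt{2})}{(2+\sqrt{2})(2-\sqrt{2})} = \frac{2(2-\sqrt{2})}{2} = 2 - \sqrt{2},
\end{equation*}
and similarly $1/(1+\sqrt{2}) = \sqrt{2} - 1$. Therefore
\begin{equation*}
    \Delta(2+\sqrt{2}) = (2-\sqrt{2}) - (\sqrt{2}-1) = 3 - 2\sqrt{2}.
\end{equation*}
A numerical estimate (using $\sqrt{2} < 1.41422$) gives $3 - 2\sqrt{2} < 3 - 2.82842 = 0.17158 < 0.1716$, which finishes the bound. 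The conclusion $d_R(p,q) \le d_S(p,q) + 0.1716\abs{pq}$ then follows directly from the definitions of $R$, $S$, and $\Delta$ together with Theorem~\ref{thm:spanner} and the routing ratio bound.

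There is no real obstacle here; the argument is a one-variable calculus optimization. The only minor care needed is in choosing the correct root when solving $s^2 = 2(s-1)^2$, and in ensuring the numerical value $3 - 2\sqrt{2}$ is bounded above by $0.1716$ with enough precision.
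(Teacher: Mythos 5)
Your proof is correct and follows essentially the same approach as the paper: differentiate $\Delta(s) = 2/s - 1/(s-1)$, find the unique critical point $s^* = 2 + \sqrt{2}$, and evaluate $\Delta(s^*) = 3 - 2\sqrt{2} \approx 0.1716$. The additional boundary-behavior check (that $\Delta \to 0$ as $s \to 2^+$ and $s \to \infty$) is a welcome bit of extra care that the paper leaves implicit.
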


\begin{proof}
To bound $\Delta(s)$ we compute its maximum value. We first differentiate with respect to $s$,
\begin{equation*}
    \Delta'(s) = \frac{1}{(s - 1)^2} - \frac{2}{s^2}.
\end{equation*}
We can then solve the equation $\Delta'(s) = 0$ for $s$ to find that $\Delta$ is maximized at $s^* = 2 + \sqrt{2}$. The maximum value attained is $\Delta(s^*) = 3 - 2\sqrt{2} \approx 0.1716$.
\end{proof}

\begin{lemma}
The relative error $\delta(s) = (s-2)/(s^2+3s-2)$ is bounded by $0.0790$, meaning in any heavy path WSPD spanner we have $d_R(p, q) \le 1.0790\cdot d_S(p, q)$ for all points $p$ and $q$.
\end{lemma}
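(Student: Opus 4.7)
The plan is to mirror the proof strategy from the previous lemma: treat $\delta$ as a smooth function of $s$ on $(2,\infty)$, locate its maximum by setting $\delta'(s) = 0$, and evaluate $\delta$ at the critical point to get an explicit closed-form bound that we can compare against $0.0790$.

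First I would differentiate $\delta(s) = (s-2)/(s^2+3s-2)$ using the quotient rule. The denominator of $\delta'(s)$ is $(s^2+3s-2)^2$, which is positive for $s > 2$, so only the numerator matters when locating critical points. Expanding $(s^2+3s-2) - (s-2)(2s+3)$ and simplifying yields a quadratic of the form $-s^2 + 4s + 4$. Setting this equal to zero gives the quadratic $s^2 - 4s - 4 = 0$, whose roots are $s = 2 \pm 2\sqrt{2}$. Only $s^* = 2 + 2\sqrt{2}$ lies in the domain $s > 2$. A sign check on $-s^2 + 4s + 4$ just to the left and right of $s^*$ shows that $\delta$ is increasing on $(2, s^*)$ and decreasing on $(s^*, \infty)$, so $s^*$ is indeed a global maximum on the relevant domain. (Boundary behavior: $\delta(s) \to 0$ as $s \to 2^+$ and as $s \to \infty$, which is consistent.)

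Next I would substitute $s^* = 2 + 2\sqrt{2}$ back into $\delta$. In the numerator we get $s^* - 2 = 2\sqrt{2}$. For the denominator, $(s^*)^2 + 3s^* - 2 = (4 + 8\sqrt{2} + 8) + (6 + 6\sqrt{2}) - 2 = 16 + 14\sqrt{2}$. Thus
\begin{equation*}
\delta(s^*) = \frac{2\sqrt{2}}{16 + 14\sqrt{2}} = \frac{\sqrt{2}}{8 + 7\sqrt{2}} = \frac{7 - 4\sqrt{2}}{17},
\end{equation*}
where the last step is obtained by rationalizing the denominator. A numerical check gives $\delta(s^*) \approx 0.0790$, which is the claimed bound. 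Feeding this into the definition of relative error yields $d_R(p,q) \le (1 + \delta(s^*))\, d_S(p,q) \le 1.0790 \cdot d_S(p,q)$ for all $p,q$.

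There is no real obstacle here beyond routine calculus: the main thing to be careful about is the algebraic simplification of the derivative's numerator and the rationalization step, since a sign error would shift the optimal $s^*$ out of the admissible range. The structure of the argument is exactly parallel to the preceding lemma bounding the absolute error, which is reassuring; only the specific optimizer and optimal value change.
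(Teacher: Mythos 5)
Your proposal is correct and follows essentially the same route as the paper: differentiate $\delta$, find the unique admissible critical point $s^* = 2 + 2\sqrt{2}$, and evaluate $\delta(s^*) = (7-4\sqrt{2})/17 \approx 0.0790$. The extra details you supply (sign check around $s^*$, boundary limits, explicit rationalization) only make the paper's calculus argument more complete.
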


\begin{proof}
To bound the relative error, we again follow the same procedure. First, compute the derivative of $\delta$ with respect to $s$
\begin{equation*}
    \delta'(s) = -\frac{s^2 - 4s - 4}{s^4 + 6s^3 + 5s^2 - 12s + 4}.
\end{equation*}
We then solve the equation $\delta'(s) = 0$ for $s$ to find that $\delta$ is maximized at $s^* = 2 + 2\sqrt{2}$. The maximum value attained is $\delta(s^*) = (7 - 4\sqrt{2})/17 \approx 0.0790$.
\end{proof}
\section{Doubling spaces}
\label{ch:doubling-spaces}

The local routing algorithm of the previous section relies on very few properties of Euclidean space. The only information that needs to be stored in either the routing tables or the message header are labels of nodes in the compressed quadtree. As such, in this section we extend our results to the class of metric spaces with bounded doubling dimension. In the following section, the local routing algorithm will be generalized to this new setting.

\subsection{Metric spaces}
\label{sec:metric-spaces}

We provide a summary of some geometric notions as described in Har-Peled and Mendel \cite{Har-Peled2006}. Let $X$ be a set. A metric on $X$ is a function that defines the distance between elements in $X$. Formally, a metric is a function $d : X \times X \to \R$ that satisfies the following conditions for all $x$, $y$, and $z$ in $X$:
\begin{itemize}
    \item $d(x, y) \ge 0$, with equality if and only if $x = y$,
    \item $d(x, y) = d(y, x)$; and
    \item $d(x, y) \le d(x, z) + d(z, y)$.
\end{itemize}
The third condition is called the triangle inequality.

Perhaps the prototypical example of a metric space is $\R^d$, with the usual Euclidean metric
\begin{equation*}
    d(x, y) = \norm{y - x} = \sqrt{{\textstyle\sum_{i=1}^d (y_i - x_i)^2}}.
\end{equation*}
Another example is $\R^d$ with the taxicab metric: $d(x, y) = \sum_{i=1}^d \abs{y_i - x_i}$.

We want to consider metric spaces that have similar properties to Euclidean space. One such type of metric space is a \textit{doubling space}. Before we define a doubling space, we need a few basic definitions.

A (closed) ball in a metric space $X$ is the set of all points whose distance to a given point $x \in X$ is at most $r$. The point $x$ is called the centre of the ball and $r$ is called the radius. The closed ball centred at $x$ with radius $r$, written $B(x, r)$, is the set $\{y \in X : d(x, y) \le r\}$. Replacing the inequality with a strict one in the definition gives an open ball. 

Let $A$ be a subset of $X$. We call $A$ an $\epsilon$-cover of $X$ if for any point $x$ in $X$, there is some point $y$ in $A$ with $d(x, y) \le \epsilon$. In other words, the union of all closed balls with radius $\epsilon$ centred at the points of $A$ contains $X$.

Let $B$ be a subset of $X$. We call $B$ an $\epsilon$-packing of $X$ if the distance between any two points of $B$ is at least $\epsilon$. Sometimes, $B$ is called $\epsilon$-discernible. This is equivalent to the condition that the balls of radius $\epsilon/2$ centred at the points of $B$ are pairwise disjoint.

A set $C$ is called an $\epsilon$-net of $X$ if it is both an $\epsilon$-cover and an $\epsilon$-packing of $X$. See Figure~\ref{fig:epsilon-net} for an illustration.

\begin{figure}
\centering
\includegraphics{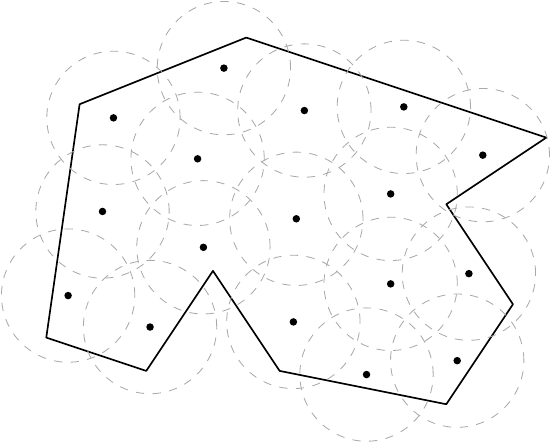}
\caption[An $\epsilon$-net]{An $\epsilon$-net of a simple polygon. The points of the net are drawn in black. A disk of radius $\epsilon$ has been drawn around each black point using dashed grey lines. Every point in the polygon lies in at least one of the disks, and none of the black points lie in a disk centred at another black point.}
\label{fig:epsilon-net}
\end{figure}

A metric space is called doubling if there exists a constant $K \ge 0$ such that any ball of radius $r$ can be covered by $K$ balls of radius $r/2$. Let $K^*$ be the smallest such constant. We call $K^*$ the doubling constant of $X$, and $\lg K^*$ the doubling dimension of $X$.

A metric space with bounded doubling dimension can also be called a doubling space. We will use these two terms interchangeably throughout the rest of the article.

\subsection{Heavy path WSPD spanners in doubling spaces}
\label{sec:hpw-spanners-doubling}

The construction of a heavy path WSPD spanner can be generalized to doubling spaces. The procedure remains largely the same, but a few key changes are necessary. First we need a replacement for the compressed quadtree, because there is no general way to define a hypercube in a doubling space. We will use the \emph{net tree} \cite{Har-Peled2006}. From there, the construction of a heavy path WSPD spanner is largely unchanged.

Our definition of well-separated used the notion of Euclidean distance. We simply replace that with any metric to get a more general definition of well-separated. Explicitly, let $S$ and $T$ be two point sets in a metric space $(X, d)$. We say that $S$ and $T$ are well-separated with respect to $s > 0$ if $d(S, T) \ge s \cdot \max\{\diam S, \diam T\}$, where $d(S, T) = \min\{d(p, q) : p \in S, q \in T\}$ and $\diam S$ is $\max\{d(x,y) : x, y \in X\}$. The number $s$ is called the separation ratio.

We can also generalize Lemma~\ref{lem:ws-pairs} to general metric spaces by simply replacing the Euclidean distance with any metric. The proof is identical to that of Lemma~\ref{lem:ws-pairs}, so we can simply restate the result here.

\begin{lemma}
\label{lem:general-ws-pairs}
Let $S$ and $T$ be well-separated point sets in some metric space $(X, d)$ with respect to $s > 0$. Then for any points $p, p' \in S$ and $q, q' \in T$
\begin{enumerate*}[label={(\alph*)},before=\unskip{: },itemjoin={{; }},itemjoin*={{; and }},after={.}]
    \item $d(p, p') \le (1/s)d(p, q)$
    \item $d(p', q') \le (1 + 2/s)d(p, q)$
\end{enumerate*}
\end{lemma}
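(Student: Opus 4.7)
The plan is to observe that the original proof of Lemma~\ref{lem:ws-pairs} never used any property specific to the Euclidean norm: it invoked only the definition of diameter, the definition of well-separated sets, the definition of inter-set distance as a minimum, and the triangle inequality. All four of these notions are available verbatim in an arbitrary metric space $(X, d)$, since the generalized notion of well-separated in this section was deliberately defined in exactly the same way, with $\abs{\cdot\cdot}$ replaced by $d(\cdot, \cdot)$. So the proof should proceed by simply transcribing the Euclidean argument with this textual substitution.

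For part (a), I would chain three inequalities. First, since $p, p' \in S$, the definition of diameter gives $d(p, p') \le \diam S$. Next, because $S$ and $T$ are $s$-well-separated, $\max\{\diam S, \diam T\} \le (1/s)\, d(S, T)$, and in particular $\diam S \le (1/s)\, d(S, T)$. Finally, since $p \in S$ and $q \in T$, the definition of $d(S, T)$ as a minimum yields $d(S, T) \le d(p, q)$. Composing these three steps gives $d(p, p') \le (1/s)\, d(p, q)$.

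For part (b), the plan is to apply the triangle inequality twice to get
\begin{equation*}
d(p', q') \le d(p', p) + d(p, q) + d(q, q'),
\end{equation*}
and then bound $d(p', p)$ and $d(q, q')$ using part (a). Part (a) applied to $S$ gives $d(p', p) \le (1/s)\, d(p, q)$, and part (a) applied symmetrically to $T$ (with the roles of the two sets swapped, using that $q, q' \in T$ and $p \in S$) gives $d(q, q') \le (1/s)\, d(q, p) = (1/s)\, d(p, q)$, where we use symmetry of the metric. Adding the three terms yields $d(p', q') \le (1 + 2/s)\, d(p, q)$.

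There is no real obstacle here; the only mild subtlety is making sure the symmetric application of (a) to $T$ is justified, but this is immediate from the symmetry of the definition of well-separated in $S$ and $T$ and the symmetry axiom of the metric. Since the argument is essentially a verbatim copy of Lemma~\ref{lem:ws-pairs} with $\abs{\cdot\cdot}$ replaced by $d(\cdot, \cdot)$, the statement can indeed just be recorded without a new proof, as the authors do.
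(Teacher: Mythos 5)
Your proposal is correct and matches the paper's intent exactly: the paper states that the proof is identical to that of Lemma~\ref{lem:ws-pairs} and does not repeat it, and your transcription of that Euclidean argument with $\abs{\cdot\,\cdot}$ replaced by $d(\cdot,\cdot)$ is precisely what is meant. The observation that only the definitions of diameter, well-separatedness, the inter-set distance as a minimum, and the triangle inequality are used is the right justification for why the argument carries over verbatim.
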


We cannot compute a compressed quadtree for a set of points in an arbitrary metric space, so we will require some other data structure that has similar properties. In \cite{Har-Peled2006}, Har-Peled and Mendel describe how to construct a tree that they call a \emph{net tree} that has all the desired properties. They also show how to construct a WSPD and a WSPD spanner from that tree. The routing algorithm will be generalized to work on that spanner.

Let $S$ be a set of $n$ points in a metric space of doubling dimension $\lambda$. Like the compressed quadtree, the net tree is a data structure that stores the points of $S$ in its leaves. An internal node $a$ corresponds to the set $S(a)$ of points stored in the leaves of the subtree of $a$. The net tree also has similar properties to the compressed quadtree that will make constructing a WSPD of linear size and local routing possible. 

In particular, the net tree has the following properties as outlined in \cite{Har-Peled2006}. Let $\tau \ge 11$ be a constant chosen before constructing the net tree. Every non-leaf node $a$ in $T$ has a level $\level(a) \in \Z \cup \{-\infty\}$ with the property that the level of a node is less than the level of its parent. The level of a leaf is defined to be $-\infty$. Additionally, every node has a representative $r(a)$ with the property that if $p$ is the representative of an internal node $a$, then $a$ has some child $b$ for which $r(a) = r(b)$. The parent of a node $a$ is denoted $p(a)$, and the set of points stored in the subtree rooted at $a$ is denoted $S(a).$

There are two more properties of the net tree that make it appropriate for our application:
\begin{enumerate}
    \item{} [\textsl{Covering property}] For every node $a$ of $T$,
    \begin{equation*}
        \label{eq:net-tree-covering}
        S(a) \subset B\Big(r(a), \frac{2\tau}{\tau-1}\tau^{\level(a)}\Big).
    \end{equation*}
    \item{} [\textsl{Packing property}] For every non-root node $a$ of $T$,
    \begin{equation*}
        \label{eq:net-tree-packing}
        B\Big(r(a), \frac{\tau - 5}{2(\tau-1)}\tau^{\level(p(a)) - 1}\Big) \subset S(a).
    \end{equation*}
\end{enumerate}

Importantly, the covering property implies that, for every node $a$ in a net tree, the diameter of $S(a)$ is at most $\frac{4\tau}{\tau-1}\tau^{\level(a)}$.

The details of the construction of a net tree can be found in Har-Peled and Mendel \cite{Har-Peled2006}. We summarize their bound on the running time in the following theorem.

\begin{theorem}[{\autocite[Theorem~3.1]{Har-Peled2006}}]
Given a set of $n$ points in a metric space of doubling dimension $\lambda$, a net tree can be computed in $O(2^\lambda n \log n)$ expected time.
\end{theorem}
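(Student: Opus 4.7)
Since this theorem is imported as a black box from Har-Peled and Mendel, my plan is not to reprove it in full but to sketch the shape of the construction and indicate where the two factors in the running time come from. The core object is a hierarchy of $\tau^i$-nets: for $i$ ranging from some maximum level down to a minimum, build sets $N_i$ with $N_{i+1} \subseteq N_i$ such that $N_i$ is a $\tau^i$-net of the original point set in the sense defined in Section~\ref{sec:metric-spaces} (i.e.\ pairwise separated by at least $\tau^i$ and covering the point set at distance $\tau^i$). One natural way to produce these greedily is to start from $N_0 = S$ and, moving up, pick a maximal subset of $N_{i-1}$ whose pairwise distances exceed $\tau^i$.

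From the nested nets I would assemble the tree by making each $x \in N_i$ a node of level $i$, letting its parent be the (closest) net point of $N_{i+1}$ covering it, and propagating the representative $r(a)$ from a leaf up along the chain of nodes sharing that representative. With $\tau \ge 11$, the covering and packing properties stated just before the theorem reduce to standard ball arithmetic from the definition of a $\tau^i$-net. The doubling dimension enters through a counting argument: the ball $B(r(a), \frac{2\tau}{\tau-1}\tau^{\level(a)})$ can be covered by $2^{O(\lambda)}$ balls of radius $\frac{\tau-5}{2(\tau-1)}\tau^{\level(a)-1}$, which bounds the number of children a node can have and hence the local work done per node.

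For the running time, two obstacles need to be overcome. The first is that the naive construction has height proportional to $\log_\tau(\diam S / m)$, where $m$ is the minimum interpoint distance, which can be arbitrarily large relative to $n$; the fix, as in the compressed quadtree case, is to contract maximal chains of nodes with a single relevant child, making the effective height $O(n)$ in the worst case and $O(\log n)$ in expectation under a suitable randomization. The second, and the genuinely delicate step, is answering the nearest-net-point queries needed when attaching each point to its parent level by level; Har-Peled and Mendel handle this with a randomized incremental construction that maintains auxiliary ring data structures around each surviving net point and performs each insertion in expected $2^{O(\lambda)} \log n$ time. Summed over the $n$ points this yields the $O(2^\lambda n \log n)$ expected bound. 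The main obstacle in a self-contained proof would be exactly this nearest-neighbour subroutine and the probabilistic amortization of its cost, so my proposal is to cite their construction for the algorithmic part and only verify, from the definitions above, that the produced tree satisfies the level, representative, covering, and packing properties used in the rest of this section.
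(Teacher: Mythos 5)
The paper imports this theorem directly from Har-Peled and Mendel (\cite{Har-Peled2006}) as a black box and gives no proof of its own, which is exactly the stance you adopt in the final sentence of your proposal. Your sketch of the construction --- a nested hierarchy of $\tau^i$-nets assembled into a tree, compression of single-child chains to control height, a doubling-dimension counting argument bounding fan-out by $2^{O(\lambda)}$, and a randomized incremental insertion with auxiliary ring-type data structures giving expected $2^{O(\lambda)}\log n$ per point --- is a faithful outline of how the cited result is actually obtained, so citing it and verifying only the covering, packing, level, and representative properties (which is all the rest of the section uses) is the right call.
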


The following lemma is analogous to Lemma~\ref{lem:quadtree-cell}, where the diameter of a compressed quadtree node was bounded by a constant fraction of the diameter of its parent. This lemma will be used in a similar way, to bound the length of the ascending stage of the routing algorithm.

\begin{lemma}
\label{lem:nettreediameter}
Let $T$ be a net-tree, and let $a$ be any node of the net tree. Then we have
\begin{equation}
\label{eq:nettreediameter}
    \diam{S(a)} \le \frac{4\tau}{\tau - 1}\tau^{\ell(p^k(a)) - k}
\end{equation}
where $p^k(a)$ is the $k$-th ancestor of $a$: $p^0(a) = a$ and $p^k(a) = p(p^{k-1}(a))$ for $k > 0$.
\end{lemma}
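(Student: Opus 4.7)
The plan is to prove the bound by induction on $k$, using only the covering property of the net tree together with the monotonicity of the level function as one ascends toward the root.

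For the base case $k=0$, we have $p^0(a) = a$, so the claim reduces to $\diam S(a) \le \frac{4\tau}{\tau-1}\tau^{\ell(a)}$. This is immediate from the covering property: $S(a) \subset B\bigl(r(a), \tfrac{2\tau}{\tau-1}\tau^{\ell(a)}\bigr)$ contains all of $S(a)$ inside a ball of that radius, and the diameter of any subset of a ball is at most twice the ball's radius (by two applications of the triangle inequality through $r(a)$). For leaf nodes, where $\ell(a) = -\infty$, the set $S(a) = \{r(a)\}$ is a singleton of diameter $0$, consistent with the vacuous bound.

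For the inductive step, assume the bound holds for some $k \ge 0$, i.e.\ $\diam S(a) \le \frac{4\tau}{\tau-1}\tau^{\ell(p^k(a))-k}$, and suppose $p^{k+1}(a)$ exists in $T$. By the stated property of the net tree, the level of any node is strictly less than the level of its parent; since levels are integers (whenever both endpoints are non-leaf nodes), this strict inequality sharpens to
\begin{equation*}
    \ell(p^k(a)) \le \ell(p^{k+1}(a)) - 1.
\end{equation*}
Adding $-k$ to both sides gives $\ell(p^k(a)) - k \le \ell(p^{k+1}(a)) - (k+1)$, and because $\tau \ge 11 > 1$ the exponential is monotone increasing, so $\tau^{\ell(p^k(a))-k} \le \tau^{\ell(p^{k+1}(a))-(k+1)}$. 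Chaining this with the inductive hypothesis yields
\begin{equation*}
    \diam S(a) \le \frac{4\tau}{\tau-1}\tau^{\ell(p^{k+1}(a))-(k+1)},
\end{equation*}
completing the induction.

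There is no real obstacle here: the lemma is essentially a mechanical consequence of the covering property combined with the monotonicity of levels up the tree. The only mild subtlety is the handling of leaves, where $\ell(a) = -\infty$; but in that case $S(a)$ is a singleton of diameter $0$, so the inequality holds trivially, and the induction on $k$ picks up non-trivial content as soon as we reach a non-leaf ancestor.
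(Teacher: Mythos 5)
Your proof is correct and follows essentially the same route as the paper's: induction on $k$, with the base case given by the covering property (diameter at most twice the ball's radius) and the inductive step using $\ell(p^k(a)) \le \ell(p^{k+1}(a)) - 1$ from the strict level increase plus integrality. Your explicit treatment of leaves with $\ell(a) = -\infty$ is a small addition the paper omits, but it does not change the argument.
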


\begin{proof}
The proof is by induction on $k$. The base case follows directly from the covering property of net trees. Since $S(a)$ can be contained in a ball of radius $r = \frac{2\tau}{\tau - 1}\tau^{\ell(a)}$, its diameter is at most $2r$.

Now assume that Equation~(\ref{eq:nettreediameter}) holds for some $k \ge 0$. Recall that for every non-root node $a$ in $T$, we have $\ell(a) < \ell(p(a))$. Since the level of a node is an integer this is equivalent to $\ell(a) \le \ell(p(a)) - 1$. Applying the induction hypothesis and then this inequality to the node $p^k(a)$ yields
\begin{align*}
    \diam{S(a)}
    &\le 2\frac{2\tau}{\tau - 1}\tau^{\ell(p^k(a)) - k} \\
    &\le 2\frac{2\tau}{\tau - 1}\tau^{\ell(p^{k+1}(a)) - (k+1)}. \qedhere
\end{align*}
\end{proof}

Given a net tree $T$, the procedure for computing a WSPD is similar to the Euclidean case. The following algorithm will produce a WSPD for $P$ with separation $s$, when given the root of $T$ as both inputs. The algorithm is from Har-Peled and Mendel \cite{Har-Peled2006}.

\begin{algorithm}
\caption{Computing a WSPD for a set of points in a doubling space}
\label{alg:wspd-doubling}
\begin{algorithmic}
\State \textbf{Input}: $a$ and $b$ are nodes of a net tree $T$ that stores a set $S$ of points in a metric space of bounded doubling dimension, and $s > 2$ is the separation ratio
\State \textbf{Output}: if initially called with both $a$ and $b$ equal to the root of $T$, the algorithm outputs a WSPD of $S$ with separation ratio $s$
\Procedure{WSPD}{$a, b$}
\State \algorithmicif\ $a = b = \{p\}$ for some point $p$\ \algorithmicthen\ \Return $\emptyset$
\If{$\level(a) < \level(b)$}
    \State swap $a$ and $b$
    \Comment{now $\level(a) \le \level(b)$}
\EndIf
\If{$8s\frac{2\tau}{\tau - 1}\cdot\max\{\tau^{\level(a)}, \tau^{\level(b)}\} \le d(r(a), r(b))$}
    \State \Return $\{\{S(a), S(b)\}\}$
     \Comment{$S(a)$ and $S(b)$ are well-separated}
\Else
    \State let $a_1, a_2, \dots, a_k$ be the children of $a$
    \State \Return $\bigcup_i \Call{WSPD}{a_i, b}$
\EndIf
\EndProcedure
\end{algorithmic}
\end{algorithm}

The only difference between this algorithm and the one to compute a WSPD given a quadtree is the condition for checking if two sets are well-separated.

\begin{theorem}[{\plaincite[Lemma~5.1]{Har-Peled2006}}]
Given a set of $n$ points in a metric space of doubling dimension $\lambda$, a WSPD with separation $s > 2$ can be computed in $O(2^\lambda n\log n + s^\lambda n)$ expected time. The number of pairs is $O(s^\lambda n)$.
\end{theorem}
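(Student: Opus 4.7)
The plan has two ingredients. First, pre-compute a net tree for $S$, which takes $O(2^\lambda n \log n)$ expected time by the already-cited theorem of Har-Peled and Mendel. Second, analyze Algorithm~\ref{alg:wspd-doubling}: show that it emits $O(s^\lambda n)$ pairs and that, given the net tree, its total running time is also $O(s^\lambda n)$. Adding these two contributions yields the bound. The overall skeleton mirrors the Euclidean analysis after Algorithm~\ref{alg:wspd}, counting nodes of the recursion tree whose leaves are the pairs in the WSPD.

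For the bound on the number of pairs I would use a packing argument that replaces the geometric bookkeeping on quadtree cells by the covering and packing properties of the net tree together with the doubling dimension $\lambda$. Fix a node $a$ of the net tree and count the partners $b$ such that $\{S(a), S(b)\}$ is emitted. Assume without loss of generality that $\level(a) \ge \level(b)$; then the pair was emitted only because Algorithm~\ref{alg:wspd-doubling} had to split on $p(a)$ rather than on $a$, so the well-separation test failed on the call at $(p(a), b)$. This gives an upper bound on $d(r(p(a)), r(b))$ proportional to $s\tau^{\level(p(a))}$, and combining it with the covering property applied at $b$ confines $r(b)$ to a ball around $r(a)$ of radius $R = O(s\tau^{\level(p(a))})$. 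The packing property forces any two representatives $r(b), r(b')$ of candidate partners at the same level to lie at distance $r = \Omega(\tau^{\level(b)-1})$ apart. By iterating the doubling definition $O(\log(R/r))$ times, a ball of radius $R$ contains at most $(R/r)^{O(\lambda)} = s^{O(\lambda)}$ points that are pairwise $r$-separated, giving $O(s^\lambda)$ candidate partners per relevant level. Since only a constant number of levels of $b$ contribute (higher levels would already have been well-separated at a previous ancestor call, strictly lower levels are ruled out by the corresponding upper bound on $\diam S(b)$), each node $a$ has $O(s^\lambda)$ partners. Summing over the $O(n)$ nodes of the net tree yields $O(s^\lambda n)$ pairs.

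For the running time, each call to Algorithm~\ref{alg:wspd-doubling} does $O(1)$ work beyond the recursive subcalls (one distance evaluation between representatives, a few level comparisons, and the emission of at most one pair), so the total time is proportional to the size of the recursion tree. Its leaves are the $O(s^\lambda n)$ emitted pairs; internal nodes correspond to splits of the larger-level side, and a charging argument identical in spirit to the one used for Algorithm~\ref{alg:wspd} (charging each internal node to pairs in its subtree) bounds the internal nodes also by $O(s^\lambda n)$. The main obstacle I expect is the packing step: one has to be careful with the level-asymmetric recursion so that each partner $b$ of $a$ is counted at the correct level without overcounting, and the doubling argument must be invoked on an appropriately-sized shell centred at $r(a)$ rather than a single ball; once this is set up cleanly, the rest of the proof is bookkeeping.
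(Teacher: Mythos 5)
First, a point of comparison: the paper does not prove this statement at all --- it is quoted verbatim from Har-Peled and Mendel (their Lemma~5.1), so there is no in-paper proof to measure your attempt against. What you have written is in effect a reconstruction of their argument, and its skeleton is the right one: build the net tree in $O(2^\lambda n\log n)$ expected time by the cited theorem, bound the number of emitted pairs by a packing argument in the doubling metric, and bound the running time of Algorithm~\ref{alg:wspd-doubling} by the size of its recursion tree, whose leaves are the emitted pairs. The running-time half of your sketch is fine as stated.

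The gap is in the counting step. Your claim that ``only a constant number of levels of $b$ contribute'' is unjustified and, as stated, false: a net tree, like a compressed quadtree, can skip levels, so a partner $b$ of $a$ can have $\level(b)$ arbitrarily far below $\level(a)$ --- it can even be a leaf, with $\level(b) = -\infty$ --- and a small $\diam S(b)$ does not rule this out, since the separation test only becomes easier to pass as $\tau^{\level(b)}$ shrinks. A per-level packing bound with separation $\Omega(\tau^{\level(b)-1})$ therefore degenerates exactly where it is needed. The standard repair (and the one in the cited proof) is to use the invariant maintained by the recursion that in every call $(x,y)$ one has $\level(p(x)) \ge \level(y)$ and $\level(p(y)) \ge \level(x)$: charging each emitted pair to the node $w$ that was split in the call that created it, every partner $v$ of $w$'s children satisfies $\level(p(v)) \ge \level(w)$, so the packing property yields pairwise disjoint balls of radius $\Omega(\tau^{\level(w)-1})$ around the representatives $r(v)$ \emph{independently of} $\level(v)$, all contained in a ball of radius $O(s\,\tau^{\level(w)})$ around $r(w)$; one application of the doubling bound then gives $s^{O(\lambda)}$ partners per node, with no case analysis over levels of $b$. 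Relatedly, your ``WLOG $\level(a)\ge\level(b)$'' does not imply that the call $(a,b)$ arose from splitting $p(a)$ --- it may equally have arisen from splitting $p(b)$ --- so the charging must be organized by which side was last refined (and symmetrized), not by which side currently has the larger level. With those two repairs your outline matches the argument of Har-Peled and Mendel.
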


Notice that the nodes in a net tree have representatives, by definition. The representatives of the net tree satisfy Lemma~\ref{lem:representative-hierarchy}, so we could use them in the construction of the spanner. However, they do not necessarily satisfy Lemma~\ref{lem:light-paths}. This means that if a spanner is constructed using these representatives, the hop spanning ratio cannot be bounded like in Lemma~\ref{lem:spanner-diameter}. Instead, we can choose the representatives according to a heavy path decomposition, as we did in Euclidean space, since the representatives computed in the construction of the net tree are arbitrary. Since a heavy path decomposition can be computed in $O(n)$ time, this does not increase the running time of the algorithm. From this point on, the representative $r(a)$ of a node will be defined by the heavy path decomposition.

Finally, the construction of a spanner from a WSPD is identical. For each pair $\{S(a), S(b)\}$ in the WSPD, we add an edge between $r(a)$ and $r(b)$. Again, $r(a)$ and $r(b)$ are chosen according to the heavy path decomposition, they are not the representatives from the definition of the net tree.

The proof that this graph was a spanner in Euclidean space also relied on Lemma~\ref{lem:ws-pairs}, which can be verified to hold for any WSPD in any metric space, and Lemma~\ref{lem:wspd-subtree}. Notice that the WSPD computed in this section satisfies the hypothesis for Lemma~\ref{lem:wspd-subtree}, namely that every pair in the WSPD is of the form $\{S(a), S(b)\}$ for some nodes $a$, $b$ in the net tree. So everything that was used to prove that the WSPD graph was a spanner in Section~\ref{sec:hpw-spanners} has been shown to hold in a metric space of bounded doubling dimension when we use a net tree. We have the following theorem, which is analogous to Theorem~\ref{thm:spanner}.

\begin{theorem}
\label{thm:doubling-spanner}
Given a set $P$ of $n$ points in a metric space with doubling dimension $\lambda$, a $(1 + 2/s + 2/(s - 1))$-spanner of $P$ with $O(s^\lambda n)$ edges can be computed in $O(2^\lambda n \log n + s^\lambda n)$ expected time.
\end{theorem}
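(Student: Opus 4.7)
The plan is to assemble the theorem by composing the three building blocks the section has already set up: compute a net tree, run the WSPD algorithm on it, and then add one edge per WSPD pair joining the heavy-path representatives. The running time and edge-count bounds fall out of the cited theorems; the spanning ratio will follow by re-running the proof of Theorem~\ref{thm:spanner} verbatim, checking that each lemma it invokes still applies in the metric setting.

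First I would verify the time and size bounds. By \autocite[Theorem~3.1]{Har-Peled2006} we build the net tree in $O(2^\lambda n \log n)$ expected time. A single linear-time pass gives the heavy path decomposition and fixes the representatives $r(\cdot)$. Running Algorithm~\ref{alg:wspd-doubling} on this tree produces a WSPD with $O(s^\lambda n)$ pairs in $O(2^\lambda n \log n + s^\lambda n)$ expected time by \autocite[Lemma~5.1]{Har-Peled2006}. The spanner has exactly one edge per pair, so it has $O(s^\lambda n)$ edges, and the total construction time is the sum of these stages, which matches the stated bound.

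Next I would argue the spanning ratio. The proof of Theorem~\ref{thm:spanner} used only two ingredients: the geometric Lemma~\ref{lem:ws-pairs}, and the structural Lemma~\ref{lem:wspd-subtree}. The former has already been lifted to arbitrary metrics as Lemma~\ref{lem:general-ws-pairs}. For the latter, note that Algorithm~\ref{alg:wspd-doubling} produces pairs of the form $\{S(a),S(b)\}$ where $a,b$ are net-tree nodes, that $S(a)\cap S(b)=\emptyset$ for any such pair, and that $a$ (resp.~$b$) lies on the root-to-leaf path of every point in $S(a)$ (resp.~$S(b)$); this is exactly what the proof of Lemma~\ref{lem:wspd-subtree} needs, so the lemma carries over to net trees. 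With these two tools available, Algorithm~\ref{alg:spanner-path} is well-defined on the net-tree WSPD, and the inductive proofs of Lemma~\ref{lem:span1} and Lemma~\ref{lem:span2} go through unchanged — neither uses any Euclidean feature beyond Lemma~\ref{lem:general-ws-pairs}(a) and the disjointness of the two halves of a pair. Summing the resulting geometric series as in the proof of Theorem~\ref{thm:spanner} then yields the bound $1 + 2/s + 2/(s-1)$.

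The only subtle point — the one I expect to be the main thing to get right — is ensuring that replacing the net tree's built-in representatives by heavy-path representatives is harmless for the spanning-ratio argument. The proof of Lemma~\ref{lem:span1} relied on Lemma~\ref{lem:representative-hierarchy}, which says that all nodes on the same heavy segment share a representative; this is a purely combinatorial statement about the tree and the decomposition, so it still holds on a net tree. Similarly, the disjointness and containment facts used in Lemma~\ref{lem:wspd-subtree} depend on the tree structure, not on the specific choice of representative. Hence the reassignment of representatives does not affect the correctness of Algorithm~\ref{alg:spanner-path}, and the full chain of inequalities produces the same $(1 + 2/s + 2/(s-1))$ bound, completing the proof.
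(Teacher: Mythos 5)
Your proof is correct and follows the same route the paper takes: construct the net tree, compute the WSPD via Algorithm~\ref{alg:wspd-doubling}, replace the built-in net-tree representatives with heavy-path ones, and then observe that the proof of Theorem~\ref{thm:spanner} only invoked Lemma~\ref{lem:ws-pairs} (lifted to Lemma~\ref{lem:general-ws-pairs}), Lemma~\ref{lem:wspd-subtree}, Lemma~\ref{lem:representative-hierarchy}, and the combinatorics of Lemmas~\ref{lem:span1} and \ref{lem:span2}, all of which transfer verbatim. You are slightly more explicit than the paper about why Lemma~\ref{lem:wspd-subtree} carries over to net trees and why swapping in heavy-path representatives is harmless, which is a helpful elaboration rather than a deviation.
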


Finally, we mention that Lemma~\ref{lem:spanner-diameter} holds for the net tree as well, since we chose the representatives using a heavy path decomposition. Therefore, Lemma~\ref{lem:light-paths} can be applied to a net tree as well, and the proof of Lemma~\ref{lem:spanner-diameter} only relied on that lemma. The construction of a heavy path WSPD spanner in a metric space of bounded doubling dimension is summarized by the following theorem.

\begin{theorem}
\label{thm:hpwsd-summary}
Let $S$ be a set of $n$ points in a metric space $X$ of doubling dimension $\lambda$, and let $s > 2$. In $O(2^\lambda n\log n + s^\lambda n)$ expected time, we can construct a graph $G$ called a heavy path WSPD spanner with the following properties:
\begin{itemize}
    \item The number of edges in $G$ is $O(s^\lambda n)$.
    \item $G$ is a $(1 + 2/s + 2/(s - 1))$-spanner.
    \item $G$ is a $(2\lg n + 1)$-hop spanner.
\end{itemize}
Additionally, between any two points there is a single path that achieves both the spanning and hop-spanning ratio.
\end{theorem}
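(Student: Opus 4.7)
The plan is to assemble this result by invoking Theorem~\ref{thm:doubling-spanner} for most of the claims and then verifying that the hop-spanner bound and the ``unified path'' property transfer cleanly from the Euclidean case. Concretely, Theorem~\ref{thm:doubling-spanner} already supplies the edge count $O(s^\lambda n)$, the spanning ratio $1 + 2/s + 2/(s-1)$, and the expected construction time $O(2^\lambda n\log n + s^\lambda n)$, so three of the four bullet points require no new work and I would simply cite that theorem at the start.

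Next I would handle the hop-spanner bound. The key observation is that the heavy path decomposition was applied to the net tree exactly as it was to the compressed quadtree, so Lemma~\ref{lem:light-paths} (which only uses the combinatorial structure of the heavy path decomposition, not any geometric property of the underlying tree) applies verbatim to the net tree. With Lemma~\ref{lem:light-paths} available, the argument of Lemma~\ref{lem:spanner-diameter} goes through unchanged: Algorithm~\ref{alg:spanner-path} applied to any pair $p,q$ separated by $\{S(a),S(b)\}$ produces a subpath from $p$ to $r(a)$ whose successive $h(p_i)$'s lie on the same root-to-leaf path with distinct representatives, bounded in number by $\lg n$, and symmetrically for $r(b)$ to $q$, giving at most $2\lg n + 1$ edges total.

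For the final sentence, I would observe that Algorithm~\ref{alg:spanner-path} makes no reference to Euclidean geometry: it only asks for the WSPD pair separating two points and the representatives of the two sets, both of which are well-defined for the net tree WSPD and its heavy path representatives. Theorem~\ref{thm:doubling-spanner}'s proof already showed (via Lemma~\ref{lem:general-ws-pairs} and Lemma~\ref{lem:wspd-subtree}) that the path it produces realises the spanning ratio, and the hop-count argument just given shows the same path realises the hop-spanning ratio, so a single path witnesses both.

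I do not expect any genuine obstacle, since this theorem is a summary statement. The only thing worth double-checking is that Lemma~\ref{lem:wspd-subtree} really does apply: its proof uses only that $S(a)$ and $S(b)$ are disjoint whenever $a,b$ are incomparable nodes, which holds for the net tree because the subtree point-sets are determined by the leaf partition exactly as in the quadtree case. Once that is noted, the whole statement follows by assembling Theorem~\ref{thm:doubling-spanner}, the transferred Lemmas~\ref{lem:light-paths} and~\ref{lem:spanner-diameter}, and the observation about Algorithm~\ref{alg:spanner-path}.
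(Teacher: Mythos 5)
Your proposal matches the paper's own reasoning essentially line for line: the paper likewise cites Theorem~\ref{thm:doubling-spanner} for the edge count, spanning ratio, and construction time, then observes that Lemma~\ref{lem:light-paths} is purely combinatorial and so Lemma~\ref{lem:spanner-diameter} transfers to the net tree, with Algorithm~\ref{alg:spanner-path} producing the single path witnessing both ratios (relying on Lemmas~\ref{lem:general-ws-pairs} and~\ref{lem:wspd-subtree}). No gaps; this is the same assembly the paper performs in the paragraphs preceding the theorem.
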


\section{Local routing in doubling spaces}
\label{ch:doubling-routing}

We wish to apply our routing algorithm to a heavy path WSPD spanner constructed in some metric space $(X, d)$ with bounded doubling dimension. We have already seen that the compressed quadtree cannot be used, and so we have substituted the net tree. From there, the construction of a spanner is very similar to the Euclidean case. Now the question is what, if anything, has to be modified in order to route in this more general setting.

The labelling scheme, as stated, can be applied to any rooted tree. So we can label the points in the same manner, using $\lg n$ bits per point, and $2\lg n$ bits for each internal node.
 
None of the arguments in the proof of correctness for the routing algorithm made reference to the properties of the compressed quadtree or Euclidean space. Only properties of the WSPD and the heavy path decomposition were needed. We have already shown that the WSPD computed using the net tree satisfies all the relevant properties. In particular, the proofs of Section~\ref{sec:routing-algorithm} rely on Lemmas~\ref{lem:wspd-subtree} and \ref{lem:representative-hierarchy}, which also apply to the WSPD constructed in a doubling space.

Since we computed a heavy path decomposition of the net tree, we know that Lemma~\ref{lem:spanner-diameter} holds. The proof of Lemma~\ref{lem:routing-diameter} only relied on the use of a heavy path decomposition, and so it will hold here as well.

\subsection{Analysis for doubling spaces}
\label{sec:doubling-analysis}

Part of the correctness proof showed that the descending stage followed the spanner path from $r(a)$ to $q$. Since the length of the spanner path is the same for both Euclidean and doubling spaces (see Theorem~\ref{thm:doubling-spanner}), the length of the path constructed during the descending stage is $(1 + 2/s + 1/(s-1))d(p, q)$, as it was in Euclidean space. The analysis of the ascending stage, however, relied on properties of the compressed quadtree to bound the length of the path. The following lemma bounds the length of the ascending stage using Lemma~\ref{lem:nettreediameter} instead.

\begin{lemma}
\label{lem:doubling-ascending-length}
The length of the path constructed during the ascending step of the algorithm is no more than
\begin{equation*}
    \frac{\tau}{s(\tau-1)}d(p, q).
\end{equation*}
\end{lemma}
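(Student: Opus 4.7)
The plan is to mirror the Euclidean argument of Lemma~\ref{lem:ascending-length}, substituting the net-tree diameter bound of Lemma~\ref{lem:nettreediameter} for the quadtree bound used there. Let $p = p_1, p_2, \ldots, p_k = r(a)$ denote the vertices visited during the ascending stage, where $\{S(a), S(b)\}$ is the WSPD pair separating $p$ from $q$. The proof of Lemma~\ref{lem:ascending4} uses only the definition of the ascending step and the heavy-path decomposition, so it applies verbatim to the net tree: $p_1, \ldots, p_{i-1}$ all lie in $S(h(p_i))$, and together with $p_i \in S(h(p_i))$ this implies that each edge $p_{i-1}p_i$ is contained in $S(h(p_i))$.

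Next, I would argue that $h(p_i)$ is a strict descendant of $a$ in the net tree for every $i < k$. If instead $h(p_i)$ were equal to $a$ or an ancestor of $a$, then $a$ would lie on the heavy path whose shallowest node is $h(p_i)$, so Lemma~\ref{lem:representative-hierarchy} would force $r(a) = r(h(p_i)) = p_i$, contradicting $p_i \ne p_k = r(a)$. Applying Lemma~\ref{lem:nettreediameter} with $a$ as the reference ancestor then yields $\diam S(h(p_i)) \le \frac{4\tau}{\tau - 1}\tau^{\ell(a) - d_i}$, where $d_i \ge 1$ counts the tree-edges from $h(p_i)$ up to $a$. For the last edge $p_{k-1}p_k$ I would use the coarser containment in $S(a)$, which gives the bound $\frac{4\tau}{\tau - 1}\tau^{\ell(a)}$.

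Because each ascending step crosses at least one light edge, the nodes $h(p_2), h(p_3), \ldots, h(p_{k-1})$ are properly nested (each is a strict ancestor of the previous), and the depth parameters satisfy $d_{k-1} \ge 1,\ d_{k-2} \ge 2,\ \ldots,\ d_2 \ge k-2$. Summing the per-edge bounds then dominates a geometric series:
\begin{equation*}
\sum_{i=1}^{k-1} d(p_i, p_{i+1}) \le \frac{4\tau}{\tau - 1}\tau^{\ell(a)}\sum_{j=0}^{\infty}\tau^{-j} = \frac{4\tau^{\ell(a)+2}}{(\tau - 1)^2}.
\end{equation*}
To convert this into a bound involving $d(p,q)$, I would combine the separation test of Algorithm~\ref{alg:wspd-doubling}, which lower-bounds $d(r(a),r(b))$ by a constant times $\max\{\tau^{\ell(a)}, \tau^{\ell(b)}\}$, with the covering-property bounds $d(p, r(a)) \le \frac{2\tau}{\tau-1}\tau^{\ell(a)}$ and $d(q, r(b)) \le \frac{2\tau}{\tau-1}\tau^{\ell(b)}$. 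The triangle inequality $d(p,q) \ge d(r(a), r(b)) - d(p, r(a)) - d(q, r(b))$ rearranges to give $\tau^{\ell(a)} \le C\,d(p,q)/s$ for an explicit constant $C$; substituting into the displayed bound yields the claimed $\frac{\tau}{s(\tau - 1)}d(p,q)$.

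The main obstacle is the structural step: verifying cleanly that every $h(p_i)$ with $i < k$ is a strict descendant of $a$, so that the depths $d_i$ are well-defined and Lemma~\ref{lem:nettreediameter} can be applied uniformly with $a$ as the common ancestor. Once this is in hand, the geometric summation and the algebraic reduction to the constant $\frac{\tau}{s(\tau - 1)}$ are routine manipulations of the well-separation inequality.
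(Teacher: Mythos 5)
Your proof is correct and follows the same route as the paper: bound the edges of the ascending path by the diameters of nested net-tree nodes (invoking Lemma~\ref{lem:ascending4} and Lemma~\ref{lem:nettreediameter}), recognize a geometric series with ratio $1/\tau$, and then convert $\tau^{\level(a)}$ into a bound on $d(p,q)$ via the well-separation test in Algorithm~\ref{alg:wspd-doubling}. Your geometric-series total $\frac{4\tau^{2}}{(\tau-1)^{2}}\tau^{\level(a)}$ is exactly the quantity the paper derives, and the structural observation you flag as the ``main obstacle'' --- that the $h(p_i)$ with $i<k$ are proper descendants of $a$ --- is handled in the paper by noting simply that the whole ascending path stays inside $S(a)$ (which yields your nesting via the argument you sketch, using Lemma~\ref{lem:representative-hierarchy}). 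The only cosmetic divergence is in the final algebraic step: the paper cites Lemma~\ref{lem:general-ws-pairs}(b) to get $d(r(a),r(b)) \le (1+2/s)\,d(p,q)$ directly, while you re-derive the equivalent estimate from the triangle inequality and the covering property; both give constants comfortably within the claimed $\frac{\tau}{s(\tau-1)}$ once $s\ge 2$ is used.
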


\begin{proof}
The path from $p$ to $r(a)$ is contained in the subtree rooted at $a$, so the length of every edge on the path is at most $\diam(S(a)) \le 2\frac{2\tau}{\tau - 1} \tau^{\level(a)}$. That path, minus the last edge (i.e., the edge with endpoint $r(a)$), is contained in the subtree rooted at one of the children of $a$ by Lemma~\ref{lem:ascending4}, so the length of all but the last edge is at most $2\frac{2\tau}{\tau - 1} \tau^{\level(a)-1}$, by Lemma~\ref{lem:nettreediameter}.

Repeating this argument will show that the length of the entire path is not more than
\begin{align*}
    \sum_{k=0}^\infty 2\frac{2\tau}{\tau - 1} \tau^{\level(a)-k}
    &= 2\frac{2\tau}{\tau - 1}\tau^{\level(a)} \cdot \sum_{k=0}^\infty \tau^{-k} \\
    &= 2\frac{2\tau}{\tau - 1}\tau^{\level(a)} \cdot \frac{\tau}{\tau-1}.
\end{align*}
From Algorithm~\ref{alg:wspd-doubling} we know that
\begin{align*}
    2\frac{2\tau}{\tau - 1} \tau^{\level(a)}
    &\le \frac{1}{4s}d(r(a), r(b)) \\
    &\le \frac{1}{4s}\bigg(1+\frac{2}{s}\bigg)d(p, q) \\
    &\le \frac{1}{s}d(p, q),
\end{align*}
where the last inequality follows because $s \ge 2$. The first inequality is the condition for being well-separated in Algorithm~\ref{alg:wspd-doubling}. And so therefore the length of the path is at most
\begin{equation*}
    2\frac{2\tau}{\tau - 1}\tau^{\level(a)} \cdot \frac{\tau}{\tau-1}
    \le \frac{\tau}{s(\tau-1)}d(p, q).
    \qedhere
\end{equation*}
\end{proof}

Now we have all the ingredients to analyze the routing algorithm in doubling spaces. The following theorem summarizes the result.

\begin{theorem}
The routing ratio of this algorithm on a heavy path WSPD spanner in a doubling space is at most
\begin{equation*}
    1 + \bigg(2 + \frac{\tau}{\tau - 1}\bigg)\frac{1}{s} + \frac{1}{s - 1}.
\end{equation*}
\end{theorem}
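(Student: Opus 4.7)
The plan is to mirror the two-stage analysis used in the Euclidean routing-ratio proof, since the previous subsections have already packaged all the ingredients we need. I would fix a source $p$ and a destination $q$, let $\{S(a), S(b)\}$ be the WSPD pair separating them, and split the path produced by the routing algorithm into an ascending subpath from $p$ to $r(a)$ followed by a descending subpath from $r(a)$ to $q$, as was done in Section~\ref{sec:routing-algorithm}. Each half is bounded independently and the two contributions are then summed.

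For the descending stage, the key observation from Lemma~\ref{lem:descending} is that the descending step exactly reproduces the spanner path of Algorithm~\ref{alg:spanner-path} from $r(a)$ to $q$. That correctness argument is purely combinatorial and uses only Lemmas~\ref{lem:wspd-subtree} and~\ref{lem:representative-hierarchy}, both of which were verified in Section~\ref{sec:hpw-spanners-doubling} to apply to WSPDs built on net trees. The length bound of Lemma~\ref{lem:descending-analysis} then carries over verbatim once we use Lemma~\ref{lem:general-ws-pairs} in place of its Euclidean counterpart: the edge $r(a)r(b)$ contributes at most $(1 + 2/s)\,d(p,q)$, and the geometric-series argument on the remaining recursion from $r(b)$ to $q$ contributes at most $d(p,q)/(s-1)$. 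Altogether the descending stage has length at most $\bigl(1 + 2/s + 1/(s-1)\bigr)d(p,q)$.

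For the ascending stage, I would simply invoke Lemma~\ref{lem:doubling-ascending-length}, which we have just proved, to bound its length by $\tau\,d(p,q)/(s(\tau-1))$. Summing the two halves gives
$$\left(\frac{\tau}{s(\tau-1)} + 1 + \frac{2}{s} + \frac{1}{s-1}\right)d(p,q) = \left(1 + \left(2 + \frac{\tau}{\tau-1}\right)\frac{1}{s} + \frac{1}{s-1}\right)d(p,q),$$
which is exactly the claimed bound on the routing ratio.

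There is really no substantive obstacle: the theorem is a direct composition of the unchanged descending-stage bound (modulo replacing $|\cdot|$ by $d(\cdot,\cdot)$ everywhere) with the new ascending-stage bound from Lemma~\ref{lem:doubling-ascending-length}. The only thing worth being careful about is confirming, as the surrounding text does, that every auxiliary lemma invoked along the way — the well-separated pair inequalities, the subtree lemma for WSPDs, and the representative hierarchy for heavy paths — was explicitly re-established in the net-tree setting, so that no part of the Euclidean-specific geometry is silently relied on.
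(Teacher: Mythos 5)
Your proposal is correct and matches the paper's own argument: the paper likewise sums the descending-stage bound of $(1 + 2/s + 1/(s-1))\,d(p,q)$ (inherited because the descending steps follow the spanner path, with the auxiliary lemmas re-established for net trees) and the ascending-stage bound $\frac{\tau}{s(\tau-1)}d(p,q)$ from Lemma~\ref{lem:doubling-ascending-length}. No differences worth noting.
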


\begin{proof}
Since the descending stage follows the spanner path, its length is at most $(1 + 2/s + 1/(s - 1))d(p, q)$, just like in the Euclidean case. By Lemma~\ref{lem:doubling-ascending-length}, the length of the ascending stage is at most $\tau/(\tau - 1) \cdot (1/s) \cdot d(p, q)$. Since the ascending and descending stages make up the entire algorithm, we can add these two bounds to get the routing ratio.
\end{proof}

Because we know that $\tau \ge 11$, this theorem implies that the routing ratio in doubling spaces is at most $1 + 3.1/s + 1/(s-1)$. This is smaller than the routing ratio in the Euclidean case. However, the constants hidden by the big-O notation in the net tree construction will depend on $\tau$, and so the number of edges in the heavy path WSPD spanner will also depend on the choice of $\tau$.

As we did in the Euclidean case, we can analyze the difference between the routing and spanning ratios. Let $S = 1 + 2/s + 2/(s - 1)$ and $R = 1 + (2 + \tau/(\tau-1))/s + 1/(s - 1)$ be the spanning and routing ratios. Before, the spanning ratio was always less than the spanning ratio. Here, however, the spanning ratio is only less than the routing ratio if $s > \tau$. Otherwise, the inequality is flipped. Since the routing algorithm produces a path on the same graph, this implies a new upper bound for the spanning ratio.

The length of the path constructed during the ascending stage depends on the constant $\tau$. Since we are free to choose any $\tau \ge 11$ when constructing the net tree, if we choose $\tau$ to be large enough then the length of the path will approach $(1/s)d(p, q)$, which is less than the length of the spanner path from $p$ to $r(a)$. Using this fact, we can get a better bound on the spanning ratio of the heavy path WSPD spanner.

\begin{theorem}
If $s \le \tau$, then the spanning ratio of the graph of Theorem~\ref{thm:doubling-spanner} is at most $1 + 4.2/s$.
\end{theorem}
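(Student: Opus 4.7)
The plan is to exhibit an explicit short $p$-to-$q$ path inside the heavy path WSPD spanner by ascending from both endpoints and meeting at the representative edge. Let $\{S(a), S(b)\}$ be the WSPD pair separating $p$ from $q$, with $p \in S(a)$ and $q \in S(b)$.

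First I would construct the candidate path in three pieces: (i) the ascending walk from $p$ up to $r(a)$, (ii) the edge $r(a)r(b)$, and (iii) the ascending walk from $q$ up to $r(b)$, traversed in reverse. Pieces (i) and (iii) are legal walks in $G$ by Lemma~\ref{lem:ascending1}, which guarantees at each current vertex $u$ that the representative of the parent of $h(u)$ is a neighbour of $u$ in $G$. Piece (ii) is present in $G$ because the WSPD pair $\{S(a), S(b)\}$ is exactly the one that contributes the edge $r(a)r(b)$ during construction.

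Next I would bound each piece. Piece (i) has length at most $\tau/(s(\tau-1)) \cdot d(p,q)$ by Lemma~\ref{lem:doubling-ascending-length}. Piece (ii) has length at most $(1 + 2/s)\,d(p,q)$ by Lemma~\ref{lem:general-ws-pairs}(b). Piece (iii) gets the same bound as piece (i) by symmetry: the proof of Lemma~\ref{lem:doubling-ascending-length} uses only Lemma~\ref{lem:ascending4}, Lemma~\ref{lem:nettreediameter}, and the well-separatedness test of Algorithm~\ref{alg:wspd-doubling}, none of which distinguishes $p$ from $q$ or $a$ from $b$, so after swapping their roles one gets $\tau/(s(\tau-1)) \cdot d(p,q)$ for the walk from $q$ to $r(b)$. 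Summing the three bounds,
\begin{equation*}
    d_G(p, q) \le \biggl(1 + \frac{2}{s} + \frac{2\tau}{s(\tau - 1)}\biggr) d(p, q) = \biggl(1 + \frac{2}{s}\biggl(1 + \frac{\tau}{\tau - 1}\biggr)\biggr) d(p, q).
\end{equation*}

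Finally I would apply the standing assumption $\tau \ge 11$ on the net tree, which yields $\tau/(\tau-1) \le 11/10$ and hence $2\tau/(s(\tau-1)) \le 2.2/s$, collapsing the bound to $(1 + 4.2/s)\,d(p,q)$. The hypothesis $s \le \tau$ never enters the construction itself; its purpose is to pin down the regime in which $\tau/(s(\tau-1)) \le 1/(s-1)$, so that replacing the spanner sub-paths of Algorithm~\ref{alg:spanner-path} by ascending walks is actually an improvement and the new bound is tighter than the $1 + 2/s + 2/(s-1)$ of Theorem~\ref{thm:doubling-spanner}. The only step that warrants real care is the symmetric application of Lemma~\ref{lem:doubling-ascending-length} at $q$; I expect this to be the main (though mild) obstacle, since one must verify that its proof uses nothing more than the WSPD pair and the net-tree subtree at $b$, both of which are symmetric in the two endpoints.
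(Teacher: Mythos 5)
Your proof matches the paper's almost exactly: both bound the path $p \rightsquigarrow r(a)$, the edge $r(a)r(b)$, and $q \rightsquigarrow r(b)$ (by symmetry of Lemma~\ref{lem:doubling-ascending-length}), sum the three bounds to get $1 + (2 + 2\tau/(\tau-1))/s$, and then invoke $\tau \ge 11$ to reach $1 + 4.2/s$. Your side remark is also correct and worth noting: the hypothesis $s \le \tau$ is never used in the derivation, and its role is exactly to identify when this bound is no worse than the $1 + 2/s + 2/(s-1)$ bound of Theorem~\ref{thm:doubling-spanner} (one checks that $\tau/(s(\tau-1)) \le 1/(s-1)$ iff $s \le \tau$).
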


\begin{proof}
Let $p, q$ be points of $P$, and let $\{S(a), S(b)\}$ be the WSPD pair that separates $p$ from $q$. Consider routing from $p$ to $q$. The ascending stage constructs a path from $p$ to $r(a)$ with length at most
\begin{equation*}
    \frac{\tau}{s(\tau-1)}d(p, q).
\end{equation*}
If we run the routing algorithm from $q$ to $p$ instead, we get a path from $q$ to $r(b)$ with the same length.

The distance between $r(a)$ and $r(b)$ is at most $(1 + 2/s)d(p, q)$. Add the two paths constructed by the ascending stages to this edge to get a path of length at most
\begin{align*}
    d_G(p, q)
    &\le d_G(p, r(a)) + d_G(r(a), r(b)) + d_G(r(b), q) \\
    &\le
    \frac{\tau}{s(\tau-1)}d(p, q) + \Big(1 + \frac{2}{s}\Big)d(p, q) + \frac{\tau}{s(\tau-1)}d(p, q)\\
    &= \Big(1 + \Big(2 + \frac{2\tau}{\tau-1}\Big)\frac{1}{s}\Big)d(p, q).
\end{align*}
Since $\tau \ge 11$, we have $\tau/(\tau-1) \le 1.1$, and the inequality follows.
\end{proof}

We can also bound the number of edges on the routing path, like Lemma~\ref{lem:routing-diameter}. The proof of that lemma relied on Lemma~\ref{lem:light-paths} and Lemma~\ref{lem:spanner-diameter}, which we have show to hold in doubling spaces. Therefore the number of edges on the routing path has the same upper bound of $2 \lg n + 1$. The results of this section are summarized in the following theorem.

\begin{theorem}
Let $G$ be a heavy path WSPD spanner for a set $S$ of points in a metric space of doubling dimension $\lambda$, and let $p$ and $q$ be points of $S$. There exists a local, memoryless routing algorithm that can find a path from $p$ to $q$, such that:
\begin{itemize}
    \item The number of bits stored at each vertex $u$ is $(3\deg(u) + 1)\lg n$
    \item The length of the path found from $p$ to $q$ is at most
    \begin{equation*}
        \bigg(1 + \bigg(2 + \frac{\tau}{\tau - 1}\bigg)\frac{1}{s} + \frac{1}{s - 1}\bigg)d(p, q),
    \end{equation*}
    where $\tau \ge 11$ is a constant
    \item The number of edges on the path is at most $2\lg n + 1$
\end{itemize}
\end{theorem}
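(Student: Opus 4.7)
The plan is to show that this theorem is really a packaging of the three quantitative results established earlier in Section~\ref{ch:doubling-routing}, combined with the observation that the Euclidean routing algorithm and its tables generalize verbatim to the net tree setting. First I would note that the DFS labelling scheme described in Section~\ref{sec:routing-tables} is defined on any rooted tree, so we apply it to the net tree $T$ used to build $G$, together with the heavy path decomposition of $T$ used to choose representatives. The routing tables stored at each vertex are exactly those described in Section~\ref{sec:routing-tables}: for each neighbour $v$ of $u$ we store the label of $v$, the label of the node $b_v$ from the WSPD pair generating $uv$, and the label of $h(v)$. Since these three labels share a common endpoint (the label of $v$) by Lemma~\ref{lem:rep-label}, each neighbour contributes $3\lg n$ bits plus the one-time cost of $\lg n$ bits for $u$ itself, giving the stated $(3\deg(u) + 1)\lg n$ bound.

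Next I would establish the routing ratio by invoking the previous theorem in this section, which says that the path produced by the local, memoryless heavy path routing algorithm on the doubling-space spanner has length at most
\begin{equation*}
    \bigg(1 + \bigg(2 + \frac{\tau}{\tau - 1}\bigg)\frac{1}{s} + \frac{1}{s - 1}\bigg)d(p, q).
\end{equation*}
That theorem was in turn proved by combining Lemma~\ref{lem:doubling-ascending-length} (bounding the ascending stage) with the fact that the descending stage traces the same path as Algorithm~\ref{alg:spanner-path} and therefore inherits the $(1 + 2/s + 1/(s-1))d(p, q)$ bound from Theorem~\ref{thm:doubling-spanner}. Nothing more needs to be done here.

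For the hop bound, I would reuse the argument of Lemma~\ref{lem:routing-diameter} verbatim. Its proof only appealed to Lemma~\ref{lem:light-paths} (at most $\lg n$ light edges on any root-to-leaf path in a heavy path decomposition) and to Lemma~\ref{lem:spanner-diameter}. The first is a statement about heavy path decompositions of arbitrary rooted trees and so applies to the net tree. The second was shown in Section~\ref{sec:hpw-spanners-doubling} to carry over to the net tree because the representatives are chosen by the heavy path decomposition, hence both Lemma~\ref{lem:light-paths} and the argument of Lemma~\ref{lem:spanner-diameter} remain valid. Splitting the routing path into the ascending subpath (at most $\lg n$ light edges crossed) and the descending subpath (identical to the spanner path, hence at most $\lg n + 1$ edges) yields the $2\lg n + 1$ bound.

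The main obstacle, if any, is conceptual rather than technical: one must check that every lemma used in Sections~\ref{sec:routing-tables}--\ref{sec:routing-analysis} that made implicit use of the compressed quadtree still holds once the tree is replaced by a net tree. The key lemmas to verify in this way are Lemma~\ref{lem:wspd-subtree} and Lemma~\ref{lem:representative-hierarchy}, both already noted in Section~\ref{sec:hpw-spanners-doubling} to transfer, and Lemma~\ref{lem:ascending1} and Lemma~\ref{lem:ascending4}, whose proofs used only the tree structure and the WSPD property. Once these are in hand, the theorem is a straightforward assembly of the three bounds.
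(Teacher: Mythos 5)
Your proposal is correct and takes essentially the same route as the paper, which presents this result as a summary theorem whose three claims each follow directly from earlier results in the section: the routing-table bound from the observation that the DFS/heavy-path labelling applies to any rooted tree (hence to the net tree), the routing ratio from the immediately preceding theorem combining Lemma~\ref{lem:doubling-ascending-length} with the descending-stage bound, and the hop bound from the transferred Lemma~\ref{lem:routing-diameter}. The only minor imprecision is citing Theorem~\ref{thm:doubling-spanner} as the source of the $(1 + 2/s + 1/(s-1))$ descending-stage bound --- that theorem states the full spanning ratio $1 + 2/s + 2/(s-1)$; the descending bound actually comes from the argument in Lemma~\ref{lem:descending-analysis} (the edge $r(a)r(b)$ plus the one-sided tail from $r(b)$ to $q$) applied verbatim with the metric in place of Euclidean distance --- but this is a labelling issue, not a gap in the reasoning.
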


\subsection{Error analysis redux}
\label{sec:doubling-error}

As before, we have upper bounds on the spanning ratio and the routing ratio. We can again quantify the difference between them using absolute and relative error. But since we have two different bounds on the spanning ratio depending on whether or not $s \le \tau$, we need to consider two cases.

Recall $d_R(p, q)$ is the length of the routing path and $d_S(p, q)$ is the length of the spanner path between two points, $p$ and $q$. If $R$ and $S$ are our upper bounds on the routing and spanning ratios, then the absolute error is $\Delta = R - S$ and the relative error is $\delta = R/S - 1$.

Our two bounds on the spanning ratio are
\begin{equation*}
    S_{\le\tau} = 1 + \bigg(2 + \frac{2\tau}{\tau-1}\bigg)\frac{1}{s}
    \quad\text{and}\quad
    S_{>\tau} = 1 + \frac{2}{s} + \frac{2}{s-1}.
\end{equation*}
In both cases, the routing ratio is the same
\begin{equation*}
    R = 1 + \bigg(2 + \frac{\tau}{\tau-1}\bigg)\frac{1}{s} + \frac{1}{s - 1}.
\end{equation*}
Notice that if $s = \tau$, then all three of these values coincide. If $s < \tau$, then we have $S_{\le\tau} < R < S_{>\tau}$. And if $s > \tau$, then we have $S_{>\tau} < R < S_{\le\tau}$.

\begin{lemma}
The absolute error $\Delta$ is bounded by $0.5$, meaning in any heavy path WSPD spanner we have $d_R(p, q) \le d_S(p, q) + 0.5d(p, q)$ for all points $p$ and $q$, and for any choice of $s$ and $\tau$.
\end{lemma}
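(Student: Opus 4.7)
The plan is to split into the two cases $s \le \tau$ and $s > \tau$, since these control which of $S_{\le\tau}$ or $S_{>\tau}$ is the applicable bound on the spanning ratio. In each case, computing $R - S$ reduces to a simple rational expression in $s$ and $\tau$ whose boundedness follows from the hypotheses $s > 2$ and $\tau \ge 11$. Throughout, I will use the identity $\tau/(\tau-1) = 1 + 1/(\tau-1)$, which is what couples the two parameters.

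First I would handle the case $s \le \tau$. Here $S = S_{\le\tau} = 1 + (2 + 2\tau/(\tau-1))/s$, and subtracting from $R$ the common $1 + 2/s$ terms and cancelling produces
\[
\Delta = R - S_{\le\tau} = \frac{1}{s-1} - \frac{\tau}{(\tau-1)s}.
\]
Since $\tau/(\tau-1) \ge 1$, we may replace $\tau/((\tau-1)s)$ by the smaller quantity $1/s$ to obtain the upper bound $\Delta \le 1/(s-1) - 1/s = 1/(s(s-1))$. The hypothesis $s > 2$ gives $s(s-1) > 2$, so $\Delta < 1/2$ in this case.

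Next I would treat $s > \tau$, in which $S = S_{>\tau} = 1 + 2/s + 2/(s-1)$. A similar cancellation gives
\[
\Delta = R - S_{>\tau} = \frac{\tau}{(\tau-1)s} - \frac{1}{s-1}.
\]
Dropping the non-positive $-1/(s-1)$ term yields $\Delta \le \tau/((\tau-1)s)$. Now $s > \tau$ gives $\tau/((\tau-1)s) < 1/(\tau-1)$, and $\tau \ge 11$ gives $1/(\tau-1) \le 1/10$. Hence $\Delta \le 1/10 < 1/2$ in this case as well, completing the argument.

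There is no real obstacle here; the only thing to be careful about is correctly pairing the case hypothesis with the corresponding spanning-ratio bound before simplifying, and making sure each inequality used on $\tau/(\tau-1)$ points in the direction consistent with the sign of its coefficient in $\Delta$. In particular, in Case 1 we need a \emph{lower} bound on $\tau/((\tau-1)s)$ (obtained from $\tau/(\tau-1) \ge 1$), whereas in Case 2 we need an \emph{upper} bound via $s > \tau$. Once these are set up correctly, the final bound $\Delta \le 1/2$ follows by elementary manipulations and in fact holds with considerable room to spare.
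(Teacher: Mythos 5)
Your argument is correct, and it is a genuinely different (and arguably cleaner) route than the paper's. The paper locates the extrema of $\Delta_{\le\tau}$ and $\Delta_{>\tau}$ by differentiating in $s$: it shows $\Delta_{\le\tau}$ is decreasing on $[2,\tau]$, so the max is at $s=2$ with value $(\tau-2)/(2\tau-2)$; finds the critical point of $\Delta_{>\tau}$ at $s=\tau+\sqrt{\tau^2-\tau}$; verifies the first dominates for $\tau\ge 11$; and takes $\tau\to\infty$ to reach $1/2$. Your proof bypasses the calculus entirely. In the case $s\le\tau$ you use only $\tau/(\tau-1)\ge 1$ to get $\Delta\le 1/(s-1)-1/s=1/(s(s-1))<1/2$ whenever $s>2$, which in fact never uses $s\le\tau$ or $\tau\ge 11$ at all; in the case $s>\tau$ you drop the negative term and use $s>\tau$ and $\tau\ge 11$ to get $\Delta<1/(\tau-1)\le 1/10$. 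The trade-off is information: the paper's calculus approach identifies where the supremum is attained and shows the constant $1/2$ is asymptotically tight (achieved in the limit $s\to 2^+$, $\tau\to\infty$), while your approach delivers only the upper bound, but does so with no optimization machinery and with a strictly sharper bound in the $s>\tau$ regime.
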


\begin{proof}
We will consider two separate cases: $s \le \tau$ and $s > \tau$. In the first case, the absolute error is
\begin{equation*}
    \Delta_{\le\tau} = R - S_{\le\tau} = \frac{1}{s-1} - \frac{\tau}{s(\tau - 1)}.
\end{equation*}
In the second case, it is
\begin{equation*}
    \Delta_{>\tau} = R - S_{>\tau} = \frac{\tau}{s(\tau-1)} - \frac{1}{s-1}.
\end{equation*}
Interestingly, $\Delta_{\le\tau} = -\Delta_{>\tau}$.

It can be shown that $\Delta_{\le\tau}$ is decreasing on the interval $[2, \tau]$, so its maximum is achieved at $s = 2$. For $\Delta_{>\tau}$, we find that the function achieves a maximum at $s = \tau + \sqrt{\tau^2 - \tau}$. Computing the spanning ratios for these values of $s$, we find
\begin{align*}
    \Delta_{\le\tau}(2) &= \frac{2 - \tau}{2 - 2\tau} \\
    \Delta_{>\tau}(\tau + \sqrt{\tau^2 - \tau}) &= \frac{2\tau}{(\tau-1)(\tau + \sqrt{\tau^2 - \tau})} - \frac{2}{(\tau + \sqrt{\tau^2 - \tau})-1}.
\end{align*}
Finally, we can check that $\Delta_{\le\tau}(2) > \Delta_{>\tau}(\tau + \sqrt{\tau^2 - \tau})$ for all $\tau \ge 11$, that $\Delta_{\le\tau}(2)$ is increasing as a function of $\tau$, and that $\lim_{\tau\to\infty}\Delta_{\le\tau}(2) = 1/2$.
\end{proof}

\begin{lemma}
The absolute error $\delta$ is bounded by $0.1667$, meaning in any heavy path WSPD spanner we have $d_R(p, q) \le 1.1667d(p, q)$ for all points $p$ and $q$, and for any choice of $s$ and $\tau$.
\end{lemma}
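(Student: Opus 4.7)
The plan is to mirror the structure of the preceding absolute-error lemma, but to divide through by the spanning-ratio upper bound and work with $\delta = R/S - 1 = \Delta/S$ in each of the two regimes $s \le \tau$ and $s > \tau$. A useful preliminary step is to put the two absolute errors over a common denominator: directly from the closed forms for $R$, $S_{\le\tau}$ and $S_{>\tau}$ given in the text one obtains
\begin{equation*}
\Delta_{\le\tau} \;=\; \frac{\tau - s}{s(s-1)(\tau-1)} \quad\text{and}\quad \Delta_{>\tau} \;=\; \frac{s - \tau}{s(s-1)(\tau-1)}.
\end{equation*}
This makes explicit that $\Delta_{\le\tau}\ge 0$ on $[2,\tau]$ and $\Delta_{>\tau}\ge 0$ on $[\tau,\infty)$, so each $\delta$ is non-negative on its respective domain, and the overall relative error is the larger of the two.

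For each regime I would maximise $\delta$ first in $s$ with $\tau$ fixed, then in $\tau\ge 11$. In case~1 both the numerator $\Delta_{\le\tau}$ and the denominator $S_{\le\tau}$ are positive and decreasing on $[2,\tau]$, so the sign of $\partial_s\delta_{\le\tau}$ reduces to the sign of a low-degree polynomial in $s$; I expect, as in the absolute-error lemma where case~1 dominated, that $\delta_{\le\tau}$ is maximised at the boundary $s = 2$, giving a closed form $\delta_{\le\tau}(2,\tau)$ depending only on $\tau$. In case~2 the maximum is interior, so I would solve $\partial_s\delta_{>\tau} = 0$ explicitly (again a low-degree polynomial in $s$) for the critical point $s^{*}(\tau)$ and substitute back to obtain a closed form.

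Optimising the two case-maxima over $\tau\ge 11$ is the final step. A direct substitution at $s = 2$ shows that $\delta_{\le\tau}(2,\tau)$ is monotonically increasing in $\tau$, with limit
\begin{equation*}
\lim_{\tau \to \infty}\delta_{\le\tau}(2,\tau) \;=\; \frac{1/2}{3} \;=\; \frac{1}{6} \;\approx\; 0.1667,
\end{equation*}
matching the stated bound exactly. The strategy is then to show that this supremum dominates everything: that $\delta_{\le\tau}$ is indeed maximised at $s = 2$ for every $\tau\ge 11$, that $\delta_{\le\tau}(2,\cdot)$ is monotone, and that the interior maximum of $\delta_{>\tau}$ never exceeds $1/6$ either. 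The main obstacle will be the algebraic verification of these dominance inequalities, each of which boils down to showing that some rational function of $\tau$ stays on the right side of $1/6$ for $\tau\ge 11$. Because the claimed bound coincides with the limiting value at $(\tau,s) = (\infty, 2)$, the inequalities will be tight in the limit, so no slack can be thrown away and the estimates must be carried out carefully; otherwise the calculations are analogous to the comparisons at the end of the absolute-error lemma.
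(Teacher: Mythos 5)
Your plan reproduces the paper's proof step for step: split on $s\le\tau$ versus $s>\tau$, write $\delta=\Delta/S$ in each regime, maximise $\delta_{\le\tau}$ at the boundary $s=2$ and $\delta_{>\tau}$ at an interior critical point $s^*(\tau)$, then show the first case gives $\delta_{\le\tau}(2)=\frac{\tau-2}{6\tau-4}$, increasing in $\tau$ with supremum $1/6$, while the second case never reaches $1/6$. The tidy common-denominator form $\Delta_{\le\tau}=\frac{\tau-s}{s(s-1)(\tau-1)}$ is a useful preliminary observation, but the route is the same, and the monotonicity and dominance checks you defer to ``algebraic verification'' are exactly the ones the paper also asserts without spelling out.
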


\begin{proof}
Again the analysis is split into two separate cases: $s \le \tau$ and $s > \tau$. In the first case, the relative error is
\begin{equation*}
    \delta_{\le\tau} = \frac{s - \tau}{(s^2 + 3s - 4)(1 - \tau)+2s+2}.
\end{equation*}
In the second case, it is
\begin{equation*}
    \delta_{>\tau} = \frac{2(s-\tau)}{(s^2+3s-2)(\tau-1)}.
\end{equation*}

Once again, $\delta_{\le\tau}$ is decreasing on the interval $[2, \tau]$ and so is maximized when $s = 2$. The function $\delta_{>\tau}$ is maximized at the point $s = \tau + \sqrt{\tau^2 + 3\tau - 2}$. We see that
\begin{equation*}
    \delta_{\le\tau}(2) = \frac{\tau - 2}{6\tau - 4}.
\end{equation*}
It can be shown that $\delta_{\le\tau}(2)$ is increasing as a function of $\tau$ and that the limit as $\tau$ approaches infinity is $1/6$.
It can also be shown that $\delta_{>\tau} < 1/6$ for any values of $s$ and $\tau$, therefore the maximum relative error is $1/6$.
\end{proof}

\section{Conclusion}
\label{ch:conclusion}

The main contribution of this article is a competitive local routing algorithm for heavy path WSPD spanners that succeeds for points in a metric space of bounded doubling dimension, with routing ratio $1 + O(1/s)$.

Given a WSPD with separation ratio $s > 2$ for a set of $n$ points, we showed how to construct a spanner with spanning ratio $1 + 2/s + 2/(s - 1)$ and hop spanning ratio $2\lg n + 1$, which we called a heavy path WSPD spanner. In Euclidean space the construction takes $O(d(n \log n + s^dn))$ time and in a metric space with doubling dimension $\lambda$ the construction takes $O(2^\lambda n\log n + s^\lambda n)$ expected time. The number of edges in the spanner is $O(s^dn)$ in $\R^d$, or $O(s^\lambda n)$ in a doubling space.

We then presented a local memoryless routing algorithm for heavy path WSPD spanners. We showed that, if the spanner was constructed from a compressed quadtree in Euclidean space, the routing ratio of the algorithm is at most $1 + 4/s + 1/(s - 1)$. We also provided a worst-case lower bound of $1 + 4/s$. The number of edges on the path generated by the routing algorithm is at most $2\lg n + 1$.

We also analyzed the routing ratio for point sets in metric spaces of bounded doubling dimension. If the WSPD was constructed using the net tree of Har-Peled and Mendel, then the routing ratio is at most $1 + (2 + \tau/(\tau - 1))/s + 1/(s - 1)$, where $\tau \ge 11$ is constant. The bound on the number of edges on the path remains $2\lg n + 1$. Notice that the analysis of the routing ratio depends on the properties of the metric space and the tree used to construct the WSPD. Finally we gave a better bound on the spanning ratio of a heavy path WSPD spanner in a doubling space in the special case where $s \le \tau$.

The spanning ratio of a heavy path WSPD spanner and the different routing ratios are all $1 + O(1/s)$, but the constants differ. Ideally the routing ratio would be the same as the spanning ratio, but this may not be possible. We analyzed the difference between the bounds in an effort to quantify how far from the ideal we are.

\subsection{Future work}
\label{sec:future-work}

Given that the routing algorithm generalized so naturally to doubling spaces, it is natural to wonder if it could generalize to WSPD spanners constructed in other types of metric spaces. Is there a more general class of metric spaces that supports competitive local routing, using the heavy path WSPD spanner?

Another avenue to explore is considering other spanner constructions. The spanner construction that we presented is relatively simple. There are more elaborate constructions that can produce spanners with desirable properties. For example, there are constructions of WSPD spanners with bounded degree. If we could apply our algorithm to these spanners then the size of the routing tables at each node would be $O(\log n)$.

There is also a construction of a WSPD spanner with constant hop spanning ratio. We showed that a heavy path WSPD spanner has a hop spanning ratio of at most $2\lg n + 1$. In some applications, the number of edges on the routing path might be more important than the total length of the path. For example, if the time for a message to be transmitted is much smaller than the time needed to make a forwarding decision. In this case routing on spanners with a small hop spanning ratio would be desirable. For details on the various constructions of spanners from WSPDs, see Narasimhan and Smid \cite{Narasimhan2007}.

Finally, WSPDs have been defined for the unit disk graph \cite{Gao2006}. Using these WSPDs, local routing in the unit disk graph is possible \cite{Kaplan2018}. Our algorithm routes on spanners constructed directly from a WSPD, so the result does not immediately transfer. However, we do not need to use a modifiable header, and it would be interesting to see if our algorithm could be modified to work on the unit disk graph.

\printbibliography

\end{document}